\definecolor{Green}{RGB}{0,144,0}
\newcommand{\AC}{\ensuremath{\textup{\textsc{AC}}^0}}
\newcommand{\PS}{\ensuremath{\textup{\textsc{PSpace}}}}
\newcommand{\EXPS}{\ensuremath{\textup{\textsc{ExpSpace}}}}
\newcommand{\MTL}{\ensuremath{\textup{DatalogMTL}}}
\newcommand{\MTLneg}{\ensuremath{\textup{DatalogMTL}^{\mkern-2mu \neg}}}
\newcommand{\MTLfp}{\ensuremath{\textup{DatalogMTL}^\neg_{\mathsf{FP}}}}
\newcommand{\matA}{M}
\newcommand{\xbf}{\mathbf{x}}
\newcommand{\cbf}{\mathbf{c}}
\newcommand{\sbf}{\mathbf{s}}
\DeclareFontFamily{U}{MnSymbolC}{}
\DeclareSymbolFont{MnSyC}{U}{MnSymbolC}{m}{n}
\DeclareMathSymbol{\diamondminus}{\mathbin}{MnSyC}{120}
\DeclareMathSymbol{\diamondplus}{\mathbin}{MnSyC}{"7C}
\DeclareMathSymbol{\meddiamond}{\mathbin}{MnSyC}{110}
\DeclareFontShape{U}{MnSymbolC}{m}{n}{
    <-5>  MnSymbolC4
    <5-6>  MnSymbolC5
   <6-7>  MnSymbolC6
   <7-8>  MnSymbolC7
   <8-9>  MnSymbolC8
   <9-10> MnSymbolC9
  <10-12> MnSymbolC10
  <12->   MnSymbolC12}{}
\newcommand{\So}{\mathcal{S}}
\newcommand{\Uo}{\mathcal{U}}
\newcommand{\M}{\mathcal{M}}
\newcommand{\I}{\mathfrak{I}}
\newcommand{\D}{\mathcal{D}}
\newcommand{\PD}{(\Pi,\D)}
\newcommand{\A}{\mathcal{A}}
\newcommand{\B}{\mathcal{B}}
\newcommand{\C}{\mathcal{C}}
\newcommand{\F}{\mathcal{F}}
\newcommand{\Prog}{\Pi}
\newcommand{\TM}{\mathfrak{M}}
\newcommand{\word}{w}
\newcommand{\wordR}{\ensuremath{\overrightarrow{w}}}
\newcommand{\wordL}{\ensuremath{\overleftarrow{w}}}
\newcommand{\Left}{\mathsf{L}}
\newcommand{\Right}{\mathsf{R}}
\newcommand{\states}{\mathcal{Q}}
\newcommand{\qinit}{q_{\textsl{init}}}
\newcommand{\qhalt}{q_{\textsl{halt}}}
\newcommand{\Q}{\mathbb{Q}}
\newcommand{\Z}{\mathbb{Z}}
\newcommand{\T}{\mathbb{T}}
\newcommand{\mat}{\mathsf{at}}
\newcommand{\window}{\mathcal{W}}
\newcommand{\ground}[2]{\mathsf{ground}(#1,#2)}
\newcommand{\groundp}[1]{\mathsf{ground}(#1)}
\newcommand\rxspace{\!\mathop{}\nolimits}
\newcommand{\nott}{\mathsf{not}\rxspace}
\newcommand{\Hmod}{\mathfrak{H}}
\newcommand{\Tmod}{\mathfrak{T}}
\newcommand{\acc}{F}
\newcommand\bcmdtab{\noindent\bgroup\tabcolsep=0pt%
  \begin{tabular}{@{}p{10pc}@{}p{20pc}@{}}}
\newcommand\ecmdtab{\end{tabular}\egroup}
  \title[DatalogMTL with Negation under Stable Model Semantics]
        {The Stable Model Semantics of Datalog with Metric Temporal Operators\footnote{This is an invited submission resulting
        from an earlier KR conference publication.}}
  \author[P. A. Wałęga et al.]
         {PRZEMYSŁAW A. WAŁĘGA, DAVID J. TENA CUCALA, 
         BERNARDO CUENCA GRAU\\
         Department of Computer Science, University of Oxford, UK\\
         \email{\{przemyslaw.walega, david.tena.cucala,bernardo.cuenca.grau\}@cs.ox.ac.uk}
         \and EGOR V. KOSTYLEV\\
         Department of Informatics, University of Oslo, Norway\\
         \email{egork@ifi.uio.no}}
\newtheorem{theorem}{Theorem}[section]
\newtheorem{lemma}[theorem]{Lemma}
\newtheorem{proposition}[theorem]{Proposition}
\newtheorem{claim}[theorem]{Claim}
\newtheorem{example}[theorem]{Example}
\newtheorem{definition}[theorem]{Definition}
\begin{document}

\label{firstpage}
 
\maketitle

\begin{abstract}
We introduce negation under the stable model semantics in DatalogMTL---a  temporal extension of Datalog with metric temporal operators. As a result, we obtain a rule language which combines the power of answer set programming with the temporal dimension provided by metric operators. We show that, in this setting, reasoning becomes undecidable over the rational timeline, and decidable in \EXPS{} in data complexity over the integer timeline. We  also show that, if we restrict our attention to  forward-propagating programs, reasoning over the integer timeline becomes \PS{}-complete in data complexity, and hence, no harder than over positive  programs; however, reasoning over the rational timeline in this fragment remains undecidable.
Under consideration in Theory and Practice of Logic Programming (TPLP).
\end{abstract}

\begin{keywords}
temporal reasoning, metric temporal logic, stable model semantics,  non-monotonic negation
\end{keywords}

\section{Introduction}

\MTL{} \cite{datalogMTL} extends positive Datalog \cite{DBLP:books/aw/AbiteboulHV95} with operators
from metric temporal logic (MTL) \cite{koymans1990specifying} interpreted
 over the rational or the integer timeline.
 For example, the following \MTL{} rule can be used
 to state that a bus driver should not work for more than six months (i.e., half a year)
 in a row:
  $$ \textit{OnLeave}(x) \gets \textit{BusDriver}(x) \wedge \boxminus_{[0,0.5]} \textit{Working}(x),$$
where the expression $\boxminus_{[0,0.5]} \textit{Working}(x)$ holds at time $t$ if $\textit{Working}(x)$ 
 holds continuously in the time interval $[t-0.5, t]$. 
 Some other examples of expressions allowed in \MTL{} are $\diamondminus_{[t_1,t_2]} \varphi$, 
which holds at time $t$ if $\varphi$ holds at some instant within the time interval $[t-t_2, t-t_1]$,
and $\boxplus_{[t_1,t_2]} \varphi$, which uses the ``future-oriented'' version of the operator $\boxminus$
and holds at time $t$ if $\varphi$ holds continuously in the time interval $[t+t_1, t+t_2]$.
 A \MTL{} dataset consists of facts involving intervals, such as $Working(\mathit{alex})@[2022,2023]$, stating that Alex was working continuously in the time interval $[2022,2023]$.
 \MTL{} is thus a very expressive language which allows a user to capture complex definitions, regulations, or events 
 involving temporal intervals.
\MTL{} is powerful enough to capture prominent temporal extensions of Datalog  such as $\mbox{Datalog}_{1S}$   
\cite{chomicki1988temporal,chomicki1989relational} 
and Templog~\cite{abadi1989temporal}, and it 
has found applications in areas such as ontology-based data access~\cite{datalogMTL},
stream reasoning \cite{WalegaAAAI},
and
similar ideas were explored in logic programming \cite{brzoska1998programming}.
Reasoning in \MTL{} is known to be \textsc{PSpace}-complete in data complexity over both
the rational \cite{DBLP:conf/ijcai/WalegaGKK19} and the integer timeline \cite{walega2020datalogmtl}.  
  
Motivated by a range of applications,
there has recently been a growing interest in 
logics that combine non-monotonic negation with
temporal constructs  
\cite{DBLP:conf/eurocast/CabalarV07,aguado2013temporal,DBLP:journals/tplp/CabalarKSS18,cabalar2020towards,DBLP:journals/ai/BeckDE18,zaniolo2012streamlog}.
Recently, we have  proposed
an extension of \MTL{} over the rationals with
stratified negation-as-failure, where rules can have negated atoms in the body,
but there can be no recursion involving predicates mentioned in such atoms.
With such an extension of \MTL{}, we can express, for example, a bus company's
policy that any vehicle older than 12 years must be decommissioned permanently
unless it has passed a special inspection in the last year:
 $$ \boxplus_{[0,\infty)} \textit{Decommis}(x) \gets  \diamondminus_{(12, \infty)}\textit{Manufactured}(x) \wedge \nott \diamondminus_{[0,1]} \textit{PassInspect}(x).$$
We also showed that
the additional expressive power provided by this type of negation does not
increase the complexity of reasoning regardless of whether we consider
the rational or the integer timeline
\cite{tena2021stratified}. The restriction to stratifiable programs, however,
 significantly limits the applicability \MTL{} to certain types of use cases.

In this paper we take a further step in this direction and
consider \MTL{} equipped with non-stratifiable  
negation interpreted under the stable model semantics \cite{DBLP:conf/iclp/GelfondL88,DBLP:journals/jar/BrooksEEMR07,DBLP:conf/asp/NogueiraBGWB01}.
This extension paves the way for 
the use of \MTL{} in 
applications where 
 derived information can be retracted in light of new evidence,
minimality of models is required,
or 
temporal inertia rules need to be formalised.
For instance, consider a bus company with the policy
that vehicles that have been serviced at a given time $t$
are automatically booked for a routine service in a year's time 
(i.e., at time $t+1$, represented by metric operator $\boxplus_{[1,1]}$), but this appointment 
must be cancelled if the bus is serviced again before then---that is, within the interval $(t + 0, t+1)$, represented by $\diamondplus_{(0,1)}$.
This policy can be written using the rule
$$
\boxplus_{[1,1]} \textit{Service}(x) \gets \textit{Service}(x) \land\, \nott \diamondplus_{(0,1)} \textit{Service}(x), 
$$
which is not stratifiable as it involves recursion via negation. 

Our setting is closely related to the recent
research on combining answer set programming (ASP)
with temporal logics.
For example, TEL \cite{DBLP:conf/eurocast/CabalarV07,aguado2013temporal,DBLP:journals/tplp/CabalarKSS18}
combines ASP with linear temporal logic, and LARS
combines ASP with window-based temporal constructs for
stream reasoning  \cite{DBLP:journals/ai/BeckDE18}.
The logic recently proposed by
\citeN{cabalar2020towards} is perhaps the closest
to our work, as it combines
stable model semantics with propositional
MTL interpreted over the natural numbers;
this logic, however,
is rather different from 
\MTL{}, where the use of logical connectives and
MTL operators is
restricted in the spirit of Datalog to disallow
disjunction and `existential' MTL operators (such as \emph{diamond},
\emph{since}, or \emph{until} operators)  in rule heads.
As we show in our paper, considering 
a language with such restrictions can lead to  favourable computational behaviour.

Our contributions in this paper are summarised as follows. 
In \Cref{negation} we present our extension (\MTLneg{}) of
\MTL{} with negation under stable model semantics.
Our language is defined  similarly to other temporal ASP formalisms; 
it extends both \MTL{}
with stratified negation and Datalog with stable model
negation.
To capture the semantics of stable models, we use interpretations 
similar to those of the \emph{here-and-there} intuitionistic  logic~\cite{heyting1930formalen,pearce1996new}. 
The main reasoning problem we consider is the existence of a stable model for a program and a dataset.
We show in  \Cref{sec::undecidability} that, in this setting, reasoning over the rational timeline is undecidable;
furthermore, undecidability holds even for propositional
forward-propagating programs (where rules cannot propagate
information towards past time points) and to data containing only bounded intervals (i.e., where
 endpoints of all intervals are rational numbers).
To regain decidability,  in \Cref{sec:integers}  we focus  on the
integer timeline.
We show in \Cref{general} that discreteness of the timeline has 
a crucial influence on the  computational behaviour, as 
reasoning becomes  decidable 
and in \EXPS{} in data complexity; this is shown by exploiting 
B\"uchi automata and their complements to find candidate stable models and verify their minimality.
Then, in \Cref{forward} we show that,  when restricted to 
forward-propagating (or, dually, to backwards-propagating)
 programs  and bounded datasets, reasoning becomes
 \PS-complete and hence no harder than for  negation-free \MTL{} \cite{DBLP:conf/ijcai/WalegaGKK19,walega2020datalogmtl}.
This is   in stark contrast with the
 undecidability of the same fragment over the rational numbers.

\section{Preliminaries}\label{preliminaries}

In this section we recapitulate the basics of temporal intervals over the integers or the rationals, and 
introduce the syntax and semantics of metric temporal operators.

\subsection{Timelines and Temporal Intervals}

A \emph{timeline} 
$\T$ is either the set $\Q$ of rationals or the set $\Z$ of integers. 
A $\T$-\emph{time point} is an element of $\T$.
A \mbox{$\T$-\emph{interval}} 
$\varrho$ is a subset of $\T$ satisfying both of the following properties:
\begin{itemize}
\item[--] $t \in\varrho$ for all $t_1,t_2 \in \varrho$ and $t \in \T$ such that 
$t_1<t<t_2$; and
\item[--]  the greatest lower bound $\varrho^-$ and the least upper bound $\varrho^+$ of $\varrho$ are both in $\T\, \cup \{-\infty ,\infty\}$.
\end{itemize}
The bounds $\varrho^-$ and $\varrho^+$ are called the \emph{left} and \emph{right endpoints} of $\varrho$, respectively.
A $\T$-interval is  \emph{punctual} if it contains exactly one $\T$-time point, it is
\emph{non-negative} if it contains no negative $\T$-time points, it is
\emph{bounded} if  both its endpoints are in  $\T$, and it is \emph{closed}
if it includes both of its endpoints.
In these and similar notions, we
often omit the reference to $\T$ if it is
clear from the context.  
We consider binary representations of integers and
fractional representations of rationals, with an integer 
numerator and a positive integer denominator, encoded in binary.
We use standard representations of the form 
$\langle \varrho^-,\varrho^+ \rangle$ for a non-empty 
interval
$\varrho$ (i.e., $\varrho \cap \T \neq \emptyset$),  
where the \emph{left bracket}
$\langle$  is either  $[$ or $($, the \emph{right bracket} $\rangle$ is
either $]$ or $)$, and, if numeric, the endpoints $\varrho^-$ and $\varrho^{+}$ are represented as explained above.
Brackets $[$  and $]$ indicate that the
 endpoints are included in the interval, whereas $($ and 
$)$ indicate that they are not included; note that, by this convention, $[$ and $]$
cannot be used with endpoints $-\infty$ and $\infty$.
We often abbreviate a punctual interval $[t,t]$ as $t$.
If it is clear from the context, we abuse
notation and identify each interval representation with the 
interval it represents.

\subsection{Syntax and Semantics of Metric Temporal Expressions}

Assume a function-free first-order vocabulary and a
timeline $\T$.
A \emph{relational atom} is an expression of the form $P(\sbf)$, 
where 
$P$ is a predicate  and $\sbf$ is a tuple of constants and variables
of the same arity as $P$.
A  \emph{metric atom} is an expression given by the following
grammar, where $P(\sbf)$ ranges over relational atoms and $\varrho$ over non-empty, non-negative intervals:
\begin{equation*}
	M  \Coloneqq  \top \mid \bot \mid 
	P(\sbf) \mid 
	\diamondminus_{ \varrho} M \mid 
	\diamondplus_{\varrho} M \mid  
	\boxminus_{\varrho}M \mid 
	\boxplus_{\varrho} M \mid  
	M \So_{\varrho} M \mid  
	M \Uo_\varrho M.
\end{equation*}
A metric atom is \emph{ground} if it mentions no variables.
A \mbox{\emph{metric fact}} is an expression $\matA @ \varrho$, with
$\matA$ a ground \mbox{metric} atom 
and $\varrho$ a non-empty $\T$-interval; it is \emph{relational} if so is $\matA$.
A~\emph{dataset} is a finite set of relational facts;
 it is \emph{bounded} if so are all intervals it mentions.
For a dataset $\D$, we denote with
$t^{\min}_{\D}$ and $t^{\max}_{\D}$ 
the smallest and the largest numbers mentioned in $\D$; if $\D$ mentions 
no numbers, we let
$t^{\min}_{\D}=t^{\max}_{\D}=0$.

An \emph{interpretation} $\I$ is a function which assigns to each time point $t \in \T$
a set of ground relational atoms;
if  an atom $P(\cbf)$ belongs to this set,  we say that $P(\cbf)$  is \emph{satisfied} at $t$ in $\I$ and we write $\I,t \models_\T P(\cbf)$.
This notion extends to other ground metric atoms 
as given in \Cref{tab::semantics}.
Interpretation $\I$ is a \emph{model} of a metric fact $M @ \varrho$, 
written $\I \models_\T M @ \varrho$, if $\I,t \models M$ for all
$t \in \varrho$; 
it is a  \emph{model} of a set $\M$ of metric facts (e.g., a dataset)
if it is a model of all facts in $\M$. 
Set $\M$  \emph{entails} a set $\M'$ of metric facts, written $\M \models \M'$, if every model of $\M$ is a model of $\M'$.
An interpretation $\I$ \emph{contains} an
interpretation~$\I'$, written  $\I' \subseteq \I$, if for 
each ground relational atom
$P(\cbf)$ and each time point $t \in \T$, having $\I',t \models_\T P(\cbf)$ implies $\I,t \models_\T P(\cbf)$. 
Furthermore, $\I$ is the \emph{least} interpretation in a set $X$ of interpretations, if $\I \subseteq \I'$ for every $\I' \in X$.

\begin{table}[t]
	\begin{alignat*}{3}
		&\I,{ t} \models_\T  { \top}    && && \text{for each } t \in \T
		\\
		&\I,{ t}  \models_\T { \bot}   && && \text{for no } t \in \T
		\\
		&\I,{ t}  \models_\T  {\diamondminus_\varrho} M    && \text{iff}   && \I,{t'} \models_\T  { M} \text{ for some } t' \text{ with } t -t' \in \varrho
		\\
		&\I,{  t}  \models_\T { \diamondplus_\varrho}  M  && \text{iff} &&  \I,{t'} \models_\T  { M} \text{ for some } t'  \text{ with } t' - t \in \varrho
		\\
		&\I,{ t}  \models_\T  {\boxminus_\varrho}  M  && \text{iff} && \I,{ t'} \models_\T  { M} \text{ for all } t' \text{ with } t - t' \in \varrho 
		\\
		&\I,{ t}  \models_\T  {\boxplus_\varrho}  M   && \text{iff} && \I,{ t'}\models_\T  { M} \text{ for all } t'  \text{ with } t' - t \in \varrho 
		\\
		&\I,{ t}  \models_\T  { M_1}  \So_\varrho M_2 \quad   &&  \text{iff} \quad  &&  \I,{ t'} \models_\T  { M_2} \text{ for some } t'  \text{ with } t - t' \in \varrho
		\text{ and }  \I,{t''} \models_\T  M_1 \text{ for all } t'' \in (t',t) 
		\\
		&\I,{ t}  \models_\T  { M_1}  \Uo_\varrho M_2   &&  \text{iff}  && \I,{ t'} \models_\T  { M_2} \text{ for some } t'  \text{ with } t' - t \in \varrho  
		\text{ and }  \I,{ t''} \models_\T  M_1 \text{ for all } t'' \in (t,t') 
	\end{alignat*}
	\caption{Semantics of ground metric atoms}
	\label{tab::semantics}
\end{table}

\section{DatalogMTL with Negation Under Stable Model Semantics}\label{negation}

In this section we propose \MTLneg, which
extends \MTL{} with
stratified negation as defined by  \citeN{tena2021stratified},
to support unstratified
use of negation in rules interpreted
under stable model semantics. 

The syntax of \MTLneg{}  is the natural extension
of the positive case:
rule bodies are conjunctions of atoms and
negated atoms, whereas rule heads are  defined
as in  \MTL{}. 
Forward-propagating \MTLneg{}  
 programs are also defined analogously to the positive case
 \cite{WalegaAAAI}, by requiring that rule  bodies and heads
 do not mention metric operators
referring to the future and to the past, respectively.
  
\begin{definition} \label{rule}
A \emph{rule} $r$ is an expression of the form
\begin{equation}
M \gets M_1 \wedge \dots \wedge M_k \wedge \nott M_{k+1} \wedge \dots \wedge \nott  M_{m} , \qquad (m \geq k \geq 0)
\label{eq:ruleneg}
\end{equation}
where each $M_i$ is a metric atom, and $M$ is a metric atom
specified by the following grammar, where $P(\sbf)$ ranges over relational atoms and $\varrho$ over non-empty  non-negative intervals:
\begin{equation*} 
M \Coloneqq  
\top \mid \bot \mid P(\sbf) \mid 
\boxminus_\varrho M \mid 
\boxplus_\varrho M . 
\end{equation*} 
The \emph{head} of $r$ is the consequent $M$ and the \emph{body}
is the conjunction in the antecedent, where 
${M_1, \ldots, M_k}$ are its \emph{positive} body atoms, and ${M_{k+1}, \ldots, M_{m}}$ are its \emph{negated} body atoms.
A rule is \emph{safe} if each variable it mentions in the head occurs in some positive body atom in a position other than
a left operand of $\So$ or $\Uo$.
A rule is \emph{ground} if it has no variables, and it is \emph{positive} if it has no negated body atoms. 
A  $(\MTLneg{})$ \emph{program} is a finite set  of safe  rules; it is  \emph{ground} or \emph{positive} if all its rules are. For a program $\Prog$, we let $\mathsf{ground}({\Prog})$ 
be the set of all ground rules that can be obtained by replacing variables in $\Prog$ with constants.
A program is  \emph{forward-propagating} (\MTLfp{}) if it is \MTLneg{} but does not mention the operators $\diamondplus$, $\boxplus$, and $\Uo$  in rule bodies, or the 
operator $\boxminus$ is rule heads.
\end{definition}

The definition of stable models for  Datalog with negation relies
on  the reduct construction by \citeN{DBLP:conf/iclp/GelfondL88}, 
which has been adapted to various extensions of ASP 
 \cite{faber2004recursive}.
Such reduct constructions, however,
do not have a natural 
equivalent in \MTLneg, where
interpretations may satisfy a fact  at some, but not all points of 
the infinite timeline, and it is thus
unclear which rules or atoms should be included in the reduct.

Following the approach of
\citeN{DBLP:conf/eurocast/CabalarV07} and \citeN{cabalar2020towards},
we define stable models
for \MTLneg~analogously to the models of  
\emph{equilibrium logic} \cite{pearce1996new}, which
in turn are defined in terms
of interpretations for the \emph{here-and-there} intuitionistic logic~\cite{heyting1930formalen}.
In this logic, each interpretation is an ordered pair $(H,T)$ of sets 
$H$ (``here'') and $T$ (``there'') of relational propositional (i.e.\ using only predicates of arity $0$) atoms such that $H \subseteq T$.
We therefore start by generalising such interpretations
to the context of \MTLneg.
For the remainder of this section, we fix a timeline $\T$,
which will be implicit in all our definitions and technical results.

\begin{definition} \label{def::HT}
An \emph{HT-interpretation} is  a pair $( \Hmod, \Tmod )$ of \mbox{interpretations} such that $\Hmod \subseteq \Tmod$. 
An HT-interpretation  $( \Hmod, \Tmod )$ is an
\emph{HT-model} of a dataset $\D$ if $\Hmod$ is a model of $\D$; furthermore,
it is an \emph{HT-model} of a
rule~$r$ if,
for each rule of  Form~\eqref{eq:ruleneg} in $\groundp{\Prog}$ and for each $t\in\T$,
 the following  hold:
\begin{enumerate}[leftmargin=.3in]
\item If  $\Hmod ,t \models_\T M_i$ for all $i \in \{1, \ldots, k\}$ and ${\Tmod ,t \not\models_\T  M_j}$ for all $j \in \{k+1, \ldots, m\}$, then ${\Hmod ,t \models_\T  M}$. 

\item If  $\Tmod ,t \models_\T  M_i$ for all $i \in \{1,\ldots, k\}$ and ${\Tmod ,t \not\models_\T M_j}$ for all $j \in \{k+1, \ldots, m\}$, then ${\Tmod ,t \models_\T  M}$.
\end{enumerate}
Finally, $( \Hmod, \Tmod )$  is an  \emph{HT-model} of a program if it is an HT-model of all its rules.
\end{definition}
An HT-interpretation is, therefore, a pair  $( \Hmod, \Tmod )$ of 
standard interpretations. Interpretation $\Hmod$
is  contained
in $\Tmod$ and determines whether a dataset is satisfied.
Although both interpretations are used to evaluate rules, it is 
$\Tmod$
which evaluates negated body atoms and thus determines
when a rule with negation can be `triggered'. 
When this happens, the rule ensures that
if any of $\Hmod$ or $\Tmod$ satisfies the positive body atoms, then
it will also satisfy the head. 
Since $\Hmod \subseteq \Tmod$ in an HT-interpretation, all relational facts entailed by $\Hmod$ are also entailed by $\Tmod$.
We show next that this property can be generalised to arbitrary metric facts.

\begin{proposition} \label{prop::HT}
For every HT-interpretation  $( \Hmod, \Tmod )$,  metric atom $M$, and time point $t$, 	
if $\Hmod ,t \models  M$ then $\Tmod ,t \models M$.
\end{proposition}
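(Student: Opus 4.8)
The plan is to prove the statement by structural induction on the metric atom $M$, following the grammar that defines metric atoms. The base cases are $\top$, $\bot$, and relational atoms $P(\cbf)$. For $\top$ the claim is trivial since $\Tmod,t\models\top$ always holds; for $\bot$ it is vacuous since $\Hmod,t\models\bot$ never holds; and for a relational atom $P(\cbf)$ it is precisely the definition of $\Hmod\subseteq\Tmod$ given in the preliminaries (containment of interpretations).

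For the inductive step I would assume the claim holds for the immediate subatoms of $M$ and treat each temporal operator in turn. Consider $M=\diamondminus_\varrho M'$: if $\Hmod,t\models\diamondminus_\varrho M'$, then by \Cref{tab::semantics} there is a $t'$ with $t-t'\in\varrho$ and $\Hmod,t'\models M'$; the induction hypothesis applied at $t'$ gives $\Tmod,t'\models M'$, so $\Tmod,t\models\diamondminus_\varrho M'$ by the same semantic clause. The argument for $\diamondplus_\varrho$ is symmetric. For $M=\boxminus_\varrho M'$: if $\Hmod,t\models\boxminus_\varrho M'$, then $\Hmod,t'\models M'$ for every $t'$ with $t-t'\in\varrho$, hence by the induction hypothesis $\Tmod,t'\models M'$ for all such $t'$, which is exactly $\Tmod,t\models\boxminus_\varrho M'$; again $\boxplus_\varrho$ is symmetric. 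For the binary operators, take $M=M_1\So_\varrho M_2$: from $\Hmod,t\models M_1\So_\varrho M_2$ we get a witness $t'$ with $t-t'\in\varrho$, $\Hmod,t'\models M_2$, and $\Hmod,t''\models M_1$ for all $t''\in(t',t)$; applying the induction hypothesis to $M_2$ at $t'$ and to $M_1$ at each $t''$ yields the same witness configuration for $\Tmod$, so $\Tmod,t\models M_1\So_\varrho M_2$. The case $M=M_1\Uo_\varrho M_2$ is the mirror image.

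I do not expect any genuine obstacle here: every temporal operator is monotone in the positive occurrences of its arguments, and the semantics in \Cref{tab::semantics} involves no negation, so the induction goes through uniformly. The only mild care needed is bookkeeping—making sure to apply the induction hypothesis at the correct time points (the witness $t'$ for the ``diamond''/``since''/``until'' cases, and every relevant $t'$ for the ``box'' cases), and observing that both $\So$ and $\Uo$ are positive in both of their arguments so the hypothesis is available for $M_1$ and $M_2$ alike. This monotonicity is precisely why the restriction of negation to the top level of rule bodies (rather than inside metric atoms) is what makes the HT-framework behave well, and the proposition records exactly that fact for later use.
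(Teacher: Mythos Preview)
Your proposal is correct and follows essentially the same structural induction as the paper's proof. The only cosmetic difference is that the paper reduces the inductive step to the two cases $\boxminus_\varrho M_1$ and $M_1\So_\varrho M_2$ (observing that $\diamondminus_\varrho M'\equiv\top\So_\varrho M'$, $\diamondplus_\varrho M'\equiv\top\Uo_\varrho M'$, and that the future operators are symmetric), whereas you spell out each operator explicitly.
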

\begin{proof}
We proceed by induction on the structure of $M$.
If $M$ is $\top$ or $\bot$, the claim holds trivially, and if
$M$ is a relational atom, then the claim holds by $\Hmod \subseteq \Tmod$.
For the inductive step it suffices to consider the cases when $M$ 
 is of the form $\boxminus_{\varrho} M_1$ or $M_1
\So_{\varrho} \matA_2$, for some interval $\varrho$ and metric atoms $M_1$ and $M_2$.
Indeed, if $\matA$ is of the form $\diamondminus_{\varrho} M_1$ or $\diamondplus_{\varrho} M_1$, then it is equivalent to $\top
\So_{\varrho} M_1$ or $\top \Uo_{\varrho}
M_1$, respectively, while the cases when $M$  is of the form $\boxplus_{\varrho} M_1$ or $M_1 \Uo_{\varrho} M_2$ are symmetric to the cases of 
$\boxminus_{\varrho} M_1$ or  $M_1
\So_{\varrho} M_2$, respectively.

If $M$ is $\boxminus_{\varrho } M_1$, then $\Hmod, t \models M$ implies that  $\Hmod, t' \models M_1$, for all $t'$ such that $t-t'\in\varrho$.
Hence, by the inductive hypothesis, $\Tmod, t' \models M_1$  for all $t'$ such that $t-t'\in\varrho$, and so, $\Tmod, t \models \boxminus_{\varrho } M_1$.
Similarly, if $M$ is $M_1
\So_{\varrho} M_2$, then 
there is $t'$ with $t-t'\in\varrho$ such that  $\Hmod, t' \models M_2$ and $\Hmod, t'' \models M_1$, for all $t'' \in (t',t)$.
By the inductive hypothesis we obtain that  $\Tmod, t' \models M_2$
and $\Tmod, t'' \models M_1$, for all $t'' \in (t',t)$, so $\Tmod, t \models M_1
\So_{\varrho} M_2$.
\end{proof}

Although the converse statement does not always hold,
we can nonetheless prove the following result, which will underpin
 our definition of stable models.

\begin{theorem} \label{thm::least}
Let $(\Tmod,\Tmod)$ be an HT-model of a program $\Prog$ and a dataset $\D$.
Then the set of interpretations $\{  \Hmod \mid (\Hmod,\Tmod) \text{ is an HT-model of  $\Prog$ and $\D$} \}$ contains a unique least interpretation.
\end{theorem}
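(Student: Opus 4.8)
The plan is to exhibit the least element explicitly as the pointwise intersection of the whole family. Write $\mathcal{H}$ for the set $\{ \Hmod \mid (\Hmod,\Tmod) \text{ is an HT-model of } \Prog \text{ and } \D \}$, and let $\Hmod^{\ast}$ be the interpretation sending each $t \in \T$ to $\bigcap_{\Hmod \in \mathcal{H}} \Hmod(t)$. By construction $\Hmod^{\ast} \subseteq \Hmod$ for every $\Hmod \in \mathcal{H}$, and $\Hmod^{\ast} \subseteq \Tmod$ because each $\Hmod \in \mathcal{H}$ satisfies $\Hmod \subseteq \Tmod$; moreover $\mathcal{H} \neq \emptyset$ since $(\Tmod,\Tmod) \in \mathcal{H}$ by assumption, a fact that will be needed below. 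So it suffices to show that $(\Hmod^{\ast},\Tmod)$ is itself an HT-model of $\Prog$ and $\D$: it will then be the $\subseteq$-least member of $\mathcal{H}$, and it is automatically the unique such element since $\subseteq$ is a partial order on interpretations.

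I would first dispatch the easy parts. $\Hmod^{\ast}$ is a model of $\D$: for each fact $P(\cbf) @ \varrho \in \D$ and each $t \in \varrho$ we have $P(\cbf) \in \Hmod(t)$ for all $\Hmod \in \mathcal{H}$, hence $P(\cbf) \in \Hmod^{\ast}(t)$. And Condition~2 of \Cref{def::HT} for $(\Hmod^{\ast},\Tmod)$ and any ground rule of Form~\eqref{eq:ruleneg} mentions only $\Tmod$, so it holds because $(\Tmod,\Tmod)$ is an HT-model of $\Prog$.

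The substantive part is Condition~1. Fix a ground rule $M \gets M_1 \wedge \dots \wedge M_k \wedge \nott M_{k+1} \wedge \dots \wedge \nott M_m$ of $\groundp{\Prog}$ and a $t \in \T$ with $\Hmod^{\ast},t \models M_i$ for all $i \le k$ and $\Tmod,t \not\models M_j$ for all $j > k$; the goal is $\Hmod^{\ast},t \models M$. For each $\Hmod \in \mathcal{H}$ the pair $(\Hmod^{\ast},\Hmod)$ is an HT-interpretation (as $\Hmod^{\ast} \subseteq \Hmod$), so \Cref{prop::HT}, applied to $(\Hmod^{\ast},\Hmod)$, yields $\Hmod,t \models M_i$ for all $i \le k$; combined with $\Tmod,t \not\models M_j$ for all $j>k$ and Condition~1 for the HT-model $(\Hmod,\Tmod)$, this gives $\Hmod,t \models M$. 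Thus $\Hmod,t \models M$ for every $\Hmod \in \mathcal{H}$, and what remains is to bring this down to $\Hmod^{\ast}$. This is where the shape of rule heads matters: by \Cref{rule}, $M$ is built from $\top$, $\bot$ and relational atoms using only $\boxminus$ and $\boxplus$, and for such metric atoms satisfaction commutes with intersection. Concretely, a short induction on $M$ shows that whenever $\mathcal{H} \neq \emptyset$ and $\Hmod,t \models M$ for all $\Hmod \in \mathcal{H}$, then $\Hmod^{\ast},t \models M$: the base cases $\top$, $\bot$, $P(\cbf)$ are immediate (the $\bot$ case vacuous, where $\mathcal{H} \neq \emptyset$ is used), and for $\boxminus_{\varrho} M'$ and $\boxplus_{\varrho} M'$ the premise says $\Hmod,t' \models M'$ for all the relevant $t'$ and all $\Hmod \in \mathcal{H}$, so the inductive hypothesis applies at each such $t'$. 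Applying this fact to the head $M$ of our rule yields $\Hmod^{\ast},t \models M$, completing Condition~1 and the proof.

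The point that requires care, and hence the hard part, is this final induction: it genuinely relies on heads containing none of $\diamondminus$, $\diamondplus$, $\So$, $\Uo$, since for these ``existential'' operators satisfaction need not survive intersection---e.g.\ $\diamondminus_{\varrho} P$ can hold in every member of $\mathcal{H}$ with $P$ witnessed at different time points in different members, and then fail in $\Hmod^{\ast}$. One also has to remember to record $\mathcal{H} \neq \emptyset$ before invoking the $\bot$ base case. Everything else is routine bookkeeping with \Cref{prop::HT} and \Cref{def::HT}.
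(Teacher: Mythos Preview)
Your proof is correct, and it takes a genuinely different route from the paper. The paper constructs the least element \emph{from below} by a transfinite iteration: starting from the least model $\Hmod_0$ of $\D$, it applies the rules of $\Prog$ (with negated body atoms evaluated in $\Tmod$) through all ordinals up to $\omega_1$, and then argues that the resulting fixpoint $\Hmod_{\omega_1}$ is contained in every $\Hmod \in \mathcal{H}$. You instead construct the least element \emph{from above} as the pointwise intersection $\Hmod^{\ast}$ of the whole family $\mathcal{H}$, and verify closure of $\mathcal{H}$ under this intersection directly. Your key observation---that satisfaction of head atoms (built only from $\top$, $\bot$, relational atoms, $\boxminus$, $\boxplus$) is preserved under arbitrary nonempty intersections---is exactly the right structural fact, and your use of \Cref{prop::HT} to lift the positive-body hypotheses from $\Hmod^{\ast}$ to each $\Hmod$ is clean.

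What each approach buys: yours is shorter and avoids transfinite ordinals entirely, which is a real simplification. The paper's construction, on the other hand, yields an explicit inductive description of $\Hmod^{\Tmod}_{\Prog,\D}$ as an iterated immediate-consequence closure; this description is reused verbatim in the proofs of \Cref{thm::positive} and \Cref{thm::undec}, where the authors need to compare $\Hmod_\alpha$ stage-by-stage with other operators or interpretations. So while your argument suffices for the theorem as stated, the paper's approach doubles as infrastructure for later results.
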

\begin{proof}
We use transfinite induction to construct a sequence
of interpretations $\Hmod_0$, $\Hmod_1$, $\dots$,
where each interpretation is contained in $\Tmod$.
We will then show that $\Hmod_{\omega_1}$, where $\omega_1$ is the first uncountable
ordinal, is the least amongst all interpretations $\Hmod$
such that $(\Hmod,\Tmod)$ is an HT-model of $\Prog$ and $\D$.

Let $\Hmod_0$ be the least model of $\D$.
Then, for each successor ordinal $\alpha$, let 
$\Hmod_{\alpha}$ be the least interpretation
satisfying the following property:
for each rule of Form~\eqref{eq:ruleneg} in $\groundp{\Prog}$, and for each time point $t$,   
if ${\Hmod_{\alpha-1},t \models \matA_i}$ for each $1 \leq i \leq k$ and 
${\Tmod,t \not\models \matA_j}$ for each $k+1 \leq j \leq m$,
then ${\Hmod_{\alpha},t \models M}$.
Moreover, for each limit ordinal $\alpha$, we define 
$\Hmod_{\alpha}$ as  $\bigcup_{\beta < \alpha} 
\Hmod_\beta$.
By induction on ordinals $\alpha$ we can show simultaneously that 
$\Hmod_{\alpha}$ is well-defined and that $\Hmod_\alpha \subseteq \Tmod$.
For the basis of the induction, we recall that $\Hmod_0$ is the least model of $\D$, so $\Hmod_0 $ is well-defined. Moreover, since  $(\Tmod,\Tmod)$ is an HT-model 
of $\D$, we obtain that $\Hmod_0 \subseteq \Tmod$.
Now, consider the inductive step for a successor ordinal $\alpha$. 
To show that $\Hmod_{\alpha}$ is well-defined it suffices to show that for each rule of Form~\eqref{eq:ruleneg} in
$\groundp{\Prog}$, and for each time point $t$,   
if ${\Hmod_{\alpha-1},t \models \matA_i}$ for each $1 \leq i \leq k$ and 
${\Tmod,t \not\models \matA_j}$ for each $k+1 \leq j \leq m$,
then $M$ is not $\bot$.
Indeed, by the inductive assumption we have $\Hmod_{\alpha-1} \subseteq \Tmod$, so if $M$ is $\bot$, then 
 $\Tmod,t \models \bot$, which contradicts the assumption that  $(\Tmod,\Tmod)$ is an
HT-model of $\Prog$.
Moreover, since $\Hmod_{\alpha-1} \subseteq \Tmod$ and $(\Tmod,\Tmod)$ is an
HT-model of $\Prog$, we need to have $\Tmod,t \models M$, so $\Hmod_\alpha \subseteq \Tmod$.
The inductive step for a limit ordinal $\alpha$ holds trivially, since $\Hmod_{\alpha}$ is defined as  $\bigcup_{\beta < \alpha} 
\Hmod_\beta$.

We thus obtain that $\Hmod_{\omega_1} \subseteq \Tmod$, and so, $(\Hmod_{\omega_1},\Tmod)$ is 
an HT-interpretation. By construction, $\Hmod_{\omega_1}$ contains $\Hmod_0$ and therefore
$(\Hmod_{\omega_1},\Tmod)$ is an HT-model of $\D$.
It is also an HT-model of $\Prog$, since $\omega_1$ rounds of rule applications are enough to ensure that
$\Hmod_{\omega_1}$ is a fixpoint with respect to the application of the rules of $\Prog$
 \cite{brandt2017ontology}.
Finally, to show that $\Hmod_{\omega_1}$ is the least among interpretations $\Hmod$ such that $(\Hmod,\Tmod)$ is an HT-model of $\Prog$ and $\D$, consider any such $\Hmod$.
Using transfinite induction in a way similar to the previous paragraph, one can show that $\Hmod_{\alpha} \subseteq \Hmod$
for each ordinal $\alpha$, and thus $\Hmod_{\omega_1} \subseteq \Hmod$.
\end{proof}

Given a program $\Prog$, a dataset $\D$, and  
an interpretation $\Tmod$ such that 
$(\Tmod,\Tmod)$ is an HT-model of $\Prog$, we let $\Hmod^\Tmod_{\Prog,\D}$ denote
the least interpretation whose existence
is guaranteed by \Cref{thm::least}.

In equilibrium logic, a model of a program
 is a set $T$ of relational propositional atoms that satisfies
 the rules of the program and for which there exists no set $H \subsetneq T$ 
such that $(H,T)$ is a model of the program in here-and-there logic.
This ensures that equilibrium logic implements a kind of minimal model
reasoning.
We next generalise this notion to \MTLneg{} by building on our previous definition of
the least interpretation $\Hmod^\Tmod_{\Prog,\D}$.

\begin{definition} \label{def::stable}
An interpretation $\Tmod$ is a \emph{stable model} of 
a program $\Prog$ and a dataset $\D$
if and only if
$(\Tmod,\Tmod)$ is an HT-model of $\Prog$ and $\D$, and 
${\Hmod^\Tmod_{\Prog,\D} = \Tmod}$.
\end{definition}
 
\begin{example}\label{example_model}
Consider a dataset with a single fact $P@[0,1]$ and a propositional 
\MTLneg{} program consisting of a single rule $R \gets \diamondminus_1 P \land \nott Q$.
In this setting, there is just a single
stable model,  namely the interpretation where $P$ holds at all time points from $[0,1]$, $R$ holds at all time points from $[1,2]$, and no relational atoms are satisfied anywhere else.

Next, consider a dataset with facts $P@0$ and $Q@1$, together with a propositional 
\MTLneg{} program that consists of two  rules $R \gets  P \land \nott \diamondplus_1 R$ and
$R \gets  Q \land \nott \diamondminus_1 R$.
This dataset and program have two stable models.
In the first model, $P$ and $R$ hold at 0, whereas $Q$ holds at 1.
In the second model, $P$ holds at 0, whereas $Q$ and $R$ hold at 1.

Finally, let us consider the empty dataset and a program consisting
of rules $Q \gets \nott P$ and ${P \gets \nott Q}$. Syntactically, 
this is not only a propositional \MTLneg{} program, but also
a standard answer set program (ASP).  According to our temporal semantics,
this program and dataset admit infinitely many stable models:
for each set  $X$ of time points, there is a stable model in which $P$ holds at each time point in
$X$ and $Q$ holds at all other time points.
In contrast, the same program under the standard ASP semantics has only two stable models, namely $\{ P \}$ and $ \{ Q \}$.
\end{example}
 
We next show that our semantics for \MTLneg{}  also generalises the 
semantics of (positive) \MTL{}  programs.
If a \MTL{} program $\Prog$ and a dataset $\D$ have a model, they also admit
a least model \cite{brandt2017ontology}.
This can be equivalently reformulated by stating that
if the set of all interpretations $\{ \I \mid (\I,\I) \text{ is an HT-model of $\Prog$ and $\D$} \}$ is not empty, then this set contains a unique least interpretation.

\begin{theorem}\label{thm::positive}
Let $\Prog$ be a positive program and let $\D$ be a dataset. 
An interpretation $\I$ is a stable model of $\Prog$ and $\D$ if and only if $\I$ is their least model.
\end{theorem}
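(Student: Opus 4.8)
The plan is to reduce the statement to two elementary observations about HT-models of positive programs and then apply the existence of least models for positive \MTL{} from \citeN{brandt2017ontology}. The first observation is that, since a positive $\Prog$ has no negated body atoms (so $m=k$ in every rule), conditions~1 and~2 of \Cref{def::HT} say exactly that both $\Hmod$ and $\Tmod$ are (ordinary) models of $\Prog$, while the HT-model condition for $\D$ says exactly that $\Hmod$ is a model of $\D$. The second observation is that, for the same reason, the clause ``$\Tmod,t \not\models \matA_j$ for $k+1 \le j \le m$'' in the construction of \Cref{thm::least} is vacuously satisfied, so $\Hmod^{\Tmod}_{\Prog,\D}$ is simply the least interpretation $\Hmod$ with $\Hmod \subseteq \Tmod$ that is a model of $\Prog$ and $\D$.

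For the ``if'' direction, assume $\I$ is the least model of $\Prog$ and $\D$. Then $\I$ is a model of $\Prog$ and $\D$, so $(\I,\I)$ is an HT-model of $\Prog$ and $\D$. Moreover, any $\Hmod$ with $(\Hmod,\I)$ an HT-model of $\Prog$ and $\D$ is itself a model of $\Prog$ and $\D$, and hence contains $\I$ by minimality; since also $\Hmod \subseteq \I$, we get $\Hmod = \I$, so $\Hmod^{\I}_{\Prog,\D} = \I$. Hence $\I$ is a stable model.

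For the ``only if'' direction, assume $\I$ is a stable model. Then $(\I,\I)$ is an HT-model of $\Prog$ and $\D$, so $\I$ is a model of $\Prog$ and $\D$; in particular $\Prog$ and $\D$ have a model, and hence by \citeN{brandt2017ontology} they have a least model $\I^{*}$, which satisfies $\I^{*} \subseteq \I$. Since $\I^{*} \subseteq \I$, both $\I^{*}$ and $\I$ are models of $\Prog$, and $\I^{*}$ is a model of $\D$, the pair $(\I^{*},\I)$ is an HT-model of $\Prog$ and $\D$; by the description of $\Hmod^{\I}_{\Prog,\D}$ above, $\Hmod^{\I}_{\Prog,\D} \subseteq \I^{*}$. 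But $\I$ being stable gives $\Hmod^{\I}_{\Prog,\D} = \I$, so $\I \subseteq \I^{*}$ and therefore $\I = \I^{*}$ is the least model.

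The argument is almost entirely bookkeeping; the only point requiring a little care is the ordering of the steps in the ``only if'' direction—one must first use that $(\I,\I)$ is an HT-model to conclude that $\Prog$ and $\D$ have a model before invoking \citeN{brandt2017ontology} to obtain $\I^{*}$—together with setting up the first observation cleanly so that the coincidence between HT-models of positive programs and pairs of ordinary models is unambiguous.
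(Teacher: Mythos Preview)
Your proof is correct and takes a somewhat different route from the paper's own argument. The paper first establishes the auxiliary claim that whenever $(\I,\I)$ is an HT-model of $\Prog$ and $\D$, the interpretation $\Hmod^{\I}_{\Prog,\D}$ \emph{is} the least model of $\Prog$ and $\D$; it does this by unpacking the transfinite construction from the proof of \Cref{thm::least} and observing that, for positive $\Prog$, the sequence $\Hmod_\alpha$ coincides step by step with the iterates $T_\Prog^\alpha(\I_\D)$ of the immediate consequence operator. Both directions of the theorem then follow immediately from this identification. Your argument, by contrast, never opens up the transfinite construction: you use only the defining property of $\Hmod^{\I}_{\Prog,\D}$ as the least element of the set $\{\Hmod \mid (\Hmod,\I)\text{ is an HT-model}\}$, together with the clean observation that for positive $\Prog$ this set is exactly $\{\Hmod \subseteq \I \mid \Hmod\text{ is a model of }\Prog\text{ and }\D\}$. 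You then exhibit the least model $\I^{*}$ as a concrete witness in this set to force $\I = \I^{*}$. Your approach is more elementary and arguably tidier, since it treats $\Hmod^{\I}_{\Prog,\D}$ purely via its universal property; the paper's approach buys a slightly stronger intermediate statement (the explicit identification of $\Hmod^{\I}_{\Prog,\D}$ with the fixpoint of $T_\Prog$), which is not needed here but connects more visibly to the standard operational semantics of positive \MTL{}.
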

\begin{proof}
We can first use the fact that
$\Prog$ is positive to show that 
if $(\I,\I)$ is an HT-model
of $\Prog$ and $\D$, then
$\Hmod^\I_{\Prog,\D}$ is the least model of $\Prog$ and $\D$.
Indeed, if $(\I,\I)$ is an HT-model
of $\Prog$ and $\D$,
we can define the sequence
$\Hmod_0$, $\Hmod_1$, $\dots$ of interpretations contained in
$\I$ as in the proof of \Cref{thm::least}, which satisfies $\Hmod_{\omega_1} = \Hmod^\I_{\Prog,\D} $.
Furthermore, since $\Prog$ is positive,  we can observe that, for every ordinal $\alpha$, it holds that $ { \Hmod_\alpha = T_\Prog^\alpha(\I_\D) }$,
where $T_\Prog^\alpha(\I_\D)$ is the result of applying $\alpha$ times the immediate consequence operator of a positive program $\Prog$ to an interpretation $\I_\D$ represented by $\D$. In particular, $ { \Hmod_{\omega_1} = T_\Prog^{\omega_1}(\I_\D) }$,
which is the least model of $\Prog$ and $\D$ \cite{walkega2021finitely,brandt2017ontology}.
Hence, $\Hmod^\I_{\Prog,\D}$ is the least model of $\Prog$ and $\D$.

Now, if $\I$ is a stable model of $\Prog$ and $\D$, then $(\I,\I)$ is an HT-model
of $\Prog$ and $\D$; as shown in the previous paragraph, 
this implies that $\Hmod^\I_{\Prog,\D}$ is the least model of
$\Prog$ and $\D$. However, since
$\I$ is a stable model of $\Prog$ and $\D$, we have
$\I = \Hmod^\I_{\Prog,\D}$, and thus $\I$ is also the least model of $\Prog$ and $\D$.
Conversely, if $\I$ is the least model of $\Prog$ and $\D$, then 
 $(\I,\I)$ is an HT-model of $\Prog$ and $\D$; then, as shown in the previous
 paragraph, $\I = \Hmod^\I_{\Prog,\D}$, and so $\I$ is a stable model
 of $\Prog$ and $\D$. 
\end{proof}
It follows that,  if a positive program and a dataset have a model,
 then they have a stable model. 
Note, however, that this is not the case
for  other  temporal logics with stable model semantics
\cite{cabalar2011automata,bozzelli2015complexity},
and the reason why this property holds in our setting is given by  \Cref{thm::least}.

Finally, our semantics also
generalises  
that of  stratifiable \MTLneg{} programs~\cite{tena2021stratified}, where
rules do not have cyclic dependencies via negation and can be
organised in strata. 
Each such a stratifiable, \mbox{$\bot$-free} program $\Prog$ and dataset $\D$ have a single
 least model constructed by  applying to $\D$ rules of $\Prog$ stratum by stratum in a minimal way \cite{tena2021stratified}.
As in the case of positive programs, 
we can show that such least model corresponds to the single stable model of $\Prog$ and $\D$.
Hence, analogously to the case of plain Datalog, positive and stratifiable 
\MTL{} programs cannot have multiple stable models.
Arbitrary programs, however, can have any  number of stable models, which is witnessed by \Cref{example_model}.

In the rest of the paper we study decidability and complexity of 
\emph{reasoning}, which is (in the context of this paper) the problem of
checking if a \MTLneg{}
program $\Prog$ and a dataset $\D$ have a stable model.
We focus on \emph{data
complexity}---that is, we assume that $\Prog$ is fixed and
only $\D$ forms the input---which is the most relevant measure if complexity in
data intensive applications.

\bigskip

Before we close this section, however, it is worth pointing out the
connections between the problem of checking existence
of a stable model and the related problem of \emph{fact entailment}, as defined next.
Following the standard conventions of
 non-monotonic logics and answer set programming \cite{eiter2009answer}, we say that a \MTLneg{} program $\Prog$ and a dataset $\D$
\emph{bravely} 
entail a relational fact $P(\cbf)@\varrho$ if 
$\I \models_\T  P(\cbf)@\varrho$ for some stable model $\I$ of $\Prog$ and $\D$.
and we say that $\Prog$ and $\D$
\emph{cautiously}
entail  $P(\cbf)@\varrho$ if 
$\I \models_\T  P(\cbf)@\varrho$ for all stable models $\I$ of $\Prog$ and $\D$.
The problem of brave (resp.\ cautious) fact entailment consists in deciding whether a \MTLneg{} program
and a dataset bravely (resp.\ cautiously) entail a given relational fact.
As we show next, both variants of the problem are inter-reducible
with checking existence (or non-existence) of a stable model.
Moreover, we argue that these reductions allow us to transfer bounds for
\emph{data complexity} which, for fact entailment, refers to the setting where both the program $\Prog$ and the
fact $P(\cbf)@\varrho$ are fixed, and only the dataset $\D$ constitutes the input.

\begin{proposition}
In \MTLneg{}, existence of a stable model can
 be reduced in \AC{} to (i) brave fact entailment, 
and to (ii) the complement of cautious fact entailment, and vice versa. Furthermore, 
the reductions involved do not depend on the input dataset.
\end{proposition}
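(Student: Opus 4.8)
The plan is to give four many--one reductions --- existence of a stable model to brave fact entailment, existence of a stable model to the complement of cautious fact entailment, and each of these in the converse direction --- realised by maps that transform the program depending only on the fixed input program (and, in the last two cases, on the fixed input fact) and touch the dataset only by appending at most one fixed fact. Each such map is trivially computable in \AC{} (it copies its input and appends a constant) and, being independent of the dataset, transfers the relevant data-complexity bounds in the manner indicated before the statement. Since the program is fixed, we may reserve a $0$-ary predicate --- call such a predicate \emph{fresh} --- occurring in neither the program nor any input dataset; I will use repeatedly that a fresh predicate occurring only in the rules and facts introduced by the reduction does not interfere with the least-$\Hmod$ construction in the proof of \Cref{thm::least} on the remaining predicates, so that a stable model of an augmented instance, restricted to the original vocabulary, is a stable model of the original instance, and conversely each stable model of the original instance extends to one of the augmented instance.

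\emph{From existence of a stable model.} To reduce to brave fact entailment I would send $(\Prog,\D)$ to $(\Prog\cup\{G\gets\top\},\D)$ together with the fixed goal fact $G@0$, for a fresh $G$. By \Cref{def::stable} and the construction in the proof of \Cref{thm::least}, the stable models of $\Prog\cup\{G\gets\top\}$ and $\D$ are exactly the stable models of $\Prog$ and $\D$ with $G$ made true everywhere; hence such a model exists if and only if $G@0$ is bravely entailed. To reduce to the complement of cautious fact entailment I would leave $(\Prog,\D)$ unchanged and use the fixed goal fact $G@0$ with $G$ fresh; since $G$ is derived by no rule, no stable model of $\Prog$ and $\D$ satisfies $G@0$, so $\Prog$ and $\D$ fail to cautiously entail $G@0$ exactly when they admit at least one stable model.

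\emph{To existence of a stable model.} Here the input also contains a fixed relational fact $F = P(\cbf)@\varrho$, and the obstacle is that the rules of a program cannot detect where $\varrho$ lies, because metric operators are invariant under shifts of the timeline. I would circumvent this by appending to $\D$ the fixed fact $A@\varrho$ for a fresh $A$; then, in any stable model of the augmented instance, $A$ is derived nowhere and hence holds at exactly the points of $\varrho$. For brave fact entailment I would additionally add the rules $E\gets A\wedge\nott P(\cbf)$ and $\bot\gets E$, with $E$ fresh: in any stable model of the resulting instance, $E$ holds at a point $t$ precisely when $t\in\varrho$ and $P(\cbf)$ is not satisfied at $t$, and the constraint forbids $E$ at every point, so the stable models (modulo the fresh predicates) are exactly those of $\Prog$ and $\D$ in which $P(\cbf)$ holds throughout $\varrho$, i.e.\ that satisfy $F$; thus such a model exists if and only if $F$ is bravely entailed. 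For the complement of cautious fact entailment I would instead add $V\gets A\wedge\nott P(\cbf)$, the pair $\boxplus_{[0,\infty)}W\gets V$ and $\boxminus_{[0,\infty)}W\gets V$ (which propagate any occurrence of $V$ to the whole timeline), and the constraint $\bot\gets\nott W$, with $V$ and $W$ fresh: in any stable model, $W$ holds everywhere if and only if $V$ holds somewhere, i.e.\ if and only if $P(\cbf)$ fails at some point of $\varrho$, i.e.\ $F$ is violated, and the constraint forces $W$ to hold everywhere; so the stable models (modulo the fresh predicates) are exactly those of $\Prog$ and $\D$ violating $F$, and hence such a model exists if and only if $\Prog$ and $\D$ do not cautiously entail $F$.

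\emph{What needs care.} In each case the correctness argument comes down to two facts: first, the fresh predicates introduced occur only in the added rules and facts, so their extent is determined entirely by those, they feed nothing back into $\Prog$, and the construction in the proof of \Cref{thm::least} is therefore unchanged on the original predicates --- whence stable models of the augmented and original instances correspond by restriction and extension; and second, $A$ is derived by no rule and is thus pinned to exactly the points of $\varrho$ in every stable model of the augmented instance. Both are routine consequences of \Cref{def::stable} and the proof of \Cref{thm::least}, together with the observation that in a stable model the added $\bot$-rules are never triggered. I expect the anchoring step to be the only genuinely non-trivial point: without the fact $A@\varrho$ no program could test satisfaction of $F$ at the absolute position $\varrho$, since metric operators are shift-invariant, and appending this single fixed fact is precisely what makes the last two reductions work while keeping them independent of the input dataset --- which yields the data-complexity transfer claimed in all four directions.
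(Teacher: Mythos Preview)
Your proposal is correct. The four reductions you give all work, and the independence-of-dataset claim holds because the only modification to $\D$ is the fixed anchor fact $A@\varrho$ (or nothing at all).

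The route differs from the paper's in the two ``vice versa'' directions. The paper anchors with a \emph{punctual} fact $P'(\cbf)@t$ for a single chosen $t\in\varrho$ and then uses box operators $\boxminus_{\varrho_1}$, $\boxplus_{\varrho_2}$ (with endpoints computed from $\varrho$ and $t$) to sweep the whole interval from that one point: for brave entailment it adds the constraints $\bot\gets P'(\xbf)\wedge\nott\boxminus_{\varrho_1}P(\xbf)$ and $\bot\gets P'(\xbf)\wedge\nott\boxplus_{\varrho_2}P(\xbf)$, and for the complement of cautious entailment the single constraint $\bot\gets P'(\xbf)\wedge\boxminus_{\varrho_1}P(\xbf)\wedge\boxplus_{\varrho_2}P(\xbf)$. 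You instead anchor with the full interval $A@\varrho$ and test $P(\cbf)$ pointwise against $A$; this makes the brave case very clean (no metric operators in the added rules at all --- in fact your intermediate $E$ is unnecessary and $\bot\gets A\wedge\nott P(\cbf)$ would do), but your cautious-complement case then needs the two propagation rules for $W$ plus the constraint $\bot\gets\nott W$ to globalise a local violation, whereas the paper handles both cases uniformly with one constraint each. Both encodings are sound for the same reason --- the fresh predicates feed nothing back into $\Prog$, so the least-$\Hmod$ construction is undisturbed on the original vocabulary --- and both transfer data-complexity bounds as required.
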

\begin{proof}
We start by showing Statement (i).
To check if a  \MTLneg{} program $\Prog$ and a dataset $\D$ have a stable model,
it suffices to
add to $\D$ a fact $P@0$ with a fresh proposition $P$ and check   whether $\Prog$ and the extended dataset  bravely entail $P@0$.
To check if $\Prog$ and $\D$ bravely entail a relation fact $P(\cbf) @ \varrho$,
it suffices to verify that the following program~$\Prog'$ and dataset $\D'$ have a stable model:
\begin{align*}
\Prog' = \Prog \cup \{
\bot \gets P'(\xbf) \land \nott \boxminus_{\varrho_1} P(\xbf),  
\bot \gets P'(\xbf) \land \nott \boxplus_{\varrho_2} P(\xbf) 
\},
&&
\D'   =  \D  \cup \{P'(\cbf) @ t \},
\end{align*}
where $P'$ is a fresh predicate
of the same arity as $P$, $\xbf$ is a tuple of distinct variables,
$t$ is an arbitrary time point belonging to $\varrho$, 
whereas $\varrho_1$ and $\varrho_2$ depend on both $\varrho$ and $t$;
for example, 
if $\varrho =[ t_1,t_2 )$, then
$\varrho_1=[0,t-t_1]$ and $\varrho_2=[0,t_2-t)$, where if $t_2= \infty$, then $t_2-t$ stands for $\infty$.

Next, we show Statement (ii).
To check if   $\Prog$ and $\D$ have a stable model, it  
suffices to check if they do not cautiously entail a fact $P@0$, where $P$ is a fresh proposition \cite[Proposition~3]{datalogMTL}.
On the other hand, to check if $\Prog$ and $\D$ 
do not cautiously entail  a relational fact 
$P(\cbf) @ \varrho$, it suffices to verify that the following program~$\Prog''$ and dataset $\D''$
 have a stable model:
\begin{align*}
\Prog'' = \Prog \cup \{\bot \gets P'(\xbf) \land \boxminus_{\varrho_1} P(\xbf) \land \boxplus_{\varrho_2} P(\xbf)  \},  
&&
\D''   =  \D  \cup \{P'(\cbf) @ t \},
\end{align*}
where $P'$, $\xbf$, $t$, $\varrho_1$, and $\varrho_2$ are as in the proof of Statement (i).

Finally, we observe that all  the above reductions can be performed in \AC{}. 
Moreover, they allow us to 
transfer data complexity bounds, since the programs and facts we construct in the reductions do not depend on input datasets.
\end{proof}

\section{Undecidability over the Rational Timeline}
\label{sec::undecidability}

In this section we focus on the rational timeline, so let us fix
$\T = \Q$. 
In this setting, standard reasoning problems
are \PS{}-complete 
in data complexity for positive programs~\cite{DBLP:conf/ijcai/WalegaGKK19},
as well as for programs with negation, if they are stratified~\cite{tena2021stratified}.

We next show that, in stark contrast with the positive case,
reasoning in  \MTLneg{}  is undecidable.
Furthermore, undecidability holds even for programs that are propositional, forward-propagating and considered to be fixed,
and 
even if the input datasets are  bounded.

\begin{theorem} \label{thm::undec} 
Checking whether a propositional
\MTLfp{}  and a bounded dataset
have a stable model over the rational timeline
is undecidable with respect to data complexity. 
\end{theorem}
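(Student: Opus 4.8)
The plan is to encode an undecidable problem---such as the halting problem for two-counter (Minsky) machines or the non-halting problem for Turing machines---into the existence of a stable model of a propositional, forward-propagating \MTLneg{} program over $\Q$. The data complexity requirement dictates that the program must be \emph{fixed}, so the input machine $\TM$ together with a bounded initial configuration should be encoded entirely inside a bounded dataset $\D_{\TM}$, while a single fixed program $\Prog$ simulates the transition function. Since forward-propagating programs cannot refer to the future in rule bodies nor to the past in rule heads, the simulation must proceed ``rightwards'' along the timeline: I would lay out consecutive configurations of $\TM$ in consecutive unit-length (or otherwise rationally scaled) windows of $\Q$, using the density of $\Q$ to squeeze the unboundedly many tape cells of each configuration into a bounded interval. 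Concretely, configuration number $n$ lives in the interval $[n, n+1)$, and the $i$-th tape cell of that configuration is represented around time point $n + 1 - 2^{-i}$ (or a similar convergent placement), so that a bounded dataset suffices to seed the initial configuration and a fixed program with fixed rational constants in its metric operators can relate a cell to its successor cell and to the corresponding cell in the next configuration.

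The role of non-stratified negation under the stable model semantics is twofold, and this is where the ``unless retracted'' behaviour illustrated by the bus-company example in the introduction becomes essential. First, negation is used to enforce \emph{faithfulness} of the simulation: a rule of the shape $\bot \gets (\text{bad pattern})$ rules out any candidate model that does not correspond to a legal run, and since such constraints interact with the rest of the program only through $\Tmod$, they behave exactly like classical integrity constraints. Second, and more delicately, negation-with-recursion is used to pin down the \emph{minimal} model so that the simulation cannot ``cheat'' by asserting spurious atoms: the stable-model condition $\Hmod^\Tmod_{\Prog,\D} = \Tmod$ forces every atom true in $\Tmod$ to be supported by an actual derivation tracing back to $\D$, which is what prevents a model from, e.g., declaring the machine halted without having actually reached a halting configuration. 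I would design the program so that a stable model exists if and only if $\TM$ reaches its halting state, by adding a rule that makes the whole model collapse (or become unsupported) unless a ``halt'' atom is eventually derived---for instance a rule of the form $\textit{Halt} \gets \nott \textit{Halt}$ guarded so that it is active precisely when the run is infinite, so that non-halting runs admit no stable model while halting runs do.

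The key steps, in order, are: (1) fix the computational model and the exact undecidable problem (I would use two-counter machines, as counters map cleanly onto the two ``unbounded'' directions needed, or alternatively a deterministic Turing machine with the non-halting problem); (2) define the geometric layout of configurations and cells in $\Q$, choosing rational constants for the $\boxminus$, $\diamondminus$, $\So$ operators so that ``next cell'' and ``next configuration'' become expressible with \emph{fixed} intervals independent of the input; (3) write the fixed program $\Prog$ simulating transitions, using positive rules for the deterministic propagation of cell contents and head movement, and using negated body atoms only where a choice or a constraint is genuinely needed; (4) write the boundedly many facts $\D_{\TM}$ encoding $\TM$'s transition table and its initial configuration; (5) prove the correctness equivalence---$\TM$ halts on the empty input iff $\Prog$ and $\D_{\TM}$ have a stable model---by showing that stable models of $\PD$ are in bijection with (prefixes of) the run of $\TM$, invoking \Cref{thm::least} to identify $\Hmod^\Tmod_{\Prog,\D}$ with the least supported interpretation and thereby ruling out unsupported ``guessed'' configurations.

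The main obstacle I expect is step (2) together with the forward-propagating restriction: getting an \emph{infinite} tape and an \emph{unbounded} number of configurations both encoded in a way that (a) only ever uses a bounded initial dataset, (b) uses only finitely many fixed rational constants in the program's metric operators, and (c) respects the one-directional (past-in-body, future-in-head is \emph{forbidden}; only future-in-body/past-in-head is allowed for \MTLfp{}) flow of information. The density of $\Q$ is the crucial resource here---it lets a bounded interval hold infinitely much information---but it also creates the risk that the ``next cell'' relation, if it requires a constant shift, cannot be expressed uniformly; the fix is to make the geometric spacing between consecutive cells itself decay geometrically, and to relate a configuration to its successor by a \emph{self-similar} scaling that a fixed set of rules can implement (propagating a ``scale marker'' rather than using a fixed additive offset). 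A secondary subtlety is ensuring the $\So$/$\Uo$ safety and the interplay between $\Hmod$ and $\Tmod$ in \Cref{def::HT} do not accidentally admit spurious stable models or, conversely, kill all models of a halting run; this requires checking the minimality argument carefully for every rule that uses negation, which is routine but tedious and is where most of the write-up effort will go.
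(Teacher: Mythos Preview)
Your high-level instincts are sound---exploit the density of $\Q$ to pack an unbounded tape into a bounded interval, lay out successive configurations in successive unit windows, and use constraint rules $\bot \gets \cdots$ to rule out illegal runs---and the paper does exactly this. But there are three genuine gaps that would prevent the proof from going through as written.

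First, the direction of your reduction is the wrong way round for a forward-propagating encoding. You aim for ``stable model exists iff $\TM$ halts'', and you propose a rule of the shape $\textit{Halt} \gets \nott \textit{Halt}$ guarded by ``the run is infinite'' to kill non-halting runs. But expressing ``the run is infinite'' (or ``$\textit{Halt}$ is never reached'') at any given time point requires inspecting the unbounded future, which is precisely what $\MTLfp$ bodies cannot do (no $\diamondplus$, $\boxplus$, $\Uo$). Any attempt to guard the rule locally by ``not yet halted'' fires the unstable rule at all time points \emph{before} the halting step, killing halting runs too. The paper avoids this by reducing in the opposite direction: stable model exists iff $\TM$ does \emph{not} halt, enforced simply by $\bot \gets S_{\qhalt}$. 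No future-looking test is needed.

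Second---and this is the main missing idea---your fixed geometric placement $n + 1 - 2^{-i}$ for cell $i$ makes the ``next cell'' offset $2^{-(i+1)}$ depend on $i$, so it cannot be expressed by a finite set of metric constants in a fixed program; your ``scale marker / self-similar scaling'' fix is too vague to rescue this. The paper's solution is to use negation not (primarily) for constraints but for \emph{choice}: rules $X \gets \nott \overline{X}$ and $\overline{X} \gets \nott X$ (with $\bot \gets X \wedge \overline X$) nondeterministically guess, at every rational time point, whether a ``new cell'' marker $N$ and a ``head'' marker $H$ hold there. The actual cell positions are then whatever the stable model guesses, constrained only to be one-per-unit-interval and correctly threaded by $\So_{(0,1)}$-style rules. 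This decouples the encoding from any fixed arithmetic scheme and is what makes a fixed program suffice.

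Two smaller points. You have the forward-propagating restriction stated backwards in your parenthetical: $\MTLfp$ \emph{allows} past-in-body and future-in-head (it forbids $\diamondplus,\boxplus,\Uo$ in bodies and $\boxminus$ in heads), which is consistent with your ``propagate rightwards'' plan but not with what you wrote. And encoding the machine's transition table in the dataset forces you to build a universal simulator in the fixed program; the paper instead fixes the machine (and hence hard-codes its transitions into state- and symbol-indexed rules like $\boxplus_2 C_{s'} \gets H \wedge C_s \wedge \diamondminus_{(0,2)} S_q$) and lets only the input word vary in the dataset, which is both sufficient for data-complexity undecidability and considerably simpler.
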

\begin{proof}
Let $\TM = (\Sigma,\states,\delta,\qinit,\qhalt)$ be a deterministic Turing machine with
 $\Sigma$ the input alphabet,
$\states$ the set of states, 
$ {\delta: \Sigma_\sqcup \times ( \states \setminus \{ \qhalt\} ) \longrightarrow \Sigma_\sqcup \times \states \times \{\Left,\Right \} } $
the transition function for  $\Sigma_\sqcup = \Sigma \cup \{\sqcup\}$ and blank symbol $\sqcup$, 
$\Left$ and $\Right$ the symbols indicating the head movements, and $\qinit,\qhalt \in \states$ 
the initial  and halting states. Without loss of generality, we assume that $\TM$  never tries to move to the left of the left-most cell of the tape.	

We construct a propositional \MTLfp{} program $\Prog_\TM$ and then 
reduce (in $\AC$) every input word $\word$ to a dataset $\D_w$ with only bounded intervals so that 
$\TM$ halts on $\word$ if and only if $\Prog_\TM$  and $\D_w$ do not have a stable model.

We represent, for each $i \geq 1$, the $i$th configuration in the computation of $\TM$ on input $w$ within
the interval $[2i,2i+2)$, as illustrated in \Cref{fig::conf}, where we assume that in the configuration the state is $q$, the head is over the $j$th cell, and the contents of the first $|w| + i$ cells of the 
tape are symbols $s_1, \dots, s_{|w|+i}$ (in the $i$th configuration,
only the first $|\word|+i$ cells can be non-empty).
The state is encoded within the first half $[2i, 2i+1]$ of the interval: 
a proposition $S_{q}$
holds at some time point within $[2i,2i+1]$.
The contents of the tape and the head position are encoded within the second half $(2i+1,2i+2)$ of the interval; in particular, $|\word|+i$ time points ${t^i_1 < \cdots < t^i_{|\word|+i}}$ in $(2i+1,2i+2)$ are used so that, for each $j \in\{1, \dots, |w| +i\}$, proposition $C_{s_j}$ holds at $t^i_{j}$, encoding the fact that $s_j$ are the contents of the $j$th cell in the configuration, and proposition $H$ holds at $t^i_j$, encoding the fact that the head is over the $j$th cell in the configuration.
We also use additional propositions: $S$, which holds all through $[2i,2i+1]$ and ensures that these intervals are only used to encode states; $N$, which holds at a single new time point in $(2i+1,2i+2)$ beyond $t^i_{|\word|+i}$ and will allow us to increase the number of time points encoding cells; $\overline{N}$ and $\overline{H}$, which simulate negations of $N$ and $H$, respectively; $\overline{C}$, which marks points not used to encode the tape contents, and $L$, used for encoding left-moving transitions.

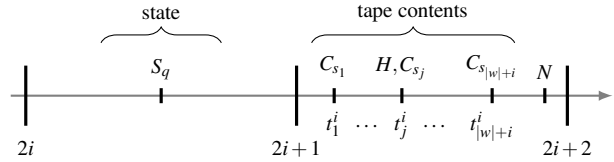
\begin{figure}[ht]
\centering
\begin{tikzpicture}
\footnotesize

\tikzset{>=latex}

\draw[->, thick, gray] (-0.2,0)--(7.8,0);

\node(t1) at (0,0){ }; 
\node(t2) at (3.6,0){ }; 
\node(t3) at (7.2,0){ }; 
\node(t4) at (1.8,0){ };
\node(t5) at (4.1,0){ }; 
\node(t6) at (6.2,0){ }; 
\node(t7) at (6.9,0){ };
\node(t8) at (5,0){ };

\node(t1_up)  at ($(t1)+(0,0.4)$) {};
\node(t1_down)  at ($(t1)+(0,-0.4)$) {};
\draw[-, very thick] (  t1 |- t1_up) -- (t1 |- t1_down);
\node[below  = 0.4  of t1] {$2i$};

\node(t2_up)  at ($(t2)+(0,0.4)$) {};
\node(t2_down)  at ($(t2)+(0,-0.4)$) {};
\draw[-, very thick] (  t2 |- t2_up) -- (t2 |- t2_down);
\node[below  = 0.4 of t2] {$2i+1$};

\node(t3_up)  at ($(t3)+(0,0.4)$) {};
\node(t3_down)  at ($(t3)+(0,-0.4)$) {};
\draw[-, very thick] (  t3 |- t3_up) -- (t3 |- t3_down);
\node[below  = 0.4 of t3] {$2i+2$};

\node(t4_up)  at ($(t4)+(0,0.1)$) {};
\node(t4_down)  at ($(t4)+(0,-0.1)$) {};
\draw[-, very thick] (  t4 |- t4_up) -- (t4 |- t4_down);
\node[above  = 0.01 of t4] {$S_q$};

\node(t5_up)  at ($(t5)+(0,0.1)$) {};
\node(t5_down)  at ($(t5)+(0,-0.1)$) {};
\draw[-, very thick] (  t5 |- t5_up) -- (t5 |- t5_down);
\node[below  = 0.01 of t5] {$t_1^i$};
\node[above  = 0.05 of t5] {$C_{s_1}$};

\node(t6_up)  at ($(t6)+(0,0.1)$) {};
\node(t6_down)  at ($(t6)+(0,-0.1)$) {};
\draw[-, very thick] (  t6 |- t6_up) -- (t6 |- t6_down);
\node[below  = 0.01 of t6] {$t_{|\word|+i}^i$};
\node[above  = 0.01 of t6] {$C_{s_{|\word|+i}}$};

\node(t7_up)  at ($(t7)+(0,0.1)$) {};
\node(t7_down)  at ($(t7)+(0,-0.1)$) {};
\draw[-, very thick] (  t7 |- t7_up) -- (t7 |- t7_down);
\node[below  = 0.01 of t7] {};
\node[above  = 0.01 of t7] {$N$};

\node(t8_up)  at ($(t5)+(0,0.1)$) {};
\node(t8_down)  at ($(t5)+(0,-0.1)$) {};
\draw[-, very thick] (  t8 |- t8_up) -- (t8 |- t8_down);
\node[below  = 0.01 of t8] {$t_j^i$};
\node[above  = 0.01 of t8] {$H, C_{s_j}$};

\node   at ($(t5)+(0.45,-0.4)$) {\dots};

\node   at ($(t5)+(1.35,-0.4)$) {\dots};

\draw [decorate,decoration={brace,amplitude=6pt,raise=20pt}]
($(t1)+(1,0)$) -- ($(t2)+(-1,0)$) node [black,midway,xshift=0cm,yshift=1.1cm] 
{state};

\draw [decorate,decoration={brace,amplitude=6pt,raise=20pt}]
($(t5)+(-0.3,0)$) -- ($(t6)+(0.3,0)$) node [black,midway,xshift=0cm,yshift=1.1cm] 
{tape contents};

\end{tikzpicture}
\caption{Encoding of the $i$th configuration}
\label{fig::conf}
\end{figure}

The first block of rules in $\Prog_\TM$ consists of the following rules, for each
 $X \in \{N,H \}$ and $s \in \Sigma_\sqcup$:
\begin{align*}
\overline{X} & \gets \nott X,
&
X & \gets \nott \overline{X},
&
\bot &\gets X \land \overline{X},
&
\bot & \gets X \land \diamondminus_{(0,1)} X,
\\
\overline{X} & \gets S,
&
\overline{N} & \gets C_{s},
&
\overline{H} & \gets \overline{C},
&
\bot & \gets S \wedge (\overline{C} \wedge \overline{N}) \So_{(0,1)}  C_{s}.
\end{align*}
The first three rules state that, at each  time point,
either $X$ or $\overline{X}$ holds. 
The fourth rule states that $X$ cannot hold twice in an open interval of length 1.
By the fifth rule, 
$X$ and $S$ cannot hold at the same time point.
The sixth rule states that $\overline{N}$ holds in all time points encoding cells.
The second to last rule states that 
$H$ does not hold in time points that do not encode cells.
The last rule ensures that after time point $t^i_{|\word|+i}$ encoding the
last relevant cell in the $i$th configuration,
there is another time point within $(2i+1,2i+2)$ where
 $N$ holds. Note that the last rule uses conjunction within 
 a metric operator, 
which is not syntactically allowed, but 
can be easily simulated by replacing $\overline{C} \land \overline{N}$ with a fresh proposition $P$ and adding a rule $P \gets \overline{C} \land \overline{N}$;
this abbreviation will be useful for simplifying other formulas used later on in the reduction.
 
The second block consists of the following rules, propagating propositions to the interval encoding the consequent configuration, for every $s \in \Sigma_\sqcup$:
\begin{align*}
\boxplus_2 S &\gets S,
&
\boxplus_2 \overline{C} &\gets \overline{C} \land \overline{N},
&
\boxplus_2 C_{\sqcup} &\gets N \land \diamondminus_{(0,\infty)}S_{\qinit},
&
\boxplus_2 C_{s} &\gets	C_{s} \land \overline{H} .
\end{align*}
By the first rule,  $S$ is always propagated into the future from  $t$ to $t+2$. 
The second rule states that, if $t$ does not encode a cell and $\overline{N}$ holds therein,
 then $t+2$ does not encode a cell either.
By the third rule, if $N$ holds at $t$ and this $t$ is to the right of the time point encoding the initial state, then $t+2$ encodes an empty cell.
The last rule states that, if  $t$ encodes a cell with contents $s$ and the 
head is not on this cell, then $t+2$ also encodes a cell with contents $s$.

We next encode the left-moving transitions.
Proposition $L$ is used
to indicate a time point encoding a cell such that the 
head was on it in the previous configuration and then moved to the left.
Program $\Prog_\TM$  contains the following rules for every $s \in \Sigma$ and $q \in \states$ with transition $\delta(s,q)=(s',q',\Left)$, and every $s^* \in \Sigma$:
\begin{align*} 
\boxplus_2 L \land \boxplus_1 S_{q'} \land \boxplus_2 C_{s'} & \gets H \wedge C_{s} \wedge \diamondminus_{(0,2)} S_{q}, 
\\
 \bot &\gets L \wedge \boxminus_{(0,1)} \overline{H}, 
\\
\bot & \gets L \wedge \diamondminus_{(0,1)} (C_{s^*} \wedge \diamondminus_{(0,1)} H ).
\end{align*}
The first rule simulates the transition: $H$ holds as intended, the state is changed from $q$ to $q'$, and the contents of the cell under the head change from $s$ to $s'$ (the conjunction in the head is used for brevity and can be simulated by several rules).  
The last two rules encode the position of the head in the consequent configuration, 
by stating that $H$ holds at the first time point encoding a cell  to the left of the time point with $L$.

Similarly, for each transition ${\delta(s,q)=(s',q',\Right)}$ moving the head to the right and any $s^* \in \Sigma$, program $\Prog_\TM$ has the rules
 \begin{align*}
 \boxplus_1 S_{q'} \land \boxplus_2 C_{s'} &\gets H \wedge C_{s} \wedge \diamondminus_{(0,2)} S_{q},
 \\
\boxplus_2 H &\gets C_{s^*}\wedge \overline{C} \So_{(0,1)}  (H \wedge C_s) \wedge  \diamondminus_{(0,2)} S_{q},
\\
\boxplus_2 H &\gets N \wedge \overline{C} \So_{(0,1)}  (H \wedge C_s) \wedge  \diamondminus_{(0,2)} S_{q}.
\textbf{}
\end{align*}
Here, the first rule encodes the change of the state and the contents of the cell above which the head is.
The last two rules simulate the change of the position of the head.
Finally, $\Prog_\TM$ contains rule $\bot \gets S_{\qhalt}$, which yields
an inconsistency when the halting state is reached.
	
We next reduce an input word $\word=s_1 \dots s_{|\word|}$ to a dataset $\D_{\word}$. Assuming that $\word$ is non-empty, 
 we let ${t_k =  1 + \frac{k}{|\word|+1}}$ for each $k \in \{1,\dots, |\word|\}$ (it is only important here that ${1 < t_1< \dots < t_{|\word|} < 2}$); then,  $\D_{\word}$
contains the facts: 
\begin{align*}
&
S@[0,1],
\qquad
\overline{N}@[0,1],
\qquad
\overline{H}@[0,1],
\qquad
S_{\qinit} @ 1,
\qquad
H @ t_1,
\\
&
C_{s_1} @ t_1,
\; \ldots, \;
C_{s_{|w|}} @ t_{|w|},
\qquad
\overline{C}@(1,t_1),
\qquad
\overline{C}@(t_{1},t_{2}),
\;\; \ldots, \;\;
\overline{C}@(t_{|\word|},2).
\end{align*} 
Intuitively, $\D_{\word}$ describes the initial configuration
of $\TM$ on $\word$ within $[0,2)$; the
initial state is encoded in $1$ and  $t_1, \dots, t_{|\word|}$ encode the first $|\word|$ cells of $\TM$.
Moreover, $\overline{C}$ holds in all other time points in $(1,2)$, whereas $\overline{N}$ and $\overline{H}$ hold in $[0,1]$.

We next show that $\Prog_\TM$ and $\D_{\word}$ have a stable model if and only if $\TM$
does not halt on $\word$. Assume that $\Tmod$ is a stable model of $\Prog_{\TM}$ and $\D_{\word}$.
Then, using induction over $i \in \mathbb{N}$, we can prove 
that the $i$th configuration in the computation of $\TM$ on
$\word$ is encoded as discussed above.
In particular, if $q$ is the state of $\TM$ in the $i$th configuration,
then $\Tmod,t \models S_q$ for some $t \in [2i, 2i+1]$.
Since $\Tmod$ is a model of $\Prog_{\TM}$, however, it satisfies the rule $\bot \gets S_{\qhalt}$;
therefore, the state $\qhalt$ cannot occur in any configuration in the computation of $\TM$ on
$\word$, and so $\TM$ does not halt on $\word$.

For the opposite direction assume that 
$\TM$ does not halt on $\word$.
Then, we let $\Tmod$ be the minimal interpretation which satisfies 
$\D_{\word}$ 
and the following statements, for every positive integer $i$, where we let $t^{i}_j = 2i + 1+ \sum_{k=1}^j \frac{1}{2^k}$ for each $j \in \mathbb{N}$ (this definition ensures that there are infinitely many time points of the form $t_j^i$ in interval $(2i+1,2i+2)$):
\begin{itemize}[leftmargin=.3in]
\item[--] $\Tmod \models S @ [2i,2i+1]$,

\item[--] $\Tmod, t_j^i \models C_s$, whenever $s$ are the contents of the $j$th cell of $\TM$ in the  $i$th step of computation on $\word$, for $j \in \{1, \dots, |\word|+i\}$,

\item[--] $\Tmod,t \models \overline{C}$, for each $t \in (2i,2i+1) \setminus \{t^{i}_1, \dots, t^{i}_{|w|+i}\}$,

\item[--] $\Tmod, t_{|\word|+i+1}^{i} \models N$ and
$\Tmod, t' \models \overline{N}$
for each $t' \in [2i,2i+2) \setminus \{ t_{|\word|+i+1}^{i} \}$,

\item[--]  $\Tmod, t^i_j \models H$ if the head of $\TM$ is above the $j$th cell in the $i$th step of its computation on $\word$,

\item[--]  $\Tmod, t \models \overline{H}$ for each $t \in [2i,2i+2) \backslash \{t^i_j\}$, for $j$ such that the head of $\TM$ is above the $j$th cell in the $i$th step of its computation on $\word$,

\item[--]  $\Tmod, t^{i+1}_j \models L$ if the head of $\TM$ is above the $j$th cell in the $i$th step of its computation on $\word$,

\item[--] $\Tmod, t^i_j +1 \models S_q$ if $\Tmod, t^i_j \models H$ and $\TM$ is in the state $q$
in the $i+1$th step of computation on $\word$.

\end{itemize}
Clearly, $(\Tmod,\Tmod)$ is an HT-model of $\D_{\word}$, and
it can be also verified, by inspecting the rules in $\Prog_{\TM}$, that  $(\Tmod,\Tmod)$ is an HT-model of $\Prog_{\TM}$.
Finally, to show that $(\Tmod,\Tmod)$ is a stable model of $\Prog_{\TM}$ and $\D_{\word}$, we need to show that $\Tmod = \Hmod^{\Tmod}_{\Prog_{\TM},\D_{\word}}$. 
Towards this goal, we first construct
the sequence
$\Hmod_0$, $\Hmod_1$ of interpretations
as in the proof of Theorem \ref{thm::least},
for which it holds that ${\Hmod_{\omega_1} = \Hmod^{\Tmod}_{\Prog_{\TM},\D_{\word}}}$.
Then we can easily show, using transfinite induction, that for each
$i \in \mathbb{N}$ and each relational fact $M@t$ with $t \in [2i,2i+2)$,
$\Tmod \models M@t$ if and only if $\Hmod_i \models M@t$, and also that
${\Hmod_i  \models M@t}$ if and only if
$\Hmod^{\Tmod}_{\Prog_{\TM},\D_{\word}} \models M@t$,
which together imply $\Tmod = \Hmod^{\Tmod}_{\Prog_{\TM},\D_{\word}}$.
\end{proof}
 
Observe that the reduction above shows undecidability of reasoning in \MTLneg{} over the rational timeline, even if we restrict our attention to fixed propositional programs in the forward-propagating fragment, and we consider only bounded datasets. In the next section, we turn our attention to the integer timeline and show that reasoning becomes decidable.

\section{Decidability over the Integer Timeline}
\label{sec:integers}

In this section we consider the integer timeline and thus we
fix $\T = \Z$. 
We will show that, in this case, reasoning becomes decidable in
\EXPS{} with respect to data complexity;
furthermore, complexity drops to \PS{}
if we restrict our attention to
forward-propagating programs and datasets mentioning 
only bounded intervals---a setting 
well-suited for
stream reasoning \cite{WalegaAAAI,ronca2017stream}.
In this setting, the additional expressive power 
provided by stable models comes at no computational cost since
reasoning in the corresponding positive fragment
 is already \PS{}-complete~\cite{walega2020datalogmtl,DBLP:conf/ijcai/WalegaGKK19}.

In prior work on positive and stratifiable programs, 
upper bounds for reasoning
have been 
established by constructing generalised B\"{u}chi automata that
accept (words describing) models of a given program and dataset  \cite{walega2020datalogmtl,tena2021stratified}.
Checking existence of a stable model
is more demanding, as we additionally need to ensure
model minimality 
as in \Cref{def::stable}; this
 requirement is non-trivial, and we will handle it differently for the cases of arbitrary and forward-propagating programs.

In the general case (\Cref{general}), we 
construct two kinds of left- and right-moving automata:
the first kind allows us to check existence of an HT-model
of the form $(\Tmod,\Tmod)$, while the second kind allows us to 
check existence of an HT-model of the form $(\Hmod,\Tmod)$ with $\Hmod \neq \Tmod$.
Then, a pair of words $\wordL$ and $\wordR$ that are accepted, 
respectively, by a pair of left- and right-moving automata of the first kind, 
but not by any pair of left and right-moving automata of the second kind,
represents a stable model.
This construction is conceptually similar to
that of \citeN{cabalar2011automata} for  a logic with linear temporal operators and 
involves complementing nondeterministic automata, which leads to an
exponential blowup.
Consequently, we obtain an \EXPS{} upper bound and thus
an exponential gap in data complexity with respect to 
positive programs
\cite{walega2020datalogmtl}.
In the case of forward-propagating programs (\Cref{forward}) we propose a different construction exploiting
the fact that rules 
propagate information in a single temporal direction.
This allows us to build automata that guarantee model minimality without complementation.
As a result, we can establish a tight \PS{}  bound in data complexity.

\subsection{General Programs} \label{general}

It will be convenient for our presentation
to assume that programs are in a normal form analogous to that by  \citeN{tena2021stratified} for stratifiable programs:
in each normalised 
rule  the head is a relational atom or $\bot$,
there is neither nesting of metric operators nor occurrences of $\diamondminus$ or $\diamondplus$ in  rule bodies, and
the only unbounded interval allowed is
$[0, \infty)$.
\begin{proposition}\label{prop:normalisation}
Each program $\Prog$ can be normalised in polynomial time
 into a program $\Prog'$
such that, for each
dataset $\D$, program $\Prog$ and dataset $\D$ have a stable
 model if and only if so do $\Prog'$ and $\D$. 
\end{proposition}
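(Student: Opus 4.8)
The plan is to perform the normalisation as a sequence of local, meaning-preserving rewriting steps, each of which introduces fresh predicates together with rules that "define" those predicates in terms of the subexpressions they abbreviate. The guiding principle is that whenever I replace a metric subatom $M$ by a fresh relational atom $P_M$, I must add rules forcing $\Hmod^\Tmod_{\Prog',\D}$ (and $\Tmod$) to satisfy $P_M$ at exactly the time points where $M$ holds; because the minimality of $\Hmod^\Tmod_{\Prog',\D}$ is what we must preserve, the defining rules must be \emph{positive} (no negation), so that the fresh atoms are derived purely as a consequence of the predicates already present and do not interfere with the reduct-like construction underlying \Cref{def::stable}. Concretely, I would carry out the following steps, in order: (1) \emph{eliminate $\diamondminus$ and $\diamondplus$ from rule bodies} by the equivalences $\diamondminus_\varrho M \equiv \top \So_\varrho M$ and $\diamondplus_\varrho M \equiv \top \Uo_\varrho M$ already noted in the proof of \Cref{prop::HT}, or alternatively by introducing $P \gets \diamondminus_\varrho M$ style definitions; (2) \emph{flatten nested metric operators} by repeatedly picking an innermost metric subatom $M$ occurring under another operator, introducing a fresh proposition/predicate $P_M$, adding the positive rule $P_M \gets M$ (and, symmetrically where needed for the ``there'' direction, ensuring the two HT-model conditions are met — but a single positive rule already handles both since positive rules fire identically in $\Hmod$ and $\Tmod$), and substituting $P_M$ for $M$; (3) \emph{push complex heads down} so that each rule head is a relational atom or $\bot$: a head of the form $\boxminus_\varrho M$ or $\boxplus_\varrho M$ can be replaced by a fresh atom $P$ in the head together with a positive rule $\boxminus_\varrho M \gets P$ (note the direction — we want $P$ to entail the boxed consequence), and nested head operators are unfolded analogously; (4) \emph{replace each unbounded interval other than $[0,\infty)$} — that is, intervals of the form $(t,\infty)$, $[t,\infty)$ with $t \neq 0$ — using the standard trick of splitting $\diamondminus_{(t,\infty)}M$ as $\diamondminus_{(t,\infty)}M \equiv \diamondminus_{[0,\infty)}\diamondminus_{(t,t]}$-type decompositions, or more simply $\boxminus_{[t,\infty)}M \equiv \boxminus_{[0,\infty)}\boxminus_{[t,t]}$-style rewrites (over the integers $(t,\infty) = [t+1,\infty)$ etc.), then reapplying step (2) to remove the resulting nesting.

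For the correctness claim, the plan is to prove the ``only if'' and ``if'' directions of the stable-model correspondence by a routine but careful argument: given a stable model $\Tmod$ of $\Prog$ and $\D$, extend it to an interpretation $\Tmod'$ over the enlarged vocabulary by setting each fresh predicate to hold exactly where its defining subexpression holds in $\Tmod$; show $(\Tmod',\Tmod')$ is an HT-model of $\Prog'$ and $\D$ (local check, rule by rule, using \Cref{prop::HT} and the equivalences), and show $\Hmod^{\Tmod'}_{\Prog',\D} = \Tmod'$ by comparing the transfinite construction of \Cref{thm::least} for $\Prog'$ with that for $\Prog$: the sequence $\Hmod_\alpha$ for $\Prog'$ restricted to the original vocabulary tracks the sequence for $\Prog$, and the fresh predicates are forced to exactly their intended extensions by the positive defining rules because $\Hmod^{\Tmod'}_{\Prog',\D}$ is least. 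Conversely, given a stable model $\Tmod'$ of $\Prog'$ and $\D$, first argue that $\Tmod'$ must assign the fresh predicates exactly their intended values (again by leastness of the $\Hmod$-construction together with the positive defining rules forcing the fresh atoms wherever their definienda hold, and $(\Tmod',\Tmod')$ being a model forcing no more), and then restrict $\Tmod'$ to the original vocabulary and check it is a stable model of $\Prog$ and $\D$.

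I expect the main obstacle to be the head-normalisation step (3), and in particular getting the \emph{direction} of the auxiliary positive rules right so that minimality is genuinely preserved. When the original head is, say, $\boxminus_\varrho M$, we are promising that $M$ holds throughout a past window; replacing this by $P$ in the head plus a rule with $P$ in the body is correct, but we must verify that this does not allow $\Hmod^{\Tmod'}_{\Prog',\D}$ to be strictly smaller than intended or to ``lose'' derivations that were present in $\Prog$ — this requires checking that the fixpoint construction still saturates in the same way, which is delicate when the original head operator is nested or when it interacts with the unbounded-interval rewriting of step (4). A secondary subtlety is ensuring the rewrites keep the program \emph{safe} in the sense of \Cref{rule} and, if one cares, keep the forward-propagating property intact when $\Prog \in \MTLfp$ (which matters for \Cref{forward}); since $\diamondplus$, $\boxplus$, $\Uo$ must not appear in bodies and $\boxminus$ must not appear in heads, the auxiliary rules must be chosen to respect this, e.g. defining $\boxplus$-subatoms in heads only and $\boxminus$-subatoms in bodies only. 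Finally, I would remark that all rewrites are clearly polynomial — each introduces a bounded number of fresh symbols and rules per syntactic occurrence — so the overall transformation runs in polynomial time, as claimed.
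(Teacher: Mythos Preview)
Your overall strategy---a sequence of local rewrites introducing fresh predicates via positive defining rules, with correctness argued through the transfinite construction of \Cref{thm::least}---matches the paper's, and your correctness sketch is in fact more explicit than what the paper provides. However, step (3) as written does not work: replacing a head $\boxminus_\varrho M$ by a fresh $P$ and adding the rule $\boxminus_\varrho M \gets P$ leaves you with a new rule whose head is still $\boxminus_\varrho M$, so nothing has been eliminated. The paper's trick is to use the box--diamond duality: a rule $\Box^1_{\varrho_1}\cdots\Box^n_{\varrho_n} P(\sbf) \gets B$ becomes $P'(\sbf) \gets B$ together with $P(\sbf) \gets \meddiamond^1_{\varrho_1}\cdots\meddiamond^n_{\varrho_n} P'(\sbf)$, where each $\meddiamond^i$ is the diamond dual of $\Box^i$. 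This moves the temporal operator from the head into the \emph{body}, where your flattening step can then handle it. Note also that this duality sends $\boxplus$ to $\diamondminus$, which is what keeps the transformation inside $\MTLfp$.

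A second gap is your treatment of safety as a ``secondary subtlety'': for $\So$ and $\Uo$ it is not. The left operand of $\So_\varrho$ does not bind its variables for safety purposes (see \Cref{rule}), so the na\"{i}ve abbreviation $P'(\sbf_1,\sbf_2) \gets P_1(\sbf_1)\So_\varrho P_2(\sbf_2)$ is unsafe whenever $\sbf_1$ contains a variable not in $\sbf_2$. The paper handles this by a case split on whether $0$ or $1$ lies in $\varrho$ and by conjoining an auxiliary atom such as $\diamondminus_{[0,\infty)} P_1(\sbf_1)$ to restore safety; an analogous special treatment is required when eliminating unbounded intervals from $\So$ and $\Uo$. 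Without this, your normalised program is not a program in the sense of \Cref{rule}.
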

\begin{proof}
To construct $\Prog'$ we first delete all (trivial) rules having $\top$ as the head.
Then,  we
eliminate from the remaining rule heads metric
operators (i.e., boxes).
To this end, we replace each rule of the form ${ \Box^1_{\varrho_1} \dots \Box^n_{\varrho_n} P(\sbf) \gets B}$, where $n\geq 0$, each $\Box^i$ is either $\boxminus$ or $\boxplus$, and $B$ is the body of the rule,
with  rules $P'(\sbf) \gets B$ and ${P(\sbf)  \gets \meddiamond^1_{\varrho_1} \dots \meddiamond^n_{\varrho_n} P'(\sbf)}$,
 where $\meddiamond^i = \diamondplus $ if $\Box^i = \boxminus$ and otherwise $\meddiamond^i = \diamondminus$, and  
 $P'$ is a fresh predicate of the same arity as $P$.

Second, we iteratively eliminate nested temporal operators from rule bodies.
To this end, consider a rule $r$ whose body has an  occurrence $M$ of a metric atom that mentions only one temporal operator and which is in the scope of some other temporal operator.
If $M$ is of one of the forms $\diamondminus_\varrho P(\sbf)$, $\diamondplus_\varrho P(\sbf)$, $\boxminus_\varrho P(\sbf)$, or $\boxplus_\varrho P(\sbf)$, then we replace it   with $P'(\sbf)$ and add a rule $P'(\sbf) \gets M$, where $P'$ is a fresh predicate of the same arity as $P$.
If $M$ mentions $\So$ or $\Uo$, we need to proceed in a special way to ensure safety of the new rules.
If $M$ is of the form $P_1(\sbf_1) \So_{\varrho} P_2(\sbf_2)$,  we remove $r$ and proceed as follows (note that the conditions below are not exclusive):
\begin{itemize}[leftmargin=.3in]
\item[--] if $0 \in \varrho$, we add the rule obtained by replacing $M$ in $r$ with $P_2(\sbf_2)$,
\item[--] if $1 \in \varrho$, we add the rule obtained by replacing $M$ in $r$ with $\boxminus_1 P_2(\sbf_2)$,
\item[--]
we add the rule  
${P'(\sbf_1,\sbf_2) \gets P_1(\sbf_1) \So_{\varrho} P_2(\sbf_2) \wedge \diamondminus_{[0, \infty)} P_1(\sbf_1)}$
and the rule obtained by replacing $M$ in $r$ with
$P'(\sbf_1, \sbf_2)$, 
where $P'$ is a fresh predicate whose arity
equals the sum of arities of $P_1$ and $P_2$.
\end{itemize}
If $M$ is of the form $P_1(\sbf_1) \Uo_{\varrho} P_2(\sbf_2)$, we proceed analogously to the case when $M$ is of the form $P_1(\sbf_1) \So_{\varrho} P_2(\sbf_2)$, but instead of $\boxminus$ and $\diamondminus$ we use, respectively, $\boxplus$ and $\diamondplus$ in the new rules.
Moreover, if  $M$ mentions $\top$ or $\bot$, we treat these symbols as nullary predicates
and proceed as before.

Next, we eliminate diamond operators by replacing $\diamondminus_\varrho M$ and $\diamondplus_\varrho M$ with, respectively, $\top \So_\varrho M$ and $\top \Uo_\varrho M$.
Finally, we eliminate unbounded intervals $\varrho$ different from $[0,\infty)$ as follows.
If a rule $r$ mentions
$\Box_{\varrho} P(\sbf)$, we replace this $\Box_{\varrho} P(\sbf)$
 with 
$\Box_{t} P(\sbf)$ and add a rule ${P'(\sbf) \gets \Box_{[0,\infty) }P(\sbf)}$,
where $\Box$ is either $\boxminus$ or $\boxplus$, $t$ is the least natural number in $\varrho$ (so $t \geq 1$), and $P'$ is a fresh predicate of the same arity as $P$.
In the case of operators $\So$ and $\Uo$ we need to pay special attention to ensure that the new
rules are safe. 
Assume that a rule $r$ mentions
${M = P_1(\sbf_1) \So_{\varrho} P_2(\sbf_2) }$ with an unbounded $\varrho \neq [0,\infty)$.
We remove  $r$ and proceed as follows, for $t$ the least natural number belonging to $\varrho$:
\begin{itemize}[leftmargin=.3in]
\item[--] if $1 \in \varrho$, we add the rule obtained by replacing $M$ in $r$ with $\boxminus_1 P_2(\sbf_2)$,
\item[--] if $1 \not\in \varrho$, we add the rule obtained by replacing $M$ in $r$ with $ \boxminus_{ t} P_2(\sbf_2) \land  \boxminus_{(0,t)} P_1(\sbf_1)$

\item[--] we add the rule  
$P'(\sbf_1, \sbf_2)  \gets P_1(\sbf_1) \So_{[0,\infty)} P_2(\sbf_2) \wedge 
P_1(\sbf_1)$,
and the rule obtained by replacing $M$ in $r$ with
$\boxminus_{ t  } P'(\sbf_1, \sbf_2) \land  \boxminus_{(0,t]} P_1(\sbf_1)$, 
where $P'$ is a fresh predicate whose arity
equals the sum of arities of $P_1$ and $P_2$.
\end{itemize}
In the case of atoms mentioning $\Uo$, as well as $\top$ or $\bot$, we proceed analogously.
\end{proof}

In the remainder of this section,
we fix a normalised program $\Pi$ and a dataset $\D$, and let 
 $\ground{\Prog}{\D}$ be the subset of $\groundp{\Prog}$ mentioning only
 constants from $\Prog$ and $\D$. 
Then, $\mat(\Prog, \D)$ is the set consisting of all relational atoms in $\D$, all metric atoms 
in $\ground{\Prog}{\D}$, and
all metric atoms of the forms $\boxminus_{ [ 0,\infty)} \matA$
and $\boxplus_{ [ 0,\infty)} \matA$, with $\matA$ a relational atom mentioned in 
 $\ground{\Prog}{\D}$. 

We next define the notion of a \emph{window}---a fragment of 
an  HT-interpretation over a particular interval; such
windows will serve as states of our automata.

\begin{definition}\label{def::window}
A \emph{window} is a tuple $(\varrho, H,T,b)$,
where $\varrho$ is a closed (and hence bounded) interval, $b \in \{ 0,1\}$, and
$H$ and $T$ are sets of metric
facts of the form $\matA @ t$ satisfying the following conditions: 
\begin{itemize}[leftmargin=.3in]
\item[--]  $\matA  \in\mat(\Prog,\D)$, $t \in
\varrho$, and $H \subseteq T$;
\item[--]  there exist interpretations $\Hmod$ and $\Tmod$ such that, for
each
${\matA \in \mat(\Prog,\D)}$ and  $t \in \varrho$,
\begin{itemize}
\item[--] ${\matA @ t \in H}$ if and only if $\Hmod \models \matA@t $, and 

\item[--] ${\matA @ t \in T}$ if and only if $\Tmod \models \matA@t $.
\end{itemize}
\end{itemize}
The window's \emph{length}  is  the length of~$\varrho$.
Finally, we say that a window is \emph{initial} if either  $H=T$ and $b=0$, or  $H \neq T$ and $b=1$.
\end{definition}

Intuitively, a window $(\varrho, H,T,b)$ is a fragment of an HT-interpretation $(\Hmod,\Tmod)$ restricted to  $\varrho$, 
where $H$ and $T$ describe facts  
holding within $\varrho$ in $\Hmod$ and  $\Tmod$, respectively.
Windows will serve as states of the automata recognising word representations of specific HT-interpretations, and 
in this process the flag $b$ is used to distinguish between stable and non-stable models; 
in particular,
our automata will ensure that flag $b$ is set to 
$1$ in each state $\window$ of a run such that  
$H \neq T$ in either $\window$ or in some previous state of this run.

By definition, a window can be extended to an HT-in\-ter\-pretation.
This HT-interpretation can be an HT-model of $\Prog$ only if the window locally satisfies $\Prog$, which we define next.

\begin{definition}
\label{def::locallySatisfies}
A window $(\varrho, H,T,b)$  \emph{locally satisfies}  $\Prog$
if, for each rule of Form~\eqref{eq:ruleneg}
in $\groundp{\Prog}$
and each $t \in \varrho$, both of the following hold:
\begin{enumerate}[leftmargin=.3in]
\item[--] $M@ t \in H$ if  $ M_i @ t \in H$ for each $i \in \{1, \ldots, k\}$ and $M_j @ t \notin T$ for each $j \in \{k+1, \ldots, m\}$, 
	
\item[--] $M@ t \in T$  if $ M_i @ t \in T$ for each $i \in \{1, \ldots, k\}$ and $M_j @ t \notin T$ for each $j \in \{k+1, \ldots, m\}$.
\end{enumerate}
\end{definition}

Next, given an initial window $\window_0$, we define automata  $\A_{\window_0}^\leftarrow$ and $\A_{\window_0}^\rightarrow$, which
will  allow us to recognise HT-models of $\Prog$ that extend $\window_0$.
In particular, 
if $\A_{\window_0}^\leftarrow$ and $\A_{\window_0}^\rightarrow$ 
accept words $\wordL$ and $\wordR$ respectively, then we will be able to construct an HT-model extending $\window_0$, for which the part
located to the left of $\window_0$ is described by  $\wordL$, and the part to the right
of $\window_0$ by~$\wordR$.

\begin{definition} \label{def::Buchi}
Let ${ \window_0=(\varrho_0,H_0,T_0,b_0 ) }$ be an initial window locally satisfying~$\Prog$.
Then, $\A_{\window_0}^\leftarrow = (\states, \Sigma, \delta, q_0,
\acc)$ is the generalised nondeterministic B\"{u}chi automaton with the following components:
\begin{enumerate}[leftmargin=.3in]
\item the states $\states$ consist of all windows of the
the same length as $\window_0$  locally satisfying $\Prog$;

\item the alphabet $\Sigma$ is the set of all $\sigma \subseteq \mat(\Prog, {\D})$;

\item the transition function $\delta : \states \times \Sigma \to 2^\states$ is such that
${(\varrho',H',T',b') \in \delta  \big( (\varrho,H,T,b), \sigma  \big)}$
if
\begin{itemize}
\item[--] ${\varrho'=[\varrho^- -1,\varrho^+ -1]}$,

\item[--] 
$\matA @ t \in H'$ if and only if $\matA @ t \in H$, for every ${\matA \in \mat(\Prog,\D)}$ and $t\in \varrho' \cap \varrho$,

\item[--] $T' =  \{ \matA @ t' \in T \mid  t'  \in  \varrho'\} \cup  \{ \matA  @  (\varrho^{-}{-}\,1) \mid \matA  \in  \sigma  \}$, and 

\item[--]  $b'=1$ whenever $b=1$ or $H' \neq T'$,  and $b'=0$ otherwise;
\end{itemize}

\item the initial state $q_0$  is $\window_0$;

\item the accepting condition $\acc$ is the family of sets of states which contains, for every atom 
${\boxminus_{[0,\infty)}  \matA \in \mat(\Prog,\D)}$, the  sets
\begin{align*}
\{  
(\varrho,H, T,b) \in \states
& \mid \text{there exists  $t \in \varrho$ such that } 
\boxminus_{[0,\infty)}  M @ t \in H  \text{ or }  M   @ t \notin H \},
\\
\{  
(\varrho,H, T,b) \in \states
& \mid \text{there exists  $t \in \varrho$ such that } 
\boxminus_{[0,\infty)}  M @ t \in T  \text{ or }  M   @ t \notin T \},
\end{align*}

and, for each $\matA_1 \So_{[0,\infty)}  \matA_2 \in \mat(\Prog,\D)$,
the sets
\begin{align*}
\{  (\varrho,H, T,b) \in \states 
& \mid  \text{there exists  $t \in \varrho$  such that } 
 \matA_1 \So_{[0,\infty)} \matA_2 @ t  \notin H \text{ or } \matA_2 @ t \in H \},
\\
\{  (\varrho,H, T,b) \in \states 
& \mid  \text{there exists  $t \in \varrho$  such that } 
 \matA_1 \So_{[0,\infty)} \matA_2 @ t  \notin T \text{ or } \matA_2 @ t \in T \}.
\end{align*}
\end{enumerate}
The   automaton $\A_{\window_0}^\to$  
is defined analogously, except that we let ${\varrho'= [\varrho^- +1, \varrho^+ +1] }$, in the definition of $T'$ we replace $\varrho^-{-}1$ with $\varrho^+ {+} 1$, 
and in the definition
of $\acc$ we use 
$\boxplus$ and $\Uo$
instead of 
$\boxminus$ and $\So$  , respectively.
\end{definition}

Accepting runs of these automata will correspond to HT-interpretations.
Indeed, as we will show next,  each HT-interpretation can be decomposed into an infinite sequence of windows
$ \dots , \window_{-1}, \window_0, \window_1,\dots$
such that 
$\window_{0},\window_{-1}, \dots$ and
 $\window_0, \window_1,\dots$   are accepting runs of
 $\A_{\window_0}^\gets$ and
$\A_{\window_0}^\to$, respectively.
We define the decomposition of an HT-interpretation as follows.

\begin{definition}\label{def:decomposition}
We define the $\varrho$-\emph{decomposition} of an HT-interpretation $(\Hmod, \Tmod)$, for a bounded interval $\varrho$, as the sequence of tuples 
${\window_i= (\varrho_i, H_i, T_i, b_i)}$, for  $i\in\Z$, such that the following hold:
\begin{itemize}[leftmargin=.3in]
\item[--] ${\varrho_i = [\varrho^- +i, \varrho^+ +i]}$,
\item[--] $H_i$ is the set of all facts $\matA @ t$ such that $\matA \in \mat\PD$, $t \in  \varrho_i$, and
 ${\Hmod  \models  \matA @ t}$, 
\item[--] $T_i$ is the set of all facts $\matA @ t$ such that $\matA \in \mat\PD$, $t \in  \varrho_i$, and
 ${\Tmod  \models  \matA @ t}$, 
\item[--] $b_i=1$ if 
there exists $j \in \{0, \dots, i \} $ such that $H_j \neq T_j$; otherwise, $b_i=0$.
\end{itemize}
\end{definition}

\begin{lemma}\label{lem:decomp}
Let $\varrho$ be a bounded interval, let a sequence of tuples ${\window_i= (\varrho_i, H_i, T_i, b_i)}$, with $i \in \Z$, be a   $\varrho$-\emph{decomposition} of an HT-model $(\Hmod, \Tmod)$ of $\Prog$,  and
let  ${\wordL= \sigma_{-1} \sigma_{-2} \cdots }$ and  ${\wordR= \sigma_{1} \sigma_{2} \cdots}$  be the words such that 
$\sigma_k = T_k \setminus T_{k+1}$ for $k<0$
and
$\sigma_k=T_k \setminus T_{k-1} $ for ${k>0}$.
Then the following hold:
\begin{enumerate}[leftmargin=.3in]
\item each $\window_i$ is a window locally satisfying $\Prog$, 
\item $\window_0, \window_{-1}, \dots$ is an accepting run of $\A_{\window_0}^\gets$ on $\wordL$, and
\item $\window_0, \window_{1}, \dots$ is an accepting run of $\A_{\window_0}^\to$ on $\wordR$.
\end{enumerate}
\end{lemma}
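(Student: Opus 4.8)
The plan is to verify the three claims essentially by unwinding the definitions, with the only genuine work being the accepting-condition (B\"uchi) argument for the unbounded operators. I will treat $\A^{\leftarrow}_{\window_0}$ in detail; the case of $\A^{\rightarrow}_{\window_0}$ is symmetric by the mirroring described in \Cref{def::Buchi}.

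\emph{Claim 1.} For each $i$, the tuple $\window_i=(\varrho_i,H_i,T_i,b_i)$ is a window because the interpretations $\Hmod$ and $\Tmod$ themselves witness the second bullet of \Cref{def::window} (restricted to $\varrho_i$), and $H_i\subseteq T_i$ follows from \Cref{prop::HT} together with $\Hmod\subseteq\Tmod$. That $\window_i$ locally satisfies $\Prog$ is immediate from the fact that $(\Hmod,\Tmod)$ is an HT-model of $\Prog$: instantiating Conditions~1 and~2 of \Cref{def::HT} at each $t\in\varrho_i$ gives exactly Conditions~1 and~2 of \Cref{def::locallySatisfies}, since membership of a metric atom $M@t$ in $H_i$ (resp.\ $T_i$) with $M\in\mat(\Prog,\D)$ coincides with $\Hmod,t\models M$ (resp.\ $\Tmod,t\models M$) — here one uses that normalisation keeps all body and head atoms inside $\mat(\Prog,\D)$, and that $\boxminus_{[0,\infty)}M$, $\boxplus_{[0,\infty)}M$ were explicitly added to $\mat(\Prog,\D)$.

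\emph{Claims 2 and 3 — the transition part.} I would check that consecutive windows in the decomposition respect $\delta$. From $\varrho_i=[\varrho^-+i,\varrho^++i]$ we get $\varrho_{i-1}=[\varrho_i^--1,\varrho_i^+-1]$, matching the first bullet of $\delta$. The $H$-agreement bullet holds because $H_{i-1}$ and $H_i$ are both cut out of the single interpretation $\Hmod$, so they agree on $\varrho_{i-1}\cap\varrho_i$. For the $T$-bullet, on $\varrho_{i-1}\cap\varrho_i$ the set $T_{i-1}$ again agrees with $T_i$, and on the single new point $\varrho_i^--1=\varrho_{i-1}^-$ the letter read is exactly $\sigma_{i-1}=T_{i-1}\setminus T_i$, which by construction records precisely the atoms of $T_{i-1}$ at that new point; so $T_{i-1}=\{M@t\in T_i\mid t\in\varrho_{i-1}\}\cup\{M@(\varrho_i^--1)\mid M\in\sigma_{i-1}\}$, as required. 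Finally the flag bullet: $b_{i-1}=1$ iff some $H_j\neq T_j$ for $j\le i-1$, which is equivalent to ``$b_i=1$'' being inherited, or ``$H_{i-1}\neq T_{i-1}$'', matching the $b'$ clause; and $b_0$ agrees with the initial-window convention because $\window_0$ is initial. The starting state is $\window_0$ by definition, so $\window_0,\window_{-1},\dots$ is a run of $\A^{\leftarrow}_{\window_0}$ on $\wordL$ (and symmetrically on the right).

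\emph{The accepting condition — the main obstacle.} This is the one place needing a real argument: I must show the run visits each set in $\acc$ infinitely often. Fix, say, the set associated with $\boxminus_{[0,\infty)}M\in\mat(\Prog,\D)$ and the interpretation $\Tmod$ (the $\Hmod$-variant, and the $\So_{[0,\infty)}$-variants, are analogous). Suppose for contradiction that from some index on, every window $\window_i$ in the run has, for \emph{all} $t\in\varrho_i$, both $\boxminus_{[0,\infty)}M@t\notin T_i$ and $M@t\in T_i$; translating back, $\Tmod,t\models M$ for all sufficiently small $t$ yet $\Tmod,t\not\models\boxminus_{[0,\infty)}M$ for all those $t$. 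But $\Tmod,t\models M$ for all $t\le t_0$ means exactly $\Tmod,t_0\models\boxminus_{[0,\infty)}M$ — contradiction. Hence the ``good'' disjunct ($\boxminus_{[0,\infty)}M@t\in T_i$ or $M@t\notin T_i$ for some $t\in\varrho_i$) holds for arbitrarily small $i$, i.e.\ the run meets that set of states infinitely often. The $\So_{[0,\infty)}$ case uses the analogous observation that $M_1\So_{[0,\infty)}M_2@t\in T_i$ fails at every $t$ cofinally to the left only if $M_2$ never holds to the left of that region in $\Tmod$, whence $M_1\So_{[0,\infty)}M_2$ cannot hold there either — again a contradiction with the hypothesised behaviour. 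Combining all these, $\window_0,\window_{-1},\dots$ is an \emph{accepting} run of $\A^{\leftarrow}_{\window_0}$ on $\wordL$, proving~2; claim~3 is the mirror statement for $\A^{\rightarrow}_{\window_0}$, $\boxplus$, $\Uo$ and $\wordR$.
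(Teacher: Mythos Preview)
Your proposal is correct and follows essentially the same approach as the paper: verify the window and local-satisfaction conditions directly from the definitions, check the transition clauses of $\delta$ pointwise, and establish the B\"uchi accepting condition by contradiction with the semantics of $\boxminus_{[0,\infty)}$ and $\So_{[0,\infty)}$ (and their future duals). Your treatment of the transition bullets is in fact more explicit than the paper's; the only minor wrinkle is that your phrasing of the $\So_{[0,\infty)}$ contradiction is slightly garbled (the hypothesis is that $M_1\So_{[0,\infty)}M_2$ \emph{holds} while $M_2$ does not, cofinally to the left), but the intended argument is the correct one.
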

\begin{proof}
To show that Statement 1 holds, we start by
observing that each $\window_i$ satisfies all the conditions from \Cref{def::window} of a window.
Indeed, by \Cref{def:decomposition}, $H_i$ and $T_i$ are sets of metric
facts $\matA @ t$ with $\matA  \in\mat(\Prog,\D)$ and $t \in
\varrho_i$ such that
$H_i \subseteq T_i$, whereas  $b_i \in \{0,1 \}$;
moreover, $\Hmod$ and $\Tmod$ witness existence of the interpretations required in \Cref{def:decomposition}. 
Furthermore, since $(\Hmod, \Tmod)$ is an HT-model of $\Prog$,  each $\window_i$ locally satisfies $\Prog$.

To prove that Statement 2, we observe that $\window_0$ is an initial window by construction, so the automaton $\A_{\window_0}^\leftarrow = (\states, \Sigma, \delta, q_0,
\acc)$ is well-defined.
By the definition of the transition function  $\delta$ of $\A_{\window_0}^\gets$ (see \Cref{def::Buchi}) as well as by the construction of $\window_0, \window_{-1}, \dots$ and $\wordL = \sigma_{-1} \sigma_{-2} \dots$,
we get that  ${\window_{i-1} \in \delta(\window_i, \sigma_{i-1})}$, for each  integer $i \leq 0$.
Thus, $\window_0, \window_{-1}, \dots$ is a run of $\A_{\window_0}^\gets$ on $\wordL$.
It remains to show that this run is accepting, that is, 
for every set  $S$ in the accepting condition $\acc$, there are infinitely many  integers $i < 0$ such that $\window_i \in S$.
Towards a contradiction suppose that there exists $S \in \acc$ which does not satisfy this property. Thus, there exists  $i \leq 0$ such that $\window_j \notin S$, for all $j \leq i$.
Assume first that $S$ is of one of the following forms
\begin{align*}
\{  
(\varrho',H, T,b) \in \states
& \mid \text{there exists  $t \in \varrho'$ such that } 
\boxminus_{[0,\infty)}  M @ t \in H  \text{ or }  M   @ t \notin H \}, \text{ or}
\\
\{  
(\varrho',H, T,b) \in \states
& \mid \text{there exists  $t \in \varrho'$ such that } 
\boxminus_{[0,\infty)}  M @ t \in T  \text{ or }  M   @ t \notin T \},
\end{align*}
for some ${\boxminus_{[0,\infty)}  \matA \in \mat(\Prog,\D)}$.
As a result,  $\I \not\models \boxminus_{[0,\infty)}  M @ t $ and  $\I \models M @ t $,
for each $t \leq \varrho_i^-$, where  the interpretation $\I$ is either $\Hmod$ or $\Tmod$, depending whether $S$ is of the first or the second form presented above.
This, however, contradicts the semantics of $\boxminus_{[0,\infty)}$.
It remains to consider the case where $S$ is of one of the following forms 
\begin{align*}
\{  (\varrho',H, T,b) \in \states 
& \mid  \text{there exists  $t \in \varrho'$  such that } 
 \matA_1 \So_{[0,\infty)} \matA_2 @ t  \notin H \text{ or } \matA_2 @ t \in H \}, \text{ or}
\\
\{  (\varrho',H, T,b) \in \states 
& \mid  \text{there exists  $t \in \varrho'$  such that } 
 \matA_1 \So_{[0,\infty)} \matA_2 @ t  \notin T \text{ or } \matA_2 @ t \in T \},
\end{align*}
for some $\matA_1 \So_{[0,\infty)}  \matA_2 \in \mat(\Prog,\D)$.
Then, 
$\I \models \matA_1 \So_{[0,\infty)} \matA_2 @ t $ and  ${\I \not\models \matA_2 @ t  }$,
for each $t \leq \varrho_i^-$, where
$\I$ is either $\Hmod$ or $\Tmod$, depending whether $S$ is of the first or the second form presented above.
This directly contradicts the semantics of $\So_{[0,\infty)}$.
Consequently, $\window_0, \window_{-1}, \dots$ needs to be an accepting run of $\A_{\window_0}^\gets$ on $\wordL$.

The proof of Statement 3 is analogous to the proof of Statement 2, due to the symmetry between the automata $\A_{\window_0}^\gets$ and $\A_{\window_0}^\to$.
\end{proof}

To check existence of a stable model, however, we require automata 
that recognise HT-models $(\Hmod,\Tmod)$ with $\Hmod=\Tmod$, and automata 
that  recognise HT-models $(\Hmod,\Tmod)$ with $\Hmod \subsetneq \Tmod$.
The intersection of the former with the complement of the latter allows us to recognise stable models---that is, essentially, 
HT-models $(\Tmod,\Tmod)$ for  
which there are no models $(\Hmod,\Tmod)$ with $\Hmod \subsetneq \Tmod$.
To this end, we define two more types of automata as follows:

\begin{definition}\label{defBC}
Let $\window_0=(\varrho_0,H_0,T_0,b_0)$ be an initial window locally satisfying~$\Prog$.
We define non-deterministic generalised B\"{u}chi automata $\B_{\window_0}^\leftarrow$, $\B_{\window_0}^\rightarrow$ and
$\C_{\window_0}^\leftarrow$, $\C_{\window_0}^\rightarrow$ as follows:
\begin{enumerate}[leftmargin=.3in]
\item[--] if $H_0 = T_0$ and  $b_0 = 0$, the automata $\B_{\window_0}^\leftarrow$ and $\B_{\window_0}^\rightarrow$
are defined as $\A_{\window_0}^\leftarrow$ and $\A_{\window_0}^\rightarrow$, respectively,
except that for a window $(\varrho,H,T,b)$ to be a state we additionally require that $H=T$,

\item[--] the automata $\C_{\window_0}^\leftarrow$ and $\C_{\window_0}^\rightarrow$
are defined as $\A_{\window_0}^\leftarrow$ and $\A_{\window_0}^\rightarrow$, respectively,
except that we add to the accepting condition the set 
$\{ (\varrho,H, T,b) \in \states  \mid b=1\}$.
\end{enumerate}
\end{definition}

Intuitively, if $\window_0$ satisfies $\D$, then the automata $\B_{\window_0}^\gets$ and $\B_{\window_0}^\to$ 
recognise interpretations $\Tmod$ such that
 $(\Tmod,\Tmod)$ is an HT-model of   $\Prog$ and $\D$.
Furthermore, interpretations $\Tmod$ accepted by
 $\A_{\window_0}^\gets$ and $\C_{\window_0}^\to$, 
or by  $\C_{\window_0}^\gets$ and $\A_{\window_0}^\to$
are such that $(\Hmod,\Tmod)$ is an HT-model of $\Prog$ and $\D$, for some  $\Hmod \subsetneq \Tmod$.
Hence, as we show  next, we can use these automata to recognise stable models.
This, however, requires the additional assumption
that the  
windows in the automata are long enough to allow for capturing the semantics of metric operators occurring in a program.
To this end, we will use windows of the same length as the interval 
${\varrho_{\PD} = [t^{\min}_{\D},t^{\max}_{\D}+t_{\Prog}]}$.
If the length of an initial window $\window_0$ is as required, then we can reduce checking existence of a stable model to checking particular properties of our automata, as stated next.

\begin{theorem} \label{stable_reduction}
Program $\Prog$ and dataset $\D$ have a stable model 
if and only if 
there exists an initial window 
${\window_0=(\varrho_0,T_0, T_0, 0)}$ locally satisfying $\Prog$ with ${\varrho_0 = \varrho_{\PD}}$ and
$T_0 \models \D$, and there exist words $\wordL$ and $\wordR$ over $2^{\mat(\Prog, {\D})}$ such that
both of the following hold:
\begin{enumerate}[leftmargin=.3in]
\item $\wordL$ and $\wordR$ are accepted by   $\B_{\window_0}^\gets$ and $\B_{\window_0}^\to$, respectively, 

\item there is no initial window 
${\window'_0=(\varrho_0,H_0, T_0, b_0)}$ locally satisfying $\Prog$ such that $H_0 \models \D$,  
 and 
$\wordL$ and $\wordR$ are accepted either
by $\C_{\window'_0}^\gets$ and $\A_{\window'_0}^\to$, respectively, or by 
$\A_{\window'_0}^\gets$ and $\C_{\window'_0}^\to$, respectively.
\end{enumerate}
\end{theorem}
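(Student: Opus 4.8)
The plan is to reduce the existence of a stable model to a property of pairs of accepting runs of the automata of \Cref{def::Buchi} and \Cref{defBC}, using \Cref{lem:decomp} in one direction and a soundness counterpart of it in the other. The first step is to prove that counterpart: for an initial window $\window_0=(\varrho_{\PD},H_0,T_0,b_0)$ locally satisfying $\Prog$, every pair consisting of an accepting run $\window_0,\window_{-1},\dots$ of $\A_{\window_0}^\gets$ on a word $\wordL$ and an accepting run $\window_0,\window_{1},\dots$ of $\A_{\window_0}^\to$ on a word $\wordR$ is the $\varrho_{\PD}$-decomposition of some HT-model $(\Hmod,\Tmod)$ of $\Prog$, with $\Hmod\models\D$ iff $H_0\models\D$ and $\Tmod\models\D$ iff $T_0\models\D$. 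The interpretation $(\Hmod,\Tmod)$ is assembled from the $H$- and $T$-components along the bi-infinite run; local satisfaction of $\Prog$ by each state (\Cref{def::locallySatisfies}), together with $\varrho_{\PD}$ being long enough to accommodate the reach of the bounded metric operators of the normalised $\Prog$ and with the atoms $\boxminus_{[0,\infty)}\matA$, $\boxplus_{[0,\infty)}\matA$ recorded in the windows, yields the rule conditions of \Cref{def::HT} at every time point, while the generalised B\"uchi acceptance conditions of \Cref{def::Buchi} enforce the semantics of the unbounded operators $\boxminus_{[0,\infty)}$, $\boxplus_{[0,\infty)}$, $\So_{[0,\infty)}$, $\Uo_{[0,\infty)}$. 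By \Cref{defBC}, accepting runs of the $\B$- and $\C$-automata are in particular accepting runs of the corresponding $\A$-automata, so this correspondence specialises to them: for $\B$ the reconstructed model has the form $(\Tmod,\Tmod)$ because all states satisfy $H=T$, and for $\C$ the extra acceptance set forces the flag $b$ to be $1$ at all but finitely many states of the run, which---since $b$ is non-decreasing along a run---amounts to $H\neq T$ either inside $\window_0$ or at some state of the run.

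For the forward direction, let $\Tmod$ be a stable model of $\Prog$ and $\D$, so $(\Tmod,\Tmod)$ is an HT-model of $\Prog$ and $\D$ and $\Hmod^{\Tmod}_{\Prog,\D}=\Tmod$. Take the $\varrho_{\PD}$-decomposition $(\window_i)_{i\in\Z}$ of $(\Tmod,\Tmod)$ (\Cref{def:decomposition}); since $\Hmod=\Tmod$, every $H_i$ equals $T_i$ and every $b_i=0$, so $\window_0=(\varrho_{\PD},T_0,T_0,0)$ is an initial window, it locally satisfies $\Prog$ by \Cref{lem:decomp}, and $T_0\models\D$. Set $\wordL,\wordR$ to be the words obtained from this decomposition as in \Cref{lem:decomp}; that lemma gives that $\window_0,\window_{-1},\dots$ and $\window_0,\window_{1},\dots$ are accepting runs of $\A_{\window_0}^\gets$ and $\A_{\window_0}^\to$, and since every state has $H=T$ they are also accepting runs of $\B_{\window_0}^\gets$ and $\B_{\window_0}^\to$, establishing Condition~1. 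For Condition~2, suppose for contradiction that there is an initial window $\window'_0=(\varrho_{\PD},H_0,T_0,b_0)$ locally satisfying $\Prog$ with $H_0\models\D$ such that, say, $\wordL$ is accepted by $\C_{\window'_0}^\gets$ and $\wordR$ by $\A_{\window'_0}^\to$ (the other case is symmetric). By the soundness counterpart, the associated runs form the $\varrho_{\PD}$-decomposition of an HT-model $(\Hmod',\Tmod')$ of $\Prog$; it is an HT-model of $\D$ since $H_0\models\D$; and because the $T$-components of these runs, together with $T_0$, are forced to coincide with those of the decomposition of $\Tmod$, we get $\Tmod'=\Tmod$. The extra acceptance set of $\C_{\window'_0}^\gets$ forces $b=1$ cofinitely often on the left run, so either $b_0=1$, whence $H_0\neq T_0$ because $\window'_0$ is initial, or $H_{-j}\neq T_{-j}$ for some $j>0$; either way $\Hmod'\neq\Tmod$. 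Thus $(\Hmod',\Tmod)$ is an HT-model of $\Prog$ and $\D$ with $\Hmod'\subsetneq\Tmod$, so $\Hmod^{\Tmod}_{\Prog,\D}\subseteq\Hmod'\subsetneq\Tmod$, contradicting $\Hmod^{\Tmod}_{\Prog,\D}=\Tmod$.

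For the backward direction, suppose $\window_0=(\varrho_{\PD},T_0,T_0,0)$ and $\wordL,\wordR$ satisfy Conditions~1 and~2. By Condition~1 and the soundness counterpart applied to the accepting runs of $\B_{\window_0}^\gets$ and $\B_{\window_0}^\to$---all of whose states have $H=T$---there is an HT-model $(\Tmod,\Tmod)$ of $\Prog$ whose $\varrho_{\PD}$-decomposition is those runs; since $T_0\models\D$ we get $\Tmod\models\D$, so $(\Tmod,\Tmod)$ is an HT-model of $\Prog$ and $\D$ and, by \Cref{thm::least}, $\Hmod^{\Tmod}_{\Prog,\D}$ is defined and contained in $\Tmod$. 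Assume for contradiction that $\Hmod^{\Tmod}_{\Prog,\D}\subsetneq\Tmod$; writing $\Hmod$ for $\Hmod^{\Tmod}_{\Prog,\D}$, the pair $(\Hmod,\Tmod)$ is an HT-model of $\Prog$ and $\D$ with $\Hmod\subsetneq\Tmod$. Its $\varrho_{\PD}$-decomposition has the same $T$-components as the decomposition above, hence its two runs read exactly $\wordL$ and $\wordR$; its central window $\window'_0=(\varrho_{\PD},H'_0,T_0,b'_0)$ is an initial window (by \Cref{def:decomposition}) with $H'_0\models\D$; and since $\Hmod$ and $\Tmod$ disagree at some time point, that point lies in $\varrho_{\PD}$ (so $b'_0=1$), or to its left (so $b$ becomes $1$ along the left run), or to its right (so $b$ becomes $1$ along the right run). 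In the first case both $\C_{\window'_0}^\gets$ and $\C_{\window'_0}^\to$ accept $\wordL$ and $\wordR$; in the second, $\C_{\window'_0}^\gets$ accepts $\wordL$ and $\A_{\window'_0}^\to$ accepts $\wordR$; in the third, symmetrically. Each case contradicts Condition~2, so $\Hmod^{\Tmod}_{\Prog,\D}=\Tmod$ and $\Tmod$ is a stable model of $\Prog$ and $\D$ by \Cref{def::stable}.

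The step I expect to be the main obstacle is the soundness counterpart of \Cref{lem:decomp}: one has to verify that the interpretation reassembled from a bi-infinite accepting run genuinely satisfies the metric semantics, the delicate point being the unbounded operators, for which the generalised B\"uchi acceptance conditions of \Cref{def::Buchi} must be shown to be both necessary and sufficient. Beyond that, two bookkeeping points need care: that in both directions the interpretation $\Tmod$ reconstructed from $\wordL$, $\wordR$, and $T_0$ is forced to be one and the same interpretation, so that Condition~2 faithfully expresses ``there is no HT-model $(\Hmod,\Tmod)$ of $\Prog$ and $\D$ with $\Hmod\subsetneq\Tmod$''; and that a disagreement between $\Hmod$ and $\Tmod$ is always detected either by $b_0=1$, when it occurs inside $\window_0$, or by $b=1$ cofinitely often in whichever of the two directional runs passes through a time point of disagreement.
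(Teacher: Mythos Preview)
Your proposal is correct and follows essentially the same approach as the paper: both directions proceed by passing between HT-models and pairs of accepting runs via the $\varrho_{\PD}$-decomposition, using \Cref{lem:decomp} in one direction and its ``soundness counterpart'' (which the paper carries out inline, reconstructing $\Tmod$ and $\Hmod$ as least models of the relational facts in $\bigcup_i T_i$ and $\bigcup_i H_i$ and invoking the window-length argument from prior work) in the other, and then exploiting the determinism of the $T$-component along transitions to identify $\Tmod'$ with $\Tmod$. Your identification of the soundness counterpart as the main technical obstacle matches what the paper actually does, and your bookkeeping for the flag $b$ is handled the same way.
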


\begin{proof}
Assume that $\Tmod$ is a stable model of $\Prog$ and $\D$.
We will show how to construct the required $\window_0$,  $\wordL$, and $\wordR$.
To this end, let 
$\dots, \window_{-1}, \window_0, \window_1, \dots$
be the $\varrho_{\PD}$-decomposition of $(\Tmod, \Tmod)$ with ${\window_i= (\varrho_i, H_i, T_i, b_i)}$, for each $i\in\Z$.
By Definition \ref{def:decomposition}, we obtain that $H_i=T_i$ and $b_i=0$, for all $i \in \Z$; furthermore, $\window_0$ locally satisfies $\Prog$ by Lemma \ref{lem:decomp}.
Finally, we have ${\varrho_0 = \varrho_{\PD}}$, which ensures
$T_0 \models \D$, so $\window_0$ satisfies the initial requirements from the theorem.
Next, let ${\wordL= \sigma_{-1} \sigma_{-2} \cdots}$ and $\wordR= \sigma_{1} \sigma_{2} \cdots$ be the words such that 
$\sigma_k=T_k \setminus T_{k+1}$ if $k<0$,
and $\sigma_k=T_k \setminus T_{k-1} $ if ${k>0}$.
It remains to show that $\window_0$, $\wordL$, and $\wordR$ satisfy 
 Conditions 1 and 2 from the theorem.
 
To show that Condition 1 holds,
we observe that, by \Cref{lem:decomp}, 
$\window_0, \window_{-1}, \dots$ is an accepting run of
$\A_{\window_0}^\gets$
on $\wordL$,
and
$\window_0, \window_{1},  \dots$ is an accepting run of
$\A_{\window_0}^\to$ on $\wordR$.
Moreover, since $H_i=T_i$ and $b_i = 0$ for all $i \in \Z$,
we obtain that $\window_0, \window_{-1}, \dots$ is an accepting run of
$\B_{\window_0}^\gets$
on $\wordL$,
and
$\window_0, \window_{1},  \dots$ is an accepting run of
$\B_{\window_0}^\to$ on $\wordR$.

Next, let us  suppose towards a contradiction that there exists ${\window_0' =(\varrho_0',H_0', T_0', b_0') }$
witnessing a violation of Condition 2.
Hence, $\window_0'$ is an initial window locally satisfying $\Prog$ such that $\varrho_0'=\varrho_0$, $T_0' = T_0$, and $H_0' \models \D$.
Moreover, we assume  that $\wordL$ and $\wordR$ are accepted 
by $\C_{\window'_0}^\gets$ and $\A_{\window'_0}^\to$, respectively.
Hence,  $\C_{\window'_0}^\gets$ has an accepting run $\window'_0, \window'_{-1}, \dots$ on $\wordL$
and
$\A_{\window'_0}^\to$ has an accepting run $\window'_0, \window'_{1}, \dots$ on $\wordR$, 
where we let ${\window'_i= (\varrho_i', H_i', T_i', b_i')}$, for all $i \in \Z$.
Clearly, $\varrho_i' = \varrho_i$, for all $i \in \Z$.
Moreover, by the definition of the transition functions of the automata and the construction of $\wordL$ and $\wordR$, we obtain that $T_i' = T_i$, for all $i \in \Z$.
Therefore,   $\Tmod$ is the least model of all relational facts in   $\bigcup_{i\in \Z}T'_i$.
We let $\Hmod$ be the least model of all relational facts in   $\bigcup_{i\in \Z}H_i$; we will show that 
$(\Hmod, \Tmod)$ is an HT-model of $\Prog$ and $\D$.

Since $H_i' \subseteq T_i'$, for all $i \in \Z$, we obtain that $\Hmod \subseteq \Tmod$, and so $(\Hmod, \Tmod)$  is an HT-interpretation.
Moreover, $(\Hmod, \Tmod)$ is an HT-model of $\D$, as $H_0' \models \D$.
Next we will show that $(\Hmod, \Tmod)$ is an HT-model of $\Prog$.
Since each $\window_i'$ is a window of length  $[t^{\min}_{\D},t^{\max}_{\D}+t_{\Prog}]$ (which is the length of $\varrho_{\PD}$), it holds that 
$\Hmod \models M@t $ if and only if $M@t \in H_i'$;
as well as
$\Tmod \models M@t $ if and only if $M@t \in T_i'$,
for any $M \in \mat(\Prog,\D)$ and $t \in \varrho_i'$.
Indeed, we have shown an analogous statement  for positive~\cite[Lemma 9]{DBLP:conf/ijcai/WalegaGKK19} and stratifiable programs~\cite{tena2021stratified}, and the same argument applies here.
Now, to show that $(\Hmod,\Tmod)$ is a model of $\Prog$, we fix a ground rule from $\mathsf{ground}(\Prog,\D)$  of Form~\eqref{eq:ruleneg} and a time point
$t$.
If 
$\Hmod ,t \models M_i$ for all ${i \in \{1, \ldots, k\} }$ and ${\Tmod ,t \not\models  M_j}$ for all ${j \in \{k+1, \ldots, m\} }$, 
then $M_i@t \in H_n'$ for all $i \in \{1, \ldots, k\}$ and $M_j@t \notin T_n$ for all $j \in \{k+1, \ldots, m\}$, for each $n$ such that $t \in \varrho_n'$.
Since $\window_n'$ is a window locally satisfying $\Prog$,
we obtain  that $M@t \in H_n$ (where $M$ is the head of $r$),
which implies $\Hmod ,t \models M$ by definition of $\Hmod$ and the fact
that $\Prog$ is in normal form so $M$ is a relational atom.
Similarly, if $\Tmod ,t \models M_i$ for all $i \in \{1, \ldots, k\}$ and ${\Tmod ,t \not\models  M_j}$ for all $j \in \{k+1, \ldots, m\}$, then we obtain that $\Tmod ,t \models M$.
Hence, $(\Hmod,\Tmod)$ is indeed an HT-model of $\Prog$.

However,  the  accepting condition of $\C_{\window'_0}^\gets$ guarantees that $b'_i=1$, for some $i \leq 0$, and so ${H_i' \subsetneq T_i' }$.
Therefore, $\Hmod \subsetneq \Tmod$, and so $\Tmod$ is not a stable model, which rises a contradiction.
If $\wordL$ and $\wordR$ are accepted 
by 
$\A_{\window'_0}^\gets$ and $\C_{\window'_0}^\to$, respectively, then we  construct in an analogous way a run $\window'_0, \window'_{-1}, \dots$ of $\A_{\window'_0}^\gets$ and a run 
$\window'_0, \window'_{1}, \dots$ of 
$\C_{\window'_0}^\to$.
Repeating the argumentation above, we construct an HT-model $(\Hmod,\Tmod)$ of $\Prog$ and $\D$, and then, we show that there exists $i\geq 0$ such that $b'_i=1$.
Thus, $H_i' \subsetneq T_i'$, so
$\Hmod \subsetneq \Tmod$, and consequently, $\Tmod$ is not a stable model, raising again a contradiction.
Thus, Condition 2 holds.

For the converse implication, assume that there exist $\window_0$, $\wordL$,
and $\wordR$ as described in the statement of the theorem. By Condition 1,
there is an accepting run $\window_0, \window_{-1}, \dots$ 
of $\B_{\window_0}^\gets$ on $\wordL$, and an accepting run $\window_0, \window_1, \dots$ of $\B_{\window_0}^\to$ on $\wordR$,  where ${\window_i= (\varrho_i, T_i, T_i, 0)}$.
We argue that the least model $\Tmod$ of relational facts in $\bigcup_{i\in\Z} T_i$ is a stable model of $\Prog$ and $\D$.
By an argument analogous to the second to last paragraph above,
$(\Tmod,\Tmod)$ is an HT-model of $\Prog$ and $\D$.
Suppose for contradiction that $\Tmod$ is not stable, so there exists
$\Hmod \subsetneq \Tmod$ such that
 $(\Hmod, \Tmod)$ is an HT-model of $\Prog$ and $\D$.
Let $\dots, \window'_{-1}, \window'_0, \window'_1, \dots$ be the 
$\varrho_{\PD}$-decomposition of $(\Hmod,\Tmod)$, with
${\window_i'= (\varrho_i', H_i', T_i', b_i')}$, and observe that
$\window'_0$ is an initial window.
By Lemma \ref{lem:decomp},
$\window'_0, \window'_{-1}, \dots$ is an accepting run of $\A_{\window'_0}^\gets$ on $\wordL$, and  
$\window'_0, \window'_{1}, \dots$ is an accepting run of $\A_{\window'_0}^\to$ on $\wordR$.
Moreover, since $\Hmod \subsetneq \Tmod$,
 there is $i \in \Z$ such that $H_i' \neq T_i'$, and so $b_i'=1$.
If $i \leq 0$, then $b_j' =1$ for all $j \leq i$, and so $\C_{\window'_0}^\gets$ accepts $\wordL$; analogously, if $i \geq 0$, then $\C_{\window'_0}^\to$ accepts $\wordR$.
Thus, Condition~2 does not hold, leading to a contradiction.
\end{proof}

\Cref{stable_reduction}  reduces checking
existence of a stable model  
to checking specific properties of our automata.
We aim at showing that the latter is feasible in \EXPS{}.
The main obstacle, however, is the size of states in
the automata: $\window_0$ is exponential in size with 
respect to $\D$,
and states of the automata from \Cref{stable_reduction} can be arbitrarily large 
since time points in windows can be arbitrary integers.
To remedy the first issue, we 
let $t_\Prog$ be the largest positive number mentioned in $\Prog$,
and we let $t_{\Prog}=1$ if $\Prog$ mentions no positive numbers---this choice of value is arbitrary since 
in this case we only need $t_\Prog$ to be a positive number in the timeline. Then,
we show in \Cref{short} that it suffices to consider automata with
states (i.e., windows) of length $t_\Prog$, which does not depend on $\D$.
The second issue is addressed by \Cref{lem:equiv_auto} which tells us that, 
rather than
considering automata with states of unbounded size (each of length ${t_\Prog}$), we can construct equivalent automata 
with polynomial-size states instead.

To state \Cref{short}, we define 
the left-most and right-most fragments of length $t_\Prog$ of a window as follows.

\begin{definition}\label{def:LR}
For a window ${\window=(\varrho,H,T,b)}$ of length at least $t_{\Prog}$,
we define $\window^{L}=( \varrho^L, H^L,T^L, b^L )$, where
\begin{itemize}[leftmargin=.3in]
\item[--]
  ${\varrho^L=[\varrho^-,\varrho^- + t_\Prog]}$,

\item[--] ${H^L = \{M@t \in  H \mid t \in \varrho^L \} }$,
\item[--]
${T^L = \{M@t \in  T \mid t \in \varrho^L \} }$,
\item[--] $b^L = 1$ if ${H^L \neq T^L}$; otherwise, $b^L = 0$.
\end{itemize}
Analogously, we let 
$\window^{R}=( \varrho^R, H^R,T^R, b^R )$, where 
\begin{itemize}[leftmargin=.3in]
\item[--]  ${\varrho^R=[\varrho^+ - t_\Prog, \varrho^+] }$,
\item[--] ${H^R = \{M@t \in  H \mid t \in \varrho^R \} }$,
\item[--] ${T^R = \{M@t \in  T \mid t \in \varrho^R \} }$,
\item[--] $b^R = 1$ if  ${H^R \neq T^R}$; otherwise, $b^R = 0$.
\end{itemize}
\end{definition}

We observe that if $\window$ is a window locally satisfying $\Prog$, then
$\window^L$ and $\window^R$ (which are ``fragments'' of
$\window$)
also are windows locally satisfying $\Prog$.
Furthermore, both $\window^L$ and $\window^R$  are initial windows by definition.
Thus, if $\window_0$ is such that
$X_{\window_0}^\gets$ and $X_{\window_0}^\to$ are well-defined automata, for some
$X \in \{ \A,\B,\C \}$,
then
$X_{\window_0^{L}}^\gets$  and $X_{\window_0^{R}}^\to$ are also well-defined automata.
Moreover, we can show several equivalences between these automata.

\begin{lemma}\label{short}
Let $X \in \{ \A,\B,\C \}$
and let $\window_0 = (\varrho_0,H_0,T_0,b_0)$
be a window  such that $X_{\window_0}^\gets$ and $X_{\window_0}^\to$ are well-defined. Then, the following statements hold:
\begin{enumerate}[leftmargin=.3in]
\item 
If $X  \in  \{\A, \B\}$, then 
$X_{\window_0}^\gets$ and $X_{\window_0}^\to$ are equivalent to
$X_{\window_0^{L}}^\gets$  and $X_{\window_0^{R}}^\to$, respectively.

\item 
If $X   = \C$ and $H_0 = T_0$, then $X_{\window_0}^\gets$ and $X_{\window_0}^\to$ are equivalent to
$X_{\window_0^{L}}^\gets$  and $X_{\window_0^{R}}^\to$, respectively.

\item  If $X   = \C$ and $H_0 \neq T_0$, then 
$X_{\window_0}^\gets$ and 
$X_{\window_0}^\to$
are equivalent to
$\A_{\window_0^{L}}^\gets$ and $\A_{\window_0^{R}}^\to$, respectively. 
\end{enumerate}
\end{lemma}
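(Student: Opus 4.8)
The plan is to reduce all three statements to Statement~1 for $X=\A$ and then to prove that case by a run correspondence between $\A_{\window_0}^\gets$ and $\A_{\window_0^L}^\gets$ (and, symmetrically, between $\A_{\window_0}^\to$ and $\A_{\window_0^R}^\to$). For the reductions: if $X=\C$ and $H_0\neq T_0$, then $\window_0$ is initial with $b_0=1$, and since the flag $b$ never decreases along a transition, every state reachable from $\window_0$ in $\C_{\window_0}^\gets$ satisfies $b=1$; hence the extra accepting set $\{(\varrho,H,T,b)\in\states\mid b=1\}$ is redundant and $\C_{\window_0}^\gets$, $\C_{\window_0}^\to$ accept the same languages as $\A_{\window_0}^\gets$, $\A_{\window_0}^\to$, reducing Statement~3 to Statement~1 for $\A$. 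If $X=\B$ then, by \Cref{defBC}, $\B_{\window_0}^\gets$ is just $\A_{\window_0}^\gets$ restricted to windows with $H=T$, an invariant preserved both by the projection $\window\mapsto\window^L$ and by the lifting used below, so Statement~1 for $\B$ is Statement~1 for $\A$ carried out inside that sub-automaton. Finally, for Statement~2 we have $X=\C$ and $H_0=T_0$: here the only difference from $\A$ is that an accepting run must also reach a window with $H\neq T$, and I will observe that the run correspondence preserves whether this happens, so Statement~2 reduces to Statement~1 for $\A$ plus this extra bookkeeping.

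For Statement~1 with $X=\A$ I would prove the two language inclusions separately. For $\mathcal{L}(\A_{\window_0}^\gets)\subseteq\mathcal{L}(\A_{\window_0^L}^\gets)$, take an accepting run $\window_0,\window_{-1},\dots$ of $\A_{\window_0}^\gets$ on a word $\wordL=\sigma_{-1}\sigma_{-2}\cdots$; I claim its pointwise projection $\window_0^L,\window_{-1}^L,\dots$ is an accepting run of $\A_{\window_0^L}^\gets$ on the same $\wordL$. That it is a run follows from \Cref{def::Buchi} and \Cref{def:LR}: a long window and its left fragment share the left endpoint, the time point newly exposed by a step ($\varrho_i^- -1$) and the letter $\sigma_{i-1}$ fixing its $T$-content are the same in both automata, restriction to the left fragment takes transitions of $\A_{\window_0}^\gets$ to transitions of $\A_{\window_0^L}^\gets$ (the $H$-parts still agree on the overlap, and the $T$-part and the flag are updated exactly as the short automaton prescribes), and $\window_i^L$ locally satisfies $\Prog$ because $\window_i$ does, as noted just before the lemma. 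That the projected run remains accepting is a sweeping argument: each set in the acceptance family asks for a time point $t$ in the current window whose membership data for a single $\boxminus_{[0,\infty)}$- or $\So_{[0,\infty)}$-atom has a prescribed shape; along a $\gets$-run the windows slide left by one per step, so all sufficiently-far-left time points are eventually covered by the short fragments too, and the membership data at any time point is consistent across all windows of the run that contain it; hence the set of indices witnessing a given accepting set is cofinite for the projection if and only if it is for the original run.

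For the converse inclusion $\mathcal{L}(\A_{\window_0^L}^\gets)\subseteq\mathcal{L}(\A_{\window_0}^\gets)$ I would \emph{lift} an accepting run $\window_0^L,\window_{-1}',\dots$ of $\A_{\window_0^L}^\gets$ on $\wordL$ to a run $\window_0,\window_{-1}'',\dots$ of $\A_{\window_0}^\gets$ on $\wordL$: along the short run the fragments $\varrho_j^L$ with $j\le0$ cover all time points $t\le\varrho_0^-+t_\Prog$ and there the run induces a globally consistent content, which I let each $\window_i''$ adopt, while on the remaining points of $\varrho_i$ (which all lie in $\varrho_0$) I let $\window_i''$ carry the content of $\window_0$; the two prescriptions coincide on $\varrho_0^L$, so this is well defined and $\window_0''=\window_0$. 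The step I expect to be the main obstacle is verifying that each $\window_i''$ is a window locally satisfying $\Prog$ and that $\window_i''\in\delta(\window_{i+1}'',\sigma_i)$. For this I would show that, because $t_\Prog$ bounds the length of every bounded interval occurring in $\mat(\Prog,\D)$, being a window locally satisfying $\Prog$ is a local property of $(\varrho,H,T)$: the conditions tying together the membership data of a bounded metric atom and of its argument relate only time points at distance at most $t_\Prog$, while the membership data of a $\boxminus_{[0,\infty)}$- or $\So_{[0,\infty)}$-atom obeys a distance-$1$ recursion (for instance $\boxminus_{[0,\infty)} A @ (s{+}1)$ belongs to $H$ iff $\boxminus_{[0,\infty)} A @ s$ and $A @ (s{+}1)$ do); so a tuple over an interval of length at least $t_\Prog$ is a window locally satisfying $\Prog$ precisely when all of its length-$t_\Prog$ sub-windows are, and gluing the realising interpretations of the overlapping sub-windows into one over all of $\varrho_i$ works because those distance-$1$ recursions force the only genuinely non-local constraints, namely those at the boundary, to be mutually consistent. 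Consequently the $\window_i''$ are valid and joined by the correct transitions, and a sweeping argument (now in the easy direction, since $\window_i''$ restricted to $\varrho_i^L$ is exactly $\window_i'$) shows the lifted run is accepting. The $\to$ statements follow by the mirror construction with right fragments, and together with the reductions of the first paragraph this proves Statements~1--3.
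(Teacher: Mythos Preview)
Your overall strategy matches the paper's: reduce Statements~2 and~3 to Statement~1 for $\A$, and prove that case by a projection/lifting correspondence, with the lifting step relying on a locality argument (the paper packages this as \Cref{claim:union} about unions of overlapping windows). The reductions you sketch for $\B$ and for $\C$ with $H_0\neq T_0$ are exactly the paper's.

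There is one genuine technical gap. In the projection direction you claim that $\window_0^L,\window_{-1}^L,\dots$ is a run of $\A_{\window_0^L}^\gets$, and in particular that ``the flag is updated exactly as the short automaton prescribes.'' This is false with the flag from \Cref{def:LR}: there $b_i^L=1$ iff $H_i^L\neq T_i^L$, a purely local condition, whereas the transition function of \Cref{def::Buchi} forces $b$ to be cumulative ($b'=1$ whenever $b=1$). So if the long run has a discrepancy $M@t\in T\setminus H$ at a single time point, then once $t$ drops out of the left fragment the projected flag resets to $0$, violating the transition constraint. The paper handles this by \emph{not} using $\window_i^L$ directly: it keeps the projected $H_i,T_i$ but redefines the flag as $b'_i=1$ iff some earlier fragment had $H_j\neq T_j$. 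For $X\in\{\A,\B\}$ this is a harmless correction since the flag plays no role in acceptance (and for $\B$ the issue cannot arise anyway), but for Statement~2 with $X=\C$ you need the corrected cumulative flags to argue that the extra accepting set is hit infinitely often, and your sentence ``the run correspondence preserves whether this happens'' only goes through once the flags are fixed this way.

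A second, minor imprecision: the Büchi condition asks that each accepting set be visited infinitely often, not cofinitely; your ``cofinite'' phrasing is wrong as stated, though the underlying sweeping argument (a witness $t$ for the long window lies in \emph{some} short fragment within a bounded offset) is what the paper uses and is correct.
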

\begin{proof}
First, we show Statement 1 for $X = \A$.
Assume that $\window_0, \window_{-1}, \dots$ is an accepting run of $\A_{\window_0}^\gets$
on a word $\word =\sigma_{-1},\sigma_{-2},\dots$. We will show that there
is an accepting run $\window'_0, \window'_{-1}, \dots$ of $\A_{\window^L_0}^\gets$ on the same word.
To define this run, for each integer $i \leq 0$, we let 
${\window^L_i= (\varrho_i,H_i,T_i,b_i)}$, and 
we define $\window'_i = (\varrho_i, H_i,T_i,b'_i)$, where $b'_i = 1$ if there exists $j \in \{0, \dots, i \}$ such that $H_i \neq T_i$, and otherwise $b'_i=0$. 
In particular, $\window_0' = \window_0^L$.
Now, we will show that $\window'_0, \window'_{-1}, \dots$ is an accepting run of
$\A_{\window_0^L}^\gets$ on  $\word$.
We start by observing that since each $\window_i$ is a window locally satisfying $\Prog$ and since
$\window_i^L$ is a ``fragment'' of $\window_i$, then $\window'_i$ is also a window locally satisfying $\Prog$.
Thus, $\window'_0, \window'_{-1}, \dots$ are states of $\A_{\window_0^L}^\gets$.
To show that they constitute a  run of $\A_{\window_0^L}^\gets$ on  $\word$,
we observe that the transition functions $\delta$ of $\A_{\window_0}^\gets$ and $\delta^L$ of $\A_{\window_0^L}^\gets$ are the same modulo definition of states.
This observation and our definition of $b'_i$,
 ensure that $\window_{i-1} \in \delta(\window_i,\sigma_{i-1})$ implies
$\window'_{i-1} \in \delta^L(\window'_i,\sigma_{i-1})$, for any integer $i \leq 0$.
It remains to show that $\A_{\window_0^L}^\gets$ accepts the run $\window'_0, \window'_{-1}, \dots$.
For this, we observe that $\A_{\window_0}^\gets$ and $\A_{\window_0^L}^\gets$ have the same accepting conditions modulo the definition of states.
Next, let $k$ be the difference between lengths of windows $\window_0$ and $\window^L_0$.
Observe that for any integer $i \leq -k$, 
if $\window_i$ belongs to some set $S$ from the accepting condition
of $\A_{\window_0}^\gets$, then there exists
$j \in [i,i+k]$ such that $\window'_j$ belongs to a set from the accepting condition
of $\A_{\window_0^L}^\gets$ that corresponds to $S$.
Since each set in the accepting condition of $\A_{\window_0}^\gets$ is visited
infinitely often by the states in the run $\window_0, \window_{-1}, \dots$,
 each set in the accepting condition of $\A_{\window_0^L}^\gets$ 
 is also visited infinitely often by the states in the run $\window'_0, \window'_{-1}, \dots$.
Thus, the latter is an accepting run of $\A_{\window_0^L}^\gets$.
The case of $\A_{\window_0^R}^\to$ is symmetric  so, by an analogous
argumentation, we obtain that if $\A_{\window_0}^\to$ accepts a word $\word$,
then so does $\A_{\window_0^R}^\to$.

For the opposite direction of Statement 1,
let us assume that $\A_{\window_{0}^L}^\gets$ has an
accepting run $\window_{0}^L, \window_{-1}', \window_{-2}',\dots$ 
on a word $\word = \sigma_{-1},\sigma_{-2},\dots$; we will show that 
 $\A_{\window_0}^\gets$ has an accepting run on 
 $\word$. To facilitate the definition of this run,
we divide $\window_0 = (\varrho_0,H_0,T_0,b_0)$ into tuples
$\window_{0}', \dots, \window_{k}'$,
where $k$ is the difference between the lengths of $\window_0$ and $\window_0^L$.
In particular, for each $i \in \{0, \dots, k \}$, we define 
${\window_{i}' = (\varrho_i',H_i',T_i',b_i') }$, such that
${\varrho_i'=[\varrho_0^- + k , \varrho_0^- +t_{\Prog} + k ]}$,
${H_i' = \{M@t \in  H_0 \mid t \in \varrho_i' \} }$,
${T_i' = \{M@t \in  T_0 \mid t \in \varrho_i' \} }$,
whereas $b_i' = 1$ if $H'_i \neq T'_i$ and $b'_i=0$ if $H'_i = T'_i$.
Recall that since $\A_{\window_0}^\gets$ is well-defined,
 $\window_0$ is an initial window locally satisfying $\Prog$, and
as we already mentioned, by fragmenting windows locally satisfying $\Prog$
we obtain windows which also locally satisfy $\Prog$.
Hence all of
$\window_{0}', \dots, \window_k'$ locally satisfy $\Prog$.
Furthermore, observe that $\window_{0}' = \window_{0}^L$.
Next, we show how to merge the windows
$\window_k', \dots, \window_0', \window_{-1}',\dots$ to obtain  an accepting run of $\A_{\window_{0}}^\gets$  on $\word$.
To this end, for arbitrary windows $\window = ( \varrho, H,T, b )$ and ${\window' =( \varrho', H',T', b' )}$ such that $\varrho \cap \varrho' \neq \emptyset$, we define their \emph{union}, written as $\window \cup \window'$, as the tuple ${(\varrho \cup \varrho', H \cup H', T \cup T', \max(b,b'))}$.
We can then establish the following claim:
\begin{claim}\label{claim:union}
Let $\window = (\varrho, H, T, b)$ and $\window' = (\varrho',H', T', b')$ be windows such that the left and right endpoints of $\varrho'$ succeed those 
of $\varrho$, and for each $t \in \varrho \cap \varrho'$ and each $X \in \{H,T\}$,
$M@t \in X$ if and only if $M@t \in X'$.
If both $\window$ and $\window'$ are of length at least $t_{\Prog}$ and locally satisfy $\Prog$, then
$\window \cup \window'$ is a window locally satisfying $\Prog$.
\end{claim}
\begin{proof}[Proof of \Cref{claim:union}]
Let $\window'' = \window \cup \window'  = (\varrho'', H'', T'', b'')$.
If $\window$ and $\window'$ locally satisfy $\Prog$, then clearly $\window''$ also locally 
satisfies $\Prog$. It remains to show that $\window''$ is a window.
To this end, for $\I$ an interpretation, $\varrho_b$ a closed interval, and $J$ 
a set of metric facts $M@t$ with $M \in \mat(\Prog,\D)$ and $t \in \varrho_b$, 
we say that $\I$ \emph{corresponds} to $J$ over $\varrho_b$ if, for every $t \in \varrho_b$ and
$M \in \mat(\Prog,\D)$, it holds that $\I \models M@t$ if and only if $M@t \in J$.  
To show that $\window''$ is a window, the only non-trivial
condition that needs to be verified is that there exist two interpretations that correspond, 
respectively, to $H''$ and $T''$ over $\varrho''$. We show this for $H''$ only, as the argument for $T''$ is analogous.
 Let $\Hmod$ (resp.\ $\Hmod'$) be an interpretation corresponding to $H$ over $\varrho$ 
 (resp.\ $H'$ over $\varrho'$); such an interpretation exists, since $\window$ (resp.\ $\window'$) is  
 a window. Let $\Hmod''$ be an arbitrary interpretation that coincides with $\Hmod$ 
 over $\varrho$  and all time points to its left, and with $\Hmod'$
 over $\varrho'$ and all time points to its right. 
Note that $\Hmod$ and $\Hmod'$ agree
  over $\varrho \cap \varrho'$ since, by assumption, $M@t \in H$ if and only if $M@t \in H'$
  for all $t \in \varrho \cap \varrho'$; thus, $\Hmod''$ is well-defined.
 It remains to show that $\Hmod''$ corresponds to $H''$ over $\varrho''$---that is,
 for each $M@t$ with $M \in \mat(\Prog,\D)$ and $t \in \varrho''$,
 we must show that $\Hmod'' \models M @ t$ if and only if $M @ t \in H''$.
 We show this explicitly for the case where $M$ is of the form $\boxminus_{\varrho_b} P$, 
 for $\varrho_b$ a bounded interval and $P$ a relational atom; the remaining cases can be proved in a similar
 way; see \cite{DBLP:conf/ijcai/WalegaGKK19}, Lemma 7. Suppose that $t \in \varrho$.
Since $\Hmod''$  and $\Hmod$ coincide over $\varrho$ and all time points to its left, we have that 
  $\Hmod'' \models \boxminus_{\varrho_b} P @t$ if and only if $\Hmod \models \boxminus_{\varrho_b} P@t$.
 Then, since $\Hmod$ corresponds to $H$ over $\varrho$, it holds that $\Hmod \models \boxminus_{\varrho_b} P @ t$ if and only if $\boxminus_{\varrho_b} P @t \in H$. Finally, since
  $H$ and $H''$ coincide over $\varrho$, $\boxminus_{\varrho_b} P @t\in H$ if and only if $\boxminus_{\varrho_b} P @ t \in H''$. Thus, by chaining together these double implications, we have that
   $\Hmod'' \models \boxminus_{\varrho_b} P @t$ if and only if
    $\boxminus_{\varrho_b} P @ t \in H''$, which is the target equivalence.
 Suppose now that $t \notin \varrho$, so $t$ is the right endpoint of $\varrho'$.
 By the semantics of metric atoms, $\Hmod'' \models \boxminus_{\varrho_b} P @ t$ if and only if $\Hmod'' \models P @ t-t'$ for each $t' \in \varrho_b$. Since the length of $\varrho'$ is at least $t_{\Prog}$, we have
 $t - t' \in \varrho'$ for each $t' \in \varrho_b$. Hence, since $\Hmod''$ and $\Hmod'$ coincide over $\varrho'$, $\Hmod'' \models\boxminus_{\varrho_b}  P$ if and only if $\Hmod' \models\boxminus_{\varrho_b}  P$, and the rest follows by an argument analogous to the previous case, finishing
the proof of Claim~\ref{claim:union}.\end{proof}

We resume the proof of Lemma \ref{short}. Now,
we construct an accepting run of 
$\A_{\window_0}^\gets$ on $\word$ as follows.
For each integer $ i < 0$, we let $\window_i = (\varrho_i,H_i,T_i,b_i)$ be
 $\window_i' \cup \dots \cup \window_{i+k}'$ with the exception that
$b_i =1$ if there exists $j \in \{i, \dots, k \}$ such that $H'_i \neq T'_i$ and $b_i =0$ otherwise. 
 
Now, we will show that $\window_0, \window_{-1}, \dots$ is indeed an accepting run of $\A_{\window_0}^\gets$ on $\word$.
To this end, we observe that each $\window_{i}$ is a window locally satisfying $\Prog$, 
since so is ${ \window_i' \cup \dots \cup \window_{i+k}' }$ by Claim~\ref{claim:union}; furthermore, each $\window_i$
 it is of the same length as $\window_0$.
Therefore, each $\window_i$ is a state of the automaton
$\A_{\window_0}^\gets$.
Moreover, the definition of $\window_i$ ensures that if there exists
a transition on $\sigma_{i-1}$ from $\window_i'$
to $\window_{i-1}'$ in $\A_{\window_0^L}^\gets$, then there exists a
transition on $\sigma_{i-1}$ from $\window_i$ to $\window_{i-1}$ in
$\A_{\window_0}^\gets$, so $\window_0, \window_{-1}, \dots$ is a run of $\A_{\window_0}^\gets$ on $\word$.
Furthermore, if $\window_i'$ belongs to some set $S$ in the accepting condition of $\A_{\window_0^L}^\gets$, then the bigger window $\window_i$ belongs to the set corresponding to $S$ in the accepting condition of $\A_{\window_0}^\gets$.
Hence, $\A_{\window_0}^\gets$ accepts the run $\window_0, \window_{-1}, \dots$.
The case of $\A_{\window_0}^\to$ is symmetric,  so we obtain that if $\A_{\window_0^R}^\to$ accepts a word $\word$, then so does $\A_{\window_0}^\to$.

To finish the proof of Statement 1, it remains to consider the case  $X =\B$.
Recall that the only difference between the automata of the types $\A$ and $\B$ is that the latter impose additional requirement on the windows to be states (see \Cref{defBC}), namely for a window $\window=(\varrho,H,T,b)$ to be a state, it is required that $H=T$.
This, however, does  not affect our argumentation above, and so, we obtain that $\B_{\window_0}^\gets$ and $\B_{\window_0}^\to$ accept the same words as 
$\B_{\window_0^{L}}^\gets$  and $\B_{\window_0^{R}}^\to$, respectively.

To show that Statements 2 and 3 hold, we recall that the automata $\C$ differ 
from automata $\A$ only in the existence of an additional set $S = \{ (\varrho,H, T,b) \in \states  \mid b=1\}$
in their accepting conditions
(see \Cref{defBC}).
If $\window_0 = (\varrho_0,H_0,T_0,b_0)$ is such that $H_0 = T_0$, 
then we can show that $\C_{\window_0}^\gets$ and $\C_{\window_0}^\to$ accept the same words as 
$\C_{\window_0^{L}}^\gets$  and $\C_{\window_0^{R}}^\to$, respectively, using the same
argumentation as in the proof of Statement 1, except for the following difference.
We use the fact that
$H_0 = T_0$ to ensure that in an arbitrary accepting run of $\C_{\window_0}^\gets$ for some word,
there exists a window $\window_i=(\varrho_i,H_i,T_i,b_i)$ to the left of $\window_0$
with $H_i \neq T_i$, which in turn allows us to ensure that the corresponding run
that we define for $\C_{\window^L_0}^\gets$ contains a ``fragment''
$\window'_j=(\varrho'_j,H'_j,T'_j,b'_j)$ of $\window_i$ with $H'_j\neq T'_j$, and hence
the run is accepting. An analogous argument applies to $\C_{\window_0}^\to$. Thus, Statement~2 holds.

However, if $\window_0 = (\varrho_0,H_0,T_0,b_0)$ is such that $H_0 \neq T_0$---as in Statement 3--- then $b_0=1$ since $\window_0$ is initial, and so, each window in any run of the automata $\C_{\window_0}^\gets$ and $\C_{\window_0}^\to$ will belong to the additional set ${ \{ (\varrho,H, T,b) \in \states  \mid b=1\} }$ from the accepting condition.
Hence, this additional condition is trivially satisfied by any infinite run, 
so $\C_{\window_0}^\gets$ and $\C_{\window_0}^\to$ are equivalent to 
$\A_{\window_0}^\gets$ and $\A_{\window_0}^\to$, respectively.
Therefore, by Statement 1, we obtain that $\C_{\window_0}^\gets$ and $\C_{\window_0}^\to$
are also equivalent to 
$\A_{\window_0^{L}}^\gets$ and $\A_{\window_0^{R}}^\to$, respectively, which proves Statement 3.
\end{proof}

\Cref{short} allows us to restrict our attention to automata with
windows of polynomial length, but representations of such windows
can still be unbounded, because arbitrarily large integers may occur in them. 
However, we will show in \Cref{lem:equiv_auto}
that we can construct equivalent automata with states that can be represented in polynomial space,
by merging ``similar'' states in the original automata.
We use a notion of similarity given by the relation $\sim$ defined next:

\begin{definition}\label{def:equiv_auto}
Let $\sim$ be the equivalence relation on windows such that $\window \sim \window'$ if and only if 
$\window'$ is obtained by increasing all time points mentioned in $\window$ by some integer $c$. 
Let $[\window]_\sim$ be the equivalence class of $\window$ with respect to $\sim$.
Then, for any ${X \in \{ \A,\B,\C \}}$ and any initial
window $\window_0$,
if automaton ${X_{\window_0^L}^\gets = (\states, \Sigma, \delta, q_0,
\acc)}$ is well-defined, then ${\widetilde{X}_{[\window_0^{L}]_\sim}^\gets = (\widetilde{\states},\widetilde{\Sigma}, \widetilde{\delta}, \widetilde{q_0},
\widetilde{\acc})}$ is the nondeterministic B\"{u}chi automaton
defined as follows:
\begin{enumerate}[leftmargin=.3in]
\item $\widetilde{\states}$ is the quotient set of $\states$ by the relation $\sim$,
\item $\widetilde{\Sigma} = \Sigma$,
\item $\widetilde{\delta}$ is such that, for any $q,q' \in \widetilde{\states}$ and $\sigma \in \Sigma$, it holds that $q' \in \widetilde{\delta}(q,\sigma)$ if and only if there exist 
$\window,\window' \in \states$ such that
$[\window]_\sim = q$,
$[\window']_\sim = q'$, and
$q' \in \delta(q,\sigma)$, 
\item $\widetilde{q_0} = [q_0]_\sim$,
\item $\widetilde{\acc}=\{ S_1', \dots, S_n' \}$ is a family of subsets of $\widetilde{\states}$ 
such that $q \in S_i'$ if and only if there exists $\window \in S_i$ such that $[\window]_\sim = q$;
\end{enumerate}
and if ${X_{\window_0^R}^\to = (\states, \Sigma, \delta, q_0,
\acc)}$ is well-defined, then automaton $\widetilde{X}_{[\window_0^{R}]_\sim}^\to$ is defined analogously.
\end{definition}

Now we show that the new automata  ${\widetilde{X}_{[\window_0^{L}]_\sim}^\gets }$ and 
${\widetilde{X}_{[\window_0^{L}]_\sim}^\to }$ are equivalent to the previously defined automata $X_{\window_0^L}^\gets$
an $X_{\window_0^L}^\to$, respectively.

\begin{lemma}\label{lem:equiv_auto}
Let $X \in \{ \A,\B,\C \}$ and
let $\window_0 = (\varrho_0,H_0,T_0,b_0)$
be an initial window locally satisfying $\Prog$  such that $X_{\window_0}^\gets$ and $X_{\window_0}^\to$ are well-defined.
Then, $X_{\window_0^{L}}^\gets$  and $X_{\window_0^{R}}^\to$
are equivalent to
$\widetilde{X}_{[\window_0^{L}]_\sim}^\gets$  and $\widetilde{X}_{[\window_0^{R}]_\sim}^\to$,
respectively.
\end{lemma}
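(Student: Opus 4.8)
The plan is to exploit the observation that every feature distinguishing states and transitions of $X_{\window_0^L}^\gets$ and $X_{\window_0^R}^\to$ is invariant under shifting all time points by an integer, and that $\sim$ is precisely the equivalence induced by this shift action; the quotient automata are then the quotients of the originals by a free $\Z$-action that commutes with everything, so they must accept the same words. First I would introduce, for a window $\window=(\varrho,H,T,b)$ and $c\in\Z$, the shifted window $\window{+}c$, obtained by replacing every time point $t$ occurring in $\varrho$, $H$, and $T$ by $t+c$ and leaving $b$ unchanged. Two facts are then immediate from \Cref{def::window} and \Cref{def::locallySatisfies}: (i) $\window{+}c$ is a window locally satisfying $\Prog$ whenever $\window$ is, since those notions refer only to relative positions of time points within $\varrho$ and the semantics of metric atoms is translation-invariant; and (ii) $\window\sim\window'$ if and only if $\window'=\window{+}c$ for some $c\in\Z$, and, because shifting by $c\neq0$ moves $\varrho^-$, this $c$ is unique. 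Note also that the alphabet $\Sigma=2^{\mat(\Prog,\D)}$ mentions no time points, so it is unaffected by shifting.

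The second step is to verify that the transition relation and the accepting condition of $X_{\window_0^L}^\gets$ are shift-equivariant, for all $X\in\{\A,\B,\C\}$. Reading off the explicit description of $\delta$ in \Cref{def::Buchi}, where $\varrho'=[\varrho^-{-}1,\varrho^+{-}1]$ and the new facts added to $T'$ sit at position $\varrho^-{-}1$, one checks directly that $\delta(\window{+}c,\sigma)=\{\window'{+}c\mid \window'\in\delta(\window,\sigma)\}$; in particular $\delta$ is \emph{position-deterministic}, i.e.\ every element of $\delta(\window,\sigma)$ has interval $[\varrho^-{-}1,\varrho^+{-}1]$. The extra constraint $H=T$ defining the states of $\B$, and the extra accepting set $\{(\varrho,H,T,b)\mid b=1\}$ of $\C$, are plainly shift-invariant, so the claim holds uniformly in $X$. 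Likewise each set $S$ in the accepting condition $\acc$ is shift-invariant as a set of states: a witness $t\in\varrho$ for $\window\in S$ becomes a witness $t+c$ for $\window{+}c\in S$. Combined with (ii) and the definition of $\widetilde{\acc}$ in \Cref{def:equiv_auto}, this gives $[\window]_\sim\in S'$ if and only if $\window\in S$.

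With these ingredients the two language inclusions follow. For the easy one, any accepting run $\window_0^L=\window_0,\window_{-1},\dots$ of $X_{\window_0^L}^\gets$ on a word $\word$ projects under $\window_i\mapsto[\window_i]_\sim$ to a run of $\widetilde{X}_{[\window_0^L]_\sim}^\gets$ on $\word$, and this run is accepting by the equivalence $[\window_i]_\sim\in S'\Leftrightarrow\window_i\in S$ just noted. For the converse, given an accepting run $q_0=[\window_0^L]_\sim,q_{-1},q_{-2},\dots$ of $\widetilde{X}_{[\window_0^L]_\sim}^\gets$ on $\word=\sigma_{-1}\sigma_{-2}\cdots$, I would lift it inductively: put $\window_0=\window_0^L$, and, having fixed $\window_i$ with $[\window_i]_\sim=q_i$, use the definition of $\widetilde\delta$ to get states $\window,\window'$ with $[\window]_\sim=q_i$, $[\window']_\sim=q_{i-1}$, and $\window'\in\delta(\window,\sigma_{i-1})$; since $\window_i\sim\window$ we have $\window_i=\window{+}c$ for the unique such $c$, and shift-equivariance yields $\window'{+}c\in\delta(\window_i,\sigma_{i-1})$ with $[\window'{+}c]_\sim=q_{i-1}$, so we may set $\window_{i-1}=\window'{+}c$. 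The resulting sequence is a genuine run of $X_{\window_0^L}^\gets$ on $\word$ (each $\window_i$ is a state, being a length-$t_\Prog$ window locally satisfying $\Prog$, with $H=T$ preserved in the $\B$ case), and it is accepting by the same shift-invariance argument. The right-moving automata $X_{\window_0^R}^\to$ versus $\widetilde{X}_{[\window_0^R]_\sim}^\to$ are handled symmetrically.

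The main obstacle I anticipate is making the lifting step coherent: the definition of $\widetilde\delta$ only promises \emph{some} pair of representatives realising each quotient transition, so one must ensure that the representatives chosen at consecutive steps align with one another. This is exactly what position-determinism of $\delta$ together with shift-equivariance provides—once $\window_i$ is pinned down, there is a unique shift carrying an abstract transition witness onto $\window_i$, and applying that shift produces an honest $\delta$-successor of $\window_i$ in the required $\sim$-class. A minor bookkeeping point is that item~3 of \Cref{def:equiv_auto} abbreviates ``$\window'\in\delta(\window,\sigma)$'' as ``$q'\in\delta(q,\sigma)$''; I would read it with the former meaning, which is the only one for which $\delta$ has the right type.
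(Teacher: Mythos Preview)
Your proposal is correct and takes essentially the same approach as the paper: both directions project to, respectively lift from, the quotient using shift-equivariance of the transition relation and shift-invariance of the accepting sets. The only cosmetic difference is that the paper lifts a quotient run non-inductively by directly picking, for each $q_i$, the unique representative $\window_i$ with interval $[\varrho_0^-+i,\varrho_0^+ +i]$ and then checking $\window_{i-1}\in\delta(\window_i,\sigma_{i-1})$, whereas you build the lift step by step; both arguments rest on the same uniqueness-of-shift observation you isolate as ``position-determinism.''
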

\begin{proof}
Assume that $\window_0, \window_{-1}, \dots$ is an accepting run of $X_{\window_0^L}^\gets$ on a word $\word$.
Then,  by Definition \ref{def:equiv_auto}, $[\window_0]_\sim, [\window_{-1}]_\sim, \dots$ are states of $\widetilde{X}_{[\window_0^{L}]_\sim}^\gets$ and $[\window_0]_\sim = [\window_0^{L}]_\sim$.
Moreover, by the definition of the transition function and the accepting conditions of $\widetilde{X}_{[\window_0^{L}]_\sim}^\gets$, the fact that
$\window_0, \window_{-1}, \dots$ is an accepting run of $X_{\window_0^L}^\gets$  on $\word$ implies that 
$[\window_0]_\sim, [\window_{-1}]_\sim, \dots$ is an accepting run of
$\widetilde{X}_{[\window_0^{L}]_\sim}^\gets$ on the same word $\word$.
Similarly, we can show that if $X_{\window_0^R}^\to$ accepts $\word$, then so does
$\widetilde{X}_{[\window_0^{R}]_\sim}^\to$.

For the opposite direction, let us assume that 
$q_0, q_{-1}, \dots$ is an accepting run of $\widetilde{X}_{[\window_0^{L}]_\sim}^\gets$ on a word $\word$.
For each integer $i < 0$, we let $\window_i = (\varrho_i,H_i,T_i,b_i)$ be a state of $X_{\window_0^L}^\gets$
such that $[\window_i]_\sim = q_i$ and 
${\varrho_i = [\varrho_0^- + i, \varrho_0^+ + i]}$;
such states are guaranteed to exist by definition of 
$X_{[\window_0^L]_{\sim}}^\gets$.
We show that $\window_0, \window_{-1}, \dots$ is an accepting run of $X_{\window_0^L}^\gets$ on $\word$.
To this end, we  will show that for every integer $i \leq 0$ and every symbol $\sigma$, if 
$q_{i-1} \in \widetilde{\delta}(q_i,\sigma)$, then
$\window_{i-1} \in \delta (\window_i, \sigma)$, where $\widetilde{\delta}$ and $\delta$ are the transition functions of $\widetilde{X}_{[\window_0^{L}]_\sim}^\gets$ and 
$X_{\window_0^L}^\gets$, respectively.
Indeed, by definition, if ${ q_{i-1} \in \widetilde{\delta}(q_i,\sigma) }$, then
there are states $\window$ and $ \window' $ of
$X_{\window_0^L}^\gets$ such that
$[\window]_\sim = q_i$, $[\window']_\sim = q_{i-1}$, and $\window' \in \delta (\window, \sigma)$.
Hence, $\window_i \sim \window$ and $\window_{i-1} \sim \window'$.
This,  by the definition of $\sim$ and the fact that $\window' \in \delta (\window, \sigma)$, implies  that 
$\window_{i-1} \in \delta (\window_i, \sigma)$.
Therefore, $\window_0, \window_{-1}, \dots$ is a  run of $X_{\window_0^L}^\gets$ on $\word$.
Finally, since
$q_0, q_{-1}, \dots$ is an accepting run of $\widetilde{X}_{[\window_0^{L}]_\sim}^\gets$,
by the definition of accepting conditions, 
we obtain that 
$\window_0, \window_{-1}, \dots$ is an accepting  run of $X_{\window_0^L}^\gets$.
The same argumentation shows that if 
$\widetilde{X}_{[\window_0^{R}]_\sim}^\to$  accepts $\word$, then $X_{\window_0^R}^\to$
also does so.
\end{proof}

We are now ready to show \EXPS{} data complexity of reasoning in \MTLneg{} over $\Z$.

\begin{theorem}\label{decidability}
Checking whether a  \MTLneg{} program and a 
dataset have a stable model over the integer timeline is in \EXPS{} with respect to data complexity.
\end{theorem}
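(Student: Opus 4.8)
The plan is to turn the characterisation of \Cref{stable_reduction} into an algorithm running in exponential space. By \Cref{prop:normalisation} we may assume $\Prog$ is normalised, since normalisation takes polynomial time and preserves the existence of stable models; then \Cref{stable_reduction} states that $\Prog$ and $\D$ have a stable model if and only if there exist an initial window $\window_0 = (\varrho_{\PD}, T_0, T_0, 0)$ that locally satisfies $\Prog$ with $T_0 \models \D$, together with $\omega$-words $\wordL, \wordR$ over $2^{\mat(\Prog,\D)}$, for which Conditions~1 and~2 of that theorem hold. Now, $\mat(\Prog,\D)$ has polynomial size in $|\D|$, whereas $\varrho_{\PD}$ can have length exponential in $|\D|$ because the numbers in $\D$ are written in binary; hence a candidate $\window_0$ has at most exponential size, and whether it is a valid initial window locally satisfying $\Prog$ with $T_0 \models \D$ can be decided in exponential space. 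I would therefore enumerate all candidate $\window_0$ one by one, reusing space across iterations, and for each valid one decide whether suitable $\wordL, \wordR$ exist.

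The next step is to make the automata of \Cref{stable_reduction} tractable. By \Cref{short}, every automaton $X_{\window_0}^{\gets}$ and $X_{\window_0}^{\to}$ with $X \in \{\A,\B,\C\}$ is equivalent to the automaton built from the length-$t_{\Prog}$ fragment $\window_0^L$ or $\window_0^R$ — with $\C$ degenerating to $\A$ exactly when its initial window has $H_0 \neq T_0$ — and by \Cref{lem:equiv_auto} these are in turn equivalent to the quotient automata $\widetilde{X}$, whose states are $\sim$-classes of windows of length $t_{\Prog}$: each such state is representable in polynomial space, there are $2^{O(\mathrm{poly}(|\D|))}$ of them, and both the transition relation and the generalised B\"uchi acceptance sets are decidable in polynomial space from these representations. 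A generalised B\"uchi automaton is converted into an ordinary one at the cost of a polynomial factor, since there are only polynomially many acceptance sets.

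Fix now a valid $\window_0$. Condition~1 is recognised by the product $\mathsf{B}$ of $\widetilde{\B}^{\gets}_{[\window_0^L]_\sim}$ and $\widetilde{\B}^{\to}_{[\window_0^R]_\sim}$ over the single $\omega$-word whose $i$th letter pairs the $i$th letters of $\wordL$ and $\wordR$. Condition~2 demands that $(\wordL, \wordR)$ avoid the language $\mathrm{Bad}$ of all pairs for which some valid $\window'_0 = (\varrho_{\PD}, H_0, T_0, b_0)$ with $H_0 \models \D$ witnesses acceptance by $\C^{\gets}_{\window'_0}$ and $\A^{\to}_{\window'_0}$, or by $\A^{\gets}_{\window'_0}$ and $\C^{\to}_{\window'_0}$. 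Applying \Cref{short}, the windows $\window'_0$ with $H_0 = T_0$ contribute two fixed rectangles built from $\window_0^L, \window_0^R$, while those with $H_0 \subsetneq T_0$ all collapse — since there $\C$ becomes $\A$ in both directions — to the single requirement that $(\wordL, \wordR)$ encode some $\Tmod$ for which there is an $\Hmod \subsetneq \Tmod$ with $(\Hmod, \Tmod)$ an HT-model of $\Prog$ and $\D$. Therefore $\mathrm{Bad}$ is recognised by a nondeterministic generalised B\"uchi automaton $\mathsf{N}$ that combines the $\C$- and $\A$-automata of \Cref{stable_reduction} (reduced via \Cref{short} and \Cref{lem:equiv_auto}), reads $(\wordL, \wordR)$ in parallel, guesses which disjunct applies and, in the collapsed case, guesses the $\Hmod$-part window by window while maintaining $H \subseteq T$, local satisfaction of $\Prog$, realisation of $\D$ around $\varrho_{\PD}$, and the occurrence of some window with $H \neq T$; like $\mathsf{B}$, $\mathsf{N}$ has $2^{O(\mathrm{poly}(|\D|))}$ polynomially-representable states. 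Hence $\Prog$ and $\D$ have a stable model if and only if, for some valid $\window_0$, the language $L(\mathsf{B}) \cap \overline{L(\mathsf{N})}$ is non-empty.

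It remains to decide this non-emptiness in exponential space, and here lies the main obstacle. Complementing the nondeterministic B\"uchi automaton $\mathsf{N}$ raises its $2^{O(\mathrm{poly}(|\D|))}$ states to $2^{2^{O(\mathrm{poly}(|\D|))}}$, each of them — being essentially a ranking of the states of $\mathsf{N}$ — of size exponential in $|\D|$; the product with $\mathsf{B}$ keeps the per-state size exponential. Thus a single state of the product can be stored in exponential space, its successors on a given (polynomial-size) letter are computable in exponential space, and the standard on-the-fly lasso search for B\"uchi non-emptiness — guess a reachable accepting state, then a cycle back to it, with path lengths tracked by counters whose values are at most doubly exponential and so fit in exponentially many bits — runs in nondeterministic exponential space, hence in deterministic exponential space by Savitch's theorem. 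Running this for each candidate $\window_0$ (reusing space) gives the \EXPS{} bound. The complementation of $\mathsf{N}$, forced by the minimality requirement of \Cref{def::stable}, is exactly what pushes the per-state budget from polynomial — as in the \PS{} automata constructions for positive and stratifiable programs — up to exponential; everything else is careful but routine bookkeeping of the two-sided word structure, of the interplay between the $\exists\,\window_0$ and $\forall\,\window'_0$ quantifiers, and of the generalised B\"uchi acceptance conditions.
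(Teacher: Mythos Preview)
Your overall strategy coincides with the paper's: reduce via \Cref{stable_reduction}, shrink the automata via \Cref{short} and \Cref{lem:equiv_auto}, pair the two directions into one $\omega$-word, and discharge the minimality test by complementation, running the final nonemptiness on the fly in exponential space. The only structural difference is cosmetic --- you bundle all the ``bad'' $\window'_0$ into one nondeterministic automaton $\mathsf{N}$ and complement once, whereas the paper keeps one automaton per admissible $\window'_0$, complements each, and intersects; by De~Morgan these are the same construction, and your complexity accounting for the complemented product matches the paper's.

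There is, however, a real gap in your description of $\mathsf{N}$. After applying \Cref{short}, the left and right automata attached to a given $\window'_0$ start from the short fragments $(\window'_0)^L$ and $(\window'_0)^R$ and move \emph{away} from $\varrho_{\PD}$ on both sides from the very first transition; the input $(\wordL,\wordR)$ never scans the interior of $\varrho_{\PD}$. Consequently your $\mathsf{N}$ cannot ``guess the $\Hmod$-part window by window while maintaining \ldots\ realisation of $\D$ around $\varrho_{\PD}$'': it has no opportunity to walk through $\varrho_{\PD}$, and so it cannot check on the fly that a guessed pair $\bigl((\window'_0)^L,(\window'_0)^R\bigr)$ actually extends to a full $\window'_0=(\varrho_{\PD},H_0,T_0,b_0)$ that is a window locally satisfying $\Prog$, has $H_0\models\D$, and has the right flag. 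The paper closes precisely this gap by precomputing, in \EXPS{}, the set $\mathcal{T}$ of triples $\bigl((\window'_0)^L,(\window'_0)^R,b\bigr)$ that are so extendable --- an exponentially large set of polynomial-size objects --- and then restricting the initial states of the ``bad'' automata to those coming from $\mathcal{T}$. Your $\mathsf{N}$ needs the same precomputation (or an equivalent one) to determine its admissible initial states; once you add it, the rest of your argument goes through unchanged.
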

\begin{proof}
It suffices to show that  checking
existence of  $\window_0$, $\wordL$, and $\wordR$ from \Cref{stable_reduction} is feasible in \EXPS{} in the size of (the representation of) $\D$.
First, we observe that the length of $\varrho_{\PD}$ is  exponential, and  so is the representation  of
windows over $\varrho_0 = \varrho_{\PD}$.
Thus, it is feasible in \EXPS{} to guess a tuple ${\window_0=(\varrho_0,T_0, T_0, 0)}$  as well as  
to verify that it is an initial 
window locally satisfying $\Prog$ and such that   ${T_0 \models \D}$, as required in \Cref{stable_reduction}.

Thus, it remains to check existence of $\wordL$ and $\wordR$ satisfying Conditions 1 and 2 from \Cref{stable_reduction}.
To this end, we will treat a candidate pair of words 
${\word_1=\sigma_{-1} \sigma_{-2} \cdots}$ and ${\word_2=\sigma_1 \sigma_2 \cdots}$ as a single word $\word_1\word_2 = {(\sigma_{-1}, \sigma_{1}) ( \sigma_{-2}, \sigma_2) \cdots}$ and combine pairs of corresponding automata so that they accept combined words.
In particular,
for automata $X$ and $Y$, we let $XY$  be an  automaton
which on a word $\word_1\word_2$ simulates  the run of $X$ on $\word_1$ and the run of  $Y$ on $\word_2$, simultaneously.
Such an automaton $XY$ is polynomially bigger than $X$ and $Y$ as its states are pairs consisting of a state of $X$ and a state of $Y$.
Thus, checking existence of $\wordL$ and $\wordR$
from \Cref{stable_reduction} reduces to checking
existence of $\wordL\wordR$ such that
\begin{itemize}[leftmargin=.3in]
\item[(i)] $\wordL\wordR$ is accepted by $\B^\gets_{\window_0} \B^\to_{\window_0}$ (Condition~1)
and
\item[(ii)] there exists no $\window_0'$ (satisfying properties from Condition~2) such that 
$\wordL\wordR$ is accepted by $\C^\gets_{\window_0'} \A^\to_{\window_0'}$ or by
$\A^\gets_{\window_0'} \C^\to_{\window_0'}$.
\end{itemize}
By \Cref{short,lem:equiv_auto},  Item (i)
reduces to checking if $\wordL\wordR$ is accepted by 
$\widetilde{\B}_{[\window_0^{L}]_\sim}^\gets \widetilde{\B}_{[\window_0^{R}]_\sim}^\to$
and Item (ii) reduces to checking non-existence of $\window_0'=(\varrho_0,H_0,T_0,b)$ (satisfying properties from Condition~2) such that $\wordL\wordR$ is accepted by the union of the automata
$\widetilde{\C}_{[(\window_0')^{L}]_\sim}^\gets \widetilde{\A}_{[(\window_0)'^{R}]_\sim}^\to$ and $\widetilde{\C}_{[(\window_0)'^{L}]_\sim}^\gets \widetilde{\A}_{[(\window_0')^{R}]_\sim}^\to$, if $H_0=T_0$; or by the automaton $\widetilde{\A}_{[(\window_0')^{L}]_\sim}^\gets \widetilde{\A}_{[(\window_0')^{R}]_\sim}^\to$, if $H_0 \neq T_0$.

To perform these checks in \EXPS{}, we will
characterise 
each initial window of the form ${\window'_0=(\varrho_0,H_0, T_0, b_0)}$ locally satisfying $\Prog$ and such that $H_0 \models \D$---as mentioned in Condition 2 of \Cref{stable_reduction}---by a triple consisting of
two initial windows $(\window'_{0})^L$, $(\window'_{0})^R$, and a flag ${b \in \{0,1\}}$ such that $b=1$ if and only if $H_0 \neq T_0$ (so several windows can be characterised by the same triple).
Now, we will show that  a set $\mathcal{T}$ of all  triples characterising all such windows  $\window_0'$ can be constructed in \EXPS{}.
We observe that all $(\window'_{0})^L$ are over the same interval $\varrho^L$, which is independent from the choice of the window  $\window_0'$, since all $\window_0'$ are over the same interval $\varrho_{\PD}$.
Similarly, $(\window'_{0})^R$ are over the same interval $\varrho^R$, which is independent from the choice of  $\window_0'$.
Moreover, these  $\varrho^L$ and $\varrho^R$ are polynomially long, and so, each $(\window'_{0})^L$ and 
$(\window'_{0})^R$ has a polynomial representation.
Thus, there are exponentially many triples  $(\window_1,\window_2,b)$ which need to be checked for 
membership in $\mathcal{T}$.
Our \EXPS{} procedure iterates through all these triples. For each triple 
$(\window_1,\window_2,b)$,
the procedure checks  if there  exists an initial window
${\window'_0=(\varrho_0,H_0, T_0, b_0)}$ locally satisfying $\Prog$
such that $H_0 \models \D$, $(\window'_{0})^L=\window_1$,
$(\window'_{0})^R=\window_2$, and 
$b=1$ if and only if $H_0 \neq T_0$.
All these checks are feasible in \EXPS{}, and if they yield positive answers, then  $(\window_1,\window_2,b) \in \mathcal{T}$.

We observe that checking Items (i) and (ii) reduces to checking  non-emptiness
of an automaton 
obtained by intersecting $\widetilde{\B}_{[\window_0^{L}]_\sim}^\gets \widetilde{\B}_{[\window_0^{R}]_\sim}^\to$
with complements of all the automata corresponding to triples from $\mathcal{T}$,
where we say that a triple ${( (\window'_{0})^L, (\window'_{0})^R,  b) \in \mathcal{T}}$ \emph{corresponds} to the automaton which is  
the intersection of 
$\widetilde{\C}_{[(\window_0')^{L}]_\sim}^\gets \widetilde{\A}_{[(\window_0)'^{R}]_\sim}^\to$ and $\widetilde{\C}_{[(\window_0)'^{L}]_\sim}^\gets \widetilde{\A}_{[(\window_0')^{R}]_\sim}^\to$, if $b=0$; and which is $\widetilde{\A}_{[(\window_0')^{L}]_\sim}^\gets \widetilde{\A}_{[(\window_0')^{R}]_\sim}^\to$, if $b=1$.
To conclude, we  show that this check is feasible in \EXPS{}.
Indeed, by construction, all automata mentioned above have states that can be represented in polynomial space.
However, since these automata are nondeterministic, their complements have states of exponential size.
To intersect the exponentially many such complemented automata,
we construct an automaton whose states are exponentially long tuples, whose $i$th element is a state of the $i$th 
complemented automata.
Such an exponentially long tuple of exponentially large  states is itself  exponentially representable.
Thus, the obtained automaton has  exponentially big states.
Checking non-emptiness of this  automaton 
is feasible in \EXPS{} 
using the standard on-the-fly approach, where states are guessed one-by-one \cite{baier2008principles}.
\end{proof}

The procedure outlined in the proof of the theorem shows that reasoning in \MTLneg{} over the integer timeline is in \EXPS{} in data complexity. 
In the next section, we show how restricting the form of \MTLneg{} programs to the forward-propagating fragment allows us to establish a tight \PS{} bound for data complexity.

\subsection{Forward-Propagating Programs} \label{forward}

In this section, we  consider reasoning with forward-propagating 
programs (see \Cref{rule}) and bounded datasets.
This setting has already been studied in the context of stream reasoning
 \cite{ronca2017stream,WalegaAAAI}.

The normalisation of a \MTLfp{} program,
as defined in \Cref{general}, results also in a \MTLfp{} program; thus,
for the remainder of this section, we let $\Prog$ be a fixed (but arbitrary)
\MTLfp{} program in normal form and let $\D$ be a bounded dataset.

Restricting ourselves to forward-propagating programs and 
bounded datasets will enable a simplification
of the procedure
in \Cref{general},
where we check existence of a stable model of $\Prog$ and $\D$
by looking for an initial window $\window_0$ and a pair of words $\wordL$, $\wordR$
satisfying the conditions in \Cref{stable_reduction}.
Towards this goal, we first show in \Cref{forward_automaton} that
we can guess the initial window $\window_0$
over an interval located to the left of
all intervals in~$\D$, instead of $\varrho_{\PD}$ as in the previous section.
Furthermore, we can use the fact that $\Prog$ is forward-propagating to show that
checking existence of word $\wordL$ can be done independently of $\D$. 
Finally, to check existence of word $\wordR$,
we define a new family
of automata of the form $\F_{\window_0}^\rightarrow$, which can be used 
instead of $X_{\window_0}^\rightarrow$, for $X \in \{\A,\B,\C\}$; doing so
requires no complementation, and thus we 
avoid the exponential blowup. 
As a result, the procedure becomes feasible in polynomial space.

We next define a new automaton $\F_{\window_0}^\rightarrow$
as a refinement of $\B_{\window_0}^\rightarrow$.
Furthermore, we impose an additional restriction on 
states of $\F_{\window_0}^\rightarrow$ to guarantee their minimality, as described below. 

\begin{definition}\label{def:automatonF}
We say that a window $(\varrho,H,T,b)$ \emph{locally satisfies} $\D$
if $M @ t \in H$
for each  ${\matA \in \mat(\Prog,\D)}$
and each $t \in \varrho$ such that 
${\D \models M@t}$.
Let $\window_0=(\varrho_0,T_0,T_0,0)$ be an initial window locally satisfying $\Prog$ (see Definition \ref{def::locallySatisfies}) and $\D$.
We define the generalised B\"{u}chi \mbox{automaton} $\F_{\window_0}^\rightarrow$ analogously to the automaton $\B_{\window_0}^\rightarrow$ in \Cref{defBC},
except that:
\begin{itemize}[leftmargin=.3in]
\item[--] states in $\F_{\window_0}^\rightarrow$ are additionally required to locally satisfy $\D$, and 
\item[--] the transition function is additionally restricted so that
transitions to a window $(\varrho,T,T,0)$ are only allowed if 
there exists no window $(\varrho,H,T,1)$ locally satisfying $\Prog$ and $\D$ such that
$H$ and $T$ coincide over $[\varrho^{-},\varrho^{+}-1]$
and 
$M@\varrho^+ \in T \backslash H$ for some relational atom $M$.
\end{itemize}
\end{definition}

Note that $\F_{\window_0}^\rightarrow$ is essentially deterministic since $\delta(\window,\sigma)$ contains at most one window for every state $\window$ 
and $\sigma \in \Sigma$. 
The following lemma provides the result analogous to \Cref{stable_reduction} for the
setting considered in this section. 

\begin{lemma}\label{forward_automaton}
Program $\Prog$ and dataset $\D$ have a stable model if and only if there exists
an initial window ${\window_0=(\varrho_0,T_0, T_0, 0)}$
locally satisfying $\Prog$ and $\D$ with 
$\varrho_0= [t^{\min}_{\D}- (t_\Prog +1), t^{\min}_{\D}-1]$, which mentions
only constants and predicates from $\Prog$, 
and there exist words $\wordL$ and $\wordR$ over $2^{\mat(\Prog, {\D})}$ such that all of the following conditions
hold:
\begin{enumerate}[leftmargin=.3in]
\item $\wordL$ and $\wordR$ are accepted by  $\B_{\window_0}^\gets$ and $\F_{\window_0}^\to$, respectively,
\item there is no initial window 
${\window'_0=(\varrho_0,H_0, T_0, b_0)}$ locally satisfying $\Prog$ and $\D$ such that $\wordL$ is accepted by 
 $\C_{\window'_0}^\gets$,
\item $\wordL$ mentions only constants and predicates from $\Prog$.
\end{enumerate}
\end{lemma}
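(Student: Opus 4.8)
The plan is to reprove the equivalence with the decomposition machinery already used for \Cref{stable_reduction}, but breaking the left--right symmetry using forward-propagation: the right-moving part of a candidate model will be handled by the (essentially deterministic) automaton $\F_{\window_0}^\to$, whose transition constraint already bakes in minimality of the rightmost point, so no complementation is needed there, while the left-moving part is treated exactly as before, with minimality enforced globally by the fact that no $\C_{\window'_0}^\gets$ accepts $\wordL$ (Condition~2). I will freely use \Cref{lem:decomp}, the ``windows of length $\ge t_\Prog$ faithfully evaluate every $M\in\mat(\Prog,\D)$'' fact invoked in \Cref{stable_reduction} (cf.\ \cite{DBLP:conf/ijcai/WalegaGKK19}, Lemma~9), and the structure of a normalised \MTLfp{} program: rule heads are relational atoms evaluated at a single time point, and rule bodies contain only $\boxminus$ and $\So$, hence depend only on the present and the past. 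As a preliminary step I would observe that, if $\Tmod$ is a stable model, then its restriction to times below $t^{\min}_{\D}$ mentions only $\Prog$'s predicates and constants: in the construction of $\Hmod^\Tmod_{\Prog,\D}=\Tmod$ from \Cref{thm::least}, the base interpretation (least model of $\D$) places atoms only in $[t^{\min}_{\D},t^{\max}_{\D}]$, and any rule applied at a point $t<t^{\min}_{\D}$ has, by forward-propagation, all positive body atoms evaluated at times $\le t$, so by safety its head atom is over $\Prog$'s vocabulary; transfinite induction finishes this. This makes Condition~3 and the vocabulary requirement on $\window_0$ attainable and, since moreover $\D\not\models M@t$ for all $M\in\mat(\Prog,\D)$ and $t\in\varrho_0=[t^{\min}_{\D}-(t_\Prog+1),t^{\min}_{\D}-1]$, it makes $\window_0$ locally satisfy $\D$ vacuously and the existence of $\wordL$ independent of $\D$.

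For the ``only if'' direction, take a stable model $\Tmod$, form the $\varrho_0$-decomposition $\dots,\window_{-1},\window_0,\window_1,\dots$ of $(\Tmod,\Tmod)$ and define $\window_0,\wordL,\wordR$ as in \Cref{stable_reduction}. Since every $\window_i=(\varrho_i,T_i,T_i,0)$, \Cref{lem:decomp} yields accepting runs of $\B_{\window_0}^\gets$ on $\wordL$ and of $\B_{\window_0}^\to$ on $\wordR$, so Condition~1 holds once we promote the right run to $\F_{\window_0}^\to$, and Condition~2 is what remains. Each $\window_i$ locally satisfies $\D$ because $\Tmod\models\D$; for the extra transition constraint of \Cref{def:automatonF}, suppose some transition into $\window_i$ were blocked, witnessed by a window $(\varrho_i,H,T_i,1)$ locally satisfying $\Prog$ and $\D$, coinciding with $T_i$ on the relational atoms over $[\varrho_i^-,\varrho_i^+-1]$, with a relational $M$ such that $M@\varrho_i^+\in T_i\setminus H$. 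I would then show that such a window extends, with its $T$-part unchanged, to an HT-model $(\Hmod,\Tmod)$ of $\Prog$ and $\D$ with $\Hmod\subsetneq\Tmod$: glue $\Tmod$ on all of $(-\infty,\varrho_i^+-1]$, put $H$'s relational atoms at $\varrho_i^+$, and close off minimally towards the right (as in \Cref{thm::least}), which by forward-propagation leaves every rule firing at a point $<\varrho_i^+$ untouched and, since heads are single-point relational atoms and $H$ already witnesses that $M$ is not forced at $\varrho_i^+$, does not re-derive $M@\varrho_i^+$. This contradicts $\Hmod^\Tmod_{\Prog,\D}=\Tmod$. For Condition~2, if $\C_{\window'_0}^\gets$ accepted $\wordL$ for some initial $\window'_0=(\varrho_0,H_0,T_0,b_0)$ locally satisfying $\Prog$ and $\D$, then splicing the corresponding leftward run onto $\Tmod$ over $\varrho_0$ and to its right, exactly as in the proof of \Cref{stable_reduction}, again produces $\Hmod\subsetneq\Tmod$ with $(\Hmod,\Tmod)$ an HT-model, a contradiction.

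For the ``if'' direction, given $\window_0,\wordL,\wordR$ as in the statement, reconstruct $\Tmod$ as the least model of the relational facts occurring in the $T$-components along the accepting runs of $\B_{\window_0}^\gets$ on $\wordL$ and of $\F_{\window_0}^\to$ on $\wordR$ (these runs share $\window_0$, so $\Tmod$ is well-defined). As in \Cref{stable_reduction}, the window-length property plus local satisfaction of $\Prog$ (and of $\D$ along the right run, which covers $[t^{\min}_{\D},t^{\max}_{\D}]$) give $\Tmod\models M@t$ iff $M@t$ is in the $T$-component covering $t$, and hence that $(\Tmod,\Tmod)$ is an HT-model of $\Prog$ and $\D$. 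Suppose $\Tmod$ were not stable; fix $\Hmod\subsetneq\Tmod$ with $(\Hmod,\Tmod)$ an HT-model and take the $\varrho_0$-decomposition $\dots,\window'_{-1},\window'_0,\window'_1,\dots$ of $(\Hmod,\Tmod)$, whose $T$-components coincide with those above. If $H'_i\neq T'_i$ for some $i\le 0$, then $\window'_0=(\varrho_0,H'_0,T_0,b'_0)$ is an initial window locally satisfying $\Prog$ and $\D$ (by \Cref{lem:decomp} and $\Hmod\models\D$) with $b'_j=1$ for all $j\le i$, so $\C_{\window'_0}^\gets$ accepts $\wordL$, contradicting Condition~2. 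Otherwise the first point where $\Hmod$ and $\Tmod$ disagree on a relational atom lies strictly to the right of $\varrho_0^+=t^{\min}_{\D}-1$; call this point $p$ and the atom $M$, and let $\varrho$ be the length-$t_\Prog$ interval with $\varrho^+=p$. Since $\Hmod$ and $\Tmod$ agree on relational atoms below $p$, and the body operators of $\Prog$ ($\boxminus,\So$) look only backwards, the window over $\varrho$ induced by $(\Hmod,\Tmod)$ has the form $(\varrho,H',T,1)$ with $H'$ coinciding with $T$ on the relational atoms over $[\varrho^-,p-1]$, locally satisfying $\Prog$ and $\D$, and with $M@p\in T\setminus H'$---precisely the configuration that \Cref{def:automatonF} forbids for the $\F_{\window_0}^\to$-transition into the window covering $p$, contradicting that $\wordR$ is accepted by $\F_{\window_0}^\to$.

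The main obstacle is the claim highlighted in the ``only if'' direction, namely that the purely local constraint of \Cref{def:automatonF} captures global minimality exactly; this is where forward-propagation is indispensable, since it is what lets one turn a ``locally shrinkable'' window into a genuinely smaller global $\Hmod$ by re-minimising only to the right of the offending point and freezing the entire past, and dually lets one locate, at the first rightmost relational discrepancy of any $\Hmod\subsetneq\Tmod$, a window forbidden by the automaton. Two points need care here: that the window length $t_\Prog$ is enough to witness local satisfaction of $\Prog$ at that point (so that the induced window is really a legal witness), and that the ``$H$ and $T$ coincide over $[\varrho^-,\varrho^+-1]$'' condition be read on relational atoms, so that the unbounded metric atoms in $\mat(\Prog,\D)$---which do not occur in the bodies of a forward-propagating program---do not interfere; both are consequences of normalisation and forward-propagation.
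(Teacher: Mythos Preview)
Your proposal is correct and follows essentially the same strategy as the paper's proof: decompose a stable model over $\varrho_0$, use \Cref{lem:decomp} for the $\B$-runs, argue by contradiction that a blocked $\F$-transition yields a strictly smaller HT-model (the paper packages this as a separate Claim), handle Condition~2 by extending a $\C$-accepting left run to a global HT-model using forward-propagation, and prove Condition~3 by transfinite induction along the $\Hmod_\alpha$-sequence of \Cref{thm::least}. Your constructions differ in inessential ways---you glue $\Tmod$ directly on the non-offending side rather than extending minimally---but these variants work for the same reason (forward-propagation makes rule bodies depend only on the past, so freezing the past and re-minimising to the right is sound).

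One wording issue: you write that ``unbounded metric atoms in $\mat(\Prog,\D)$ \dots\ do not occur in the bodies of a forward-propagating program''. This is false for $\boxminus_{[0,\infty)}$ and $\So_{[0,\infty)}$, which are past-looking and may appear after normalisation; only $\boxplus_{[0,\infty)}$ is excluded. Fortunately, your concern about the ``coincide over $[\varrho^-,\varrho^+-1]$'' condition is really about $\boxplus_{[0,\infty)}$, and for the past-looking operators one checks (using the integer recursion $M_1\So_{[0,\infty)}M_2@t \Leftrightarrow M_2@t \lor (M_1@(t{-}1)\land M_1\So_{[0,\infty)}M_2@(t{-}1))$, and similarly for $\boxminus_{[0,\infty)}$) that agreement on $[\varrho^-,\varrho^+-1]$ together with agreement on relational atoms below $\varrho^+$ propagates to agreement at $\varrho^+$ as well, so they do not interfere with your argument.
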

\begin{proof}
Assume that $\Tmod$ is a
stable model of $\Prog$ and $\D$. 
To construct the required $\window_0$, $\wordL$, and
$\wordR$,
we let $\dots, \window_{-1}, \window_0, \window_1, \dots$ be the ${[t^{\min}_{\D} - (t_\Prog +1), t^{\min}_{\D}-1]}$-decomposition
of HT-interpretation $(\Tmod,\Tmod)$, and we let  ${\window_i= (\varrho_i, H_i, T_i, b_i)}$. 
Moreover, we let ${\wordL= \sigma_{-1} \sigma_{-2} \cdots}$ and ${\wordR= \sigma_{1} \sigma_{2} \cdots}$ be the words
such that $\sigma_k = T_k \setminus T_{k+1}$ if $k<0$, and $T_k \setminus T_{k-1} $ if ${k>0}$.
In what follows, we will show that the above defined $\window_0$, $\wordL$, and
$\wordR$ satisfy the requirements from the lemma.
First, we observe that
$H_i=T_i$ and $b_i=0$, for each $i \in \mathbb{Z}$. 
Furthermore,
by Lemma \ref{lem:decomp}, each $\window_i$ is a window locally satisfying $\Prog$ and,
since $\Tmod$ is a stable model of $\Prog$ and $\D$, we obtain that each $\window_i$
locally satisfies $\D$.
Therefore, as required in the lemma,
$\window_0$ is an initial window  of the form $(\varrho_0,T_0, T_0, 0)$
with 
$\varrho_0= [t^{\min}_{\D}- (t_\Prog +1), t^{\min}_{\D}- 1]$, and it
locally satisfies  $\Prog$ and $\D$.
Thus,  it remains to show that $\window_0$ mentions only constants and predicates from $\Prog$ and that 
$\window_0$, $\wordL$, and $\wordR$ satisfy Conditions 1--3.

To show that Condition 1 holds, we observe that, by Lemma \ref{lem:decomp}, $\window_0,\window_{-1},\dots$ is
an accepting run of $\A_{\window_0}^\gets$; moreover, since $H_i=T_i$ and $b_i=0$ for all integers $i \leq 0$,
this run is also accepting for $\B_{\window_0}^\gets$. 
Analogously, 
$\window_0,\window_{1},\dots$ is an accepting run of $\B_{\window_0}^\to$.
Now, suppose towards a contradiction
that $\F_{\window_0}^\rightarrow$ does not accept $\wordR$.
Then, by Definition \ref{def:automatonF},  
there exists a smallest integer $i > 0$ such that
$\window_0, \dots , \window_{i-1}$ is a run of
$\F_{\window_0}^\rightarrow$ on $\sigma_1 \cdots \sigma_{i-1}$ (where this word is empty if $i=1$),
but
 $\F_{\window_0}^\rightarrow$
does not have a transition from $\window_{i-1}$ to $\window_{i}$ on $\sigma_{i}$.
Next, we will show how to define windows $\window'_i,\window'_{i+1}, \dots$ such that
$\window_0, \dots, \window_{i-1}, \window'_i, \window'_{i+1},  \dots$ is  an accepting run of 
$\C_{\window_0}^{\to}$ on $\wordR$, which will allow us to raise a contradiction.

\begin{claim}\label{claim}
There exist windows $\window'_i,\window'_{i+1}, \dots$ 
 locally satisfying $\D$  
such that
$\window_0, \dots, \window_{i-1}, \window'_i, \window'_{i+1},  \dots$ is  an accepting run of 
$\C_{\window_0}^{\to}$ on $\wordR$.
\end{claim}
\begin{proof}[Proof of \Cref{claim}]
Recall that $\window_{i} = (\varrho_i,T_i,T_i,0)$. 
Since $\F_{\window_0}^\rightarrow$
does not have a transition from $\window_{i-1}$ to $\window_{i}$ on $\sigma_{i}$, there exists a window ${\window' =(\varrho_i,H',T_i,1)}$ locally satisfying $\Prog$ and $\D$ such that $H'$ coincides with $T_i$ over  $[\varrho^-_i,\varrho^+_{i-1}]$,
and there exists an atom $M' \in \mat{(\Prog,\D)}$ satisfying ${M'@\varrho_i^+ \in T_i \backslash H' }$.
We note also that since $\window'$ is a window 
and  
$H' \subseteq T_i$, there exists an interpretation $\Hmod'$ such that  $\Hmod' \subseteq \Tmod$ and both items from \Cref{def::window} hold.
We will use $\Hmod'$ to define $\window'_i,\window'_{i+1}, \dots$.
To this end, we will say that an  HT-interpretation $(\Hmod^*,\Tmod^*)$ 
\emph{satisfies} a dataset $\D$ and program $\Prog$ \emph{over} an interval $\varrho^*$ if and only if
\begin{itemize}[leftmargin=.3in]
\item[--] for every  $M@\varrho \in \D$
and 
$t \in \varrho \cap \varrho^*$
we have
 $\Hmod^*,t \models M$, and
\item[--] $(\Hmod^*,\Tmod^*)$ satisfy the conditions of Definition \ref{def::HT} for each $t \in \varrho^*$.
\end{itemize}
We can use a construction similar to that in the proof of Theorem \ref{thm::least}
to
define the least interpretation
$\Hmod$ 
such that $\Hmod' \subseteq \Hmod \subseteq \Tmod$,
$\Hmod$ coincides with $\Hmod'$ over  $(-\infty,\varrho_i^-)$, and 
$(\Hmod,\Tmod)$ satisfies $\D$ and $\Prog$ over $[\varrho_i^{-}, \infty)$.
An important observation is that since $\Prog$ is forward-propagating and $\window'$ locally satisfies $\Prog$ 
and $\D$, interpretations $\Hmod'$ and $\Hmod$ also coincide over the interval $\varrho_i$.

Now, we let
$\dots, \window'_{-1}, \window'_0, \window'_1, \dots$, with $\window'_j =  (\varrho'_j,H'_j,T'_j,b'_j)$, 
be the $\varrho_0$-decomposition of $(\Hmod,\Tmod)$;
hence, $\varrho'_j = \varrho_j$ and $T'_j = T_j$, for each $j$.
We observe also that since $\Hmod$ coincides with $\Hmod'$ 
over $(-\infty,\varrho_i^+]$, $H'_i$ agrees with $H'$ on all
relational atoms over $(-\infty,\varrho_i^+]$, so $M'@t \in T_i \backslash H'_i$,
and hence $b'_j=1$ for each $j \geq i$.

We will show that $\window_0, \dots, \window_{i-1}, \window'_i, \window'_{i+1},  \dots$ is an accepting run of $\A_{\window_0}^{\to}$ on
$\wordR$ where all windows locally satisfy $\D$, and then, we will show that this run is also accepted by $\C_{\window_0}^{\to}$.
To this end, we observe that since $\window_0, \dots , \window_{i-1}$ is a run of
$\F_{\window_0}^\rightarrow$ on $\sigma_1 \cdots \sigma_{i-1}$, it is also a run of $\A_{\window_0}^\rightarrow$ on $\sigma_1 \cdots \sigma_{i-1}$.
Moreover,  $\window'_i$ is an initial window,
(since $H'_i \neq T_i$ and $b'_i=1$) and, by an argument analogous to the proof of  \Cref{lem:decomp}, 
each $\window'_j$ for $j \geq i$ locally satisfies $\Prog$ and $\D$, and $\window'_j , \window'_{j+1}, \dots$
is an accepting run of $\A_{\window'_i}^{\to}$ on $\sigma_{i+1}, \sigma_{i+2}, \dots$.
Hence, to show that
$\window_0, \dots, \window_{i-1}, \window'_i, \window'_{i+1},  \dots$ is an accepting run of 
$\A^{\to}_{\window_0}$, it remains to  show that 
$\A^{\to}_{\window_{0}}$ has a transition on $\sigma_{i}$ from $\window_{i-1}$ to $\window'_i$.
To do this, we first show that 
$H'_i$ coincides with $H'$ over  $[\varrho_i^{-},\varrho_{i-1}^{+}]$.
Indeed, consider an arbitrary fact $M@t$ with $M \in \mat{(\Prog,\D)}$ 
and $t \in [\varrho_i^{-},\varrho_{i-1}^{+}]$;  we will show that $M@t \in H'_i$
if and only if $M@t \in H'$.
Note that since $\Prog$ is forward-propagating, any atom 
$M \in \mat{(\Prog,\D)}$ is either of the form $\boxplus_{[0,\infty)} P(\cbf)$
for some relational fact $P(\cbf)$, or it does not mention future operators.
Thus, if $M$ does not mention future operators, the biconditional holds because $\Hmod$ and $\Hmod'$ 
coincide over the interval $(-\infty,\varrho_i^+]$.
Now, suppose $M$ is of the form $\boxplus_{[0,\infty)} P(\cbf)$. 
If $\boxplus_{[0,\infty)} P(\cbf) @ t \in H'$,
then $\boxplus_{[0,\infty)} P(\cbf) @ t \in H'_i$ since $H'\subseteq H'_i$ by construction of $H'_i$.
Conversely, if 
$\boxplus_{[0,\infty)} P(\cbf) @ t \in H'_i$,
we have that $\Hmod,t \models \boxplus_{[0,\infty)} P(\cbf) $,
and since $\Hmod \subseteq \Tmod$, we have $\Tmod,t \models \boxplus_{[0,\infty)} P(\cbf) $,
so $\boxplus_{[0,\infty)} P(\cbf) @ t \in T_i$. Then, since $H'$ agrees with $T_i$ over $[\varrho_i^{-},\varrho_{i-1}^{+}]$,
we conclude that $\boxplus_{[0,\infty)} P(\cbf) @ t \in H'$. Thus, the biconditional
also holds in this case. 
With the above result, we can show that there is a transition
in $\A^{\to}_{\window_{0}}$ from $\window_{i-1}$ to $\window'_i$; we can see this by checking
that all the four conditions from Item 3 in \Cref{def::Buchi} hold.
The first, third, and fourth conditions hold directly by the definition of $\window'_i$
and the fact that $\varrho'_i = \varrho_i$, $T'_i = T_i$, and $b'_i=1$.
The second condition states that $H'_i$ and $H_{i-1}$ must coincide over the
interval $\varrho_{i-1} \cap \varrho_i$. To see this, recall that
$H'_i$ coincides with $H'$ over this interval. In turn,
 $H'$ coincides with $T_i$ over this interval by definition,
and $T_i$ coincides with $T_{i-1}$ over this interval by construction, but 
$T_{i-1}=H_{i-1}$, so $H'_i$ coincides with $H_{i-1}$ and the condition holds.
Therefore, $\window_0, \dots, \window_{i-1}, \window'_i, \window'_{i+1},  \dots$ is an accepting run
of $\A_{\window_0}^{\to}$ on $\wordR$ where all windows locally satisfy $\D$.

Furthermore, $\window_0, \dots, \window_{i-1}, \window'_i, \window'_{i+1},  \dots$ is also an accepting run of 
 $\C_{\window_0}^{\to}$ on $\wordR$, since every window in the run after $\window'_i$ 
has $1$ as its fourth component, and so the additional accepting condition of  $\C_{\window_0}^{\to}$
is satisfied. 
This concludes the proof of Claim \ref{claim}.
\end{proof}

Having proved the claim, we resume the proof of Lemma \ref{forward_automaton}.
Recall that $\window_0,\window_{-1},\dots$ is an accepting run of $\A_{\window_0}^\gets$ on $\wordL$ and, as we showed in the claim above, $\window_0, \dots, \window_{i-1}, \window'_i, \window'_{i+1},  \dots$ is  an accepting run of 
$\C_{\window_0}^{\to}$ on $\wordR$.
Hence, as in the proof of \Cref{stable_reduction}, we can 
use these runs to construct
an HT-model $(\Hmod'',\Tmod)$ of $\Prog$ and $\D$ such that $\Hmod'' \subsetneq \Tmod$; in particular,
$\Hmod'' \models \D$,  as $\varrho_0^{-}$ is to the left of any integer mentioned in $\D$ and 
each window in the run $\window_0, \dots, \window_{i-1}, \window'_i, \window'_{i+1},  \dots$ locally satisfies $\D$.
This, however, means that  $\Tmod$ is not a stable model of
$\Prog$ and $\D$, which raises a contradiction.

To prove Condition 2, suppose towards a contradiction that there exists an initial window
${\window'_0=(\varrho_0, H_0,T_0,b_0)}$ which locally satisfies $\Prog$ and $\D$, and such that
there exits an accepting run $\window'_0, \window'_{-1}, \dots$ 
of $\C_{\window'_0}^{\gets}$ on $\wordL$. We observe that, by definition of $\wordL$, each $\window'_i$ is of the form
$(\varrho_i,H_i,T_i,b_i)$, for some set $H_i$ of metric atoms  and $b_i \in \{0,1\}$.
We consider the least model $\Hmod'$ of all relational facts in $\bigcup_{i \leq 0} H_i$.
The accepting conditions of $\C_{\window'_0}^{\gets}$ ensure that there is an atom $P(\cbf)$ and a time point
$t \in (-\infty,\varrho_0^+]$ such that $\Hmod' ,t \not \models P(\cbf)$ but $\Tmod ,t \models P(\cbf)$. 
By an argument analogous to that in the proof of Theorem \ref{stable_reduction}, and using the facts that
$\window'_0$ locally satisfies $\D$ and $\window'_i$ locally satisfies $\Prog$ for each $i \leq 0$,  
we obtain that $(\Hmod',\Tmod)$ satisfies $\D$ and $\Prog$ over $(-\infty,\varrho_0^+]$.
Then, using a construction similar to that in the proof of Theorem \ref{thm::least},
we can extend $\Hmod'$ to the minimal interpretation $\Hmod$ such that
 $(\Hmod,\Tmod)$ is
an HT-model of $\Prog$ and $\D$. 
Since $\Prog$ is forward-propagating, and $(\Hmod',\Tmod)$ already satisfies $\D$ and $\Prog$
over $(-\infty,\varrho_0^+]$, we have that $\Hmod$ and $\Hmod'$ 
agree over $(-\infty,\varrho_0^+]$.
But then, $\Hmod \subsetneq \Tmod$ since, as we have shown, 
there is $t \in (-\infty,\varrho_0^+]$ such that ${ \Hmod' ,t \not \models P(\cbf) }$
but ${ \Tmod ,t \models P(\cbf) }$.
Thus, $\Tmod$ is not stable, which raises a contradiction.

To prove Condition 3, 
we define a  sequence $\Hmod_0, \Hmod_1,\dots$
of interpretations  as follows:
\begin{itemize}[leftmargin=.3in]
\item[--] $\Hmod_0$ is the least model of $\D$, 
\item[-- ] $\Hmod_\alpha$,  for a successor ordinal $\alpha$,  is the least interpretation such that
for each rule of Form~\eqref{eq:ruleneg} in $\groundp{\Prog}$, and for each time point $t$,   
if ${\Hmod_{\alpha-1},t \models \matA_i}$ for each $1 \leq i \leq k$ and 
${\Tmod,t \not\models \matA_j}$ for each $k+1 \leq j \leq m$,
then ${\Hmod_{\alpha},t \models M}$,
\item[--] $\Hmod_{\alpha} $, for a limit ordinal $\alpha$, is $\bigcup_{\beta < \alpha} 
\Hmod_\beta$. 
\end{itemize}
By the proof of \Cref{thm::least}, each $\Hmod_\alpha$ is well defined and $\Tmod = \Hmod_{\omega_1}$.
Hence, to prove Condition~3, it suffices to show by transfinite induction that, for every ordinal $\alpha$, 
if $\Hmod_\alpha,t \models P(\cbf)$ for some relational atom $P(\cbf)$ and
$t < t^{\min}_{\D}$, then $P$ and all constants in $\cbf$ occur in $\Prog$.
In the base case $\Hmod_0$ is the least model of $\D$. Since  $\D$ is bounded, all the facts it mentions are over intervals contained in $[t^{\min}_{\D}, \infty)$, and so, the statement holds.
In the inductive step for a successor ordinal $\alpha$, we suppose towards a contradiction that $\Hmod_\alpha, t \models P(\cbf)$ and $\Hmod_{\alpha-1}, t \not\models P(\cbf)$, for
some relational atom $P(\cbf)$ and $t < t^{\min}_{\D}$, such that  $P$ or 
some constant in $\cbf$ does not occur in $\Prog$.
Hence, there exists a rule $r$ in $\ground{\Prog}{\D}$
whose  head is $P(\mathbf{c})$, whose positive body atoms are satisfied by $\Hmod_{\alpha-1}$ at $t$,
and whose negated body atoms are satisfied by $\Tmod$ at $t$.
Thus, $P$ appears in $\Prog$, so there exists a constant $c$ in $\mathbf{c}$ which
does not appear in $\Prog$. 
By the safety and the normal form of $r$, the constant $c$ needs to occur in a relational atom $M$ such that either $M$ or $\boxminus_\varrho M$, or $M' \So_\varrho M$ is a positive body atom in $r$.
Each of these cases, however, implies that $\Hmod_{\alpha-1}, t' \models M$, for some $t' \leq t$, which  violates the induction hypothesis.
In the inductive step for a limit ordinal  $\alpha$,
we have $\Hmod_{\alpha} = \bigcup_{\beta < \alpha} 
\Hmod_\beta$; since the claim holds for each $\Hmod_\beta$ by induction hypothesis,
it holds also for $\Hmod_{\alpha}$.

We observe that the above result not only shows that Condition 3 holds, but also that $\window_0$ mentions only constants and predicates from $\Prog$, 
which completes the proof of the first implication from the lemma.

\bigskip  

To show the reverse implication from \Cref{forward_automaton}, let
$\window_0 = (\varrho_0,T_0,T_0,0)$, $\wordL$, and $\wordR$ be as described in \Cref{forward_automaton}.
In particular, by Condition 2, $\B_{\window_0}^\gets$ has an accepting run $\window_0, \window_{-1}, \dots$ on $\wordL$ and
$\F_{\window_0}^\to$ has an accepting run $\window_0, \window_{1}, \dots$ on $\wordR$, where we let
${\window_i= (\varrho_i, T_i, T_i, 0)}$. We will show that  the least model  $\Tmod$ of relational
facts in $\bigcup_{i \in \Z} T_i$ is a stable model of $\Prog$ and $\D$. Using an argument analogous to
the proof of \Cref{stable_reduction}, we obtain that $(\Tmod,\Tmod)$ is an HT-model of $\Prog$ and $\D$.
Now, suppose towards a contradiction that
$\Tmod$ is not a stable model, so there is an interpretation ${\Hmod \subsetneq \Tmod}$	such that
$(\Hmod, \Tmod)$ is an HT-model of $\Prog$ and $\D$. 
We let $\dots, \window'_{-1}, \window'_0, \window'_1, \dots$ be the 
${[t^{\min}_{\D}- (t_\Prog +1), t^{\min}_{\D}-1]}$-decomposition of $(\Hmod,\Tmod)$, where we let
${\window_i'= (\varrho_i', H_i', T_i', b_i')}$.
By construction, $\varrho'_i = \varrho_i$ and $T'_i=T_i$.
Moreover, by Definition \ref{def:decomposition}, $\window'_0$ is an initial window, and it is straightforward to verify
that $\window_i'$ locally satisfies $\Prog$ and $\D$, for each $i \in \Z$.
Now, by Lemma \ref{lem:decomp}, 
$\window'_0, \window'_{-1}, \dots$ is an accepting run of $\A_{\window'_0}^\gets$ on $\wordL$, and  
$\window'_0, \window'_{1}, \dots$ is an accepting run of $\A_{\window'_0}^\to$ on $\wordR$.
Since $\Hmod \subsetneq  \Tmod$, there exists $i \in \mathbb{Z}$ such that 
$\Tmod, \varrho_i^+  \models M$ but $\Hmod, \varrho_i^+  \not \models M$ for some 
relational atom $M \in \mat{(\Prog,\D)}$, which implies $H_i \neq T_i$.
If $i \leq 0$, then $b'_j =1$ for all $j \leq i$, and so
$\C_{\window'_0}^\gets$ accepts $\wordL$, which contradicts Condition~2.
Otherwise, let $i$ be the least (positive) integer such that  
$\Tmod, \varrho_i^+  \models M$ but $\Hmod, \varrho_i^+  \not \models M$, for some 
relational atom $M \in \mat{(\Prog,\D)}$. Observe that this implies $H'_j=T_j$ for each $j <i$, and
$M@ \varrho_i^+  \in T_i \backslash H'_i$. 
By construction, $H'_i$ coincides with $H'_{i-1}$ over  $[\varrho_i^-,\varrho_i^+-1]$;
however, since $H'_{i-1}=T_{i-1}$, and $T_{i-1}$ coincides with $T_i$ over $[\varrho_i^-,\varrho_i^+-1]$,
we have that $H'_i$ coincides with $T_i$ over this interval. 
To sum up, recall that $\window_i$ is of the form $(\varrho_i, T_i, T_i, 0)$,
whereas, by construction, $\window_i'$ is of the form $(\varrho_i, H_i', T_i, 1)$;
furthermore, $H_i'$ coincides with $T_i$ over $[\varrho_i^{-},\varrho_i^{+}-1]$,
and there exists a relational atom $M$ with
$M@\varrho^+ \in T_i \backslash H_i'$.
This, by \Cref{def:automatonF}, implies that the automaton
$\F_{\window_0}^\to$ cannot have a transition to $\window_i$, and so,
$\window_0, \window_{1}, \dots$ is not an accepting run of $\F_{\window_0}^\to$ on  $\wordR$, which raises a contradiction.
\end{proof}

Next, we use \Cref{forward_automaton} to establish a tight \PS{} bound for reasoning in \MTLfp{}.

\begin{theorem}\label{pspace}
Checking whether a  \MTLfp{} program and a bounded dataset have a stable model is \PS{}-complete
with respect to data complexity. 
\end{theorem}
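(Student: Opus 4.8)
The plan is to establish membership in \PS{} via the automata characterisation of \Cref{forward_automaton}, and to obtain \PS{}-hardness by inheritance from the positive forward-propagating fragment. For the upper bound, recall that by \Cref{forward_automaton} the program $\Prog$ and the bounded dataset $\D$ have a stable model iff there is an initial window $\window_0 = (\varrho_0, T_0, T_0, 0)$ over $\varrho_0 = [t^{\min}_{\D} - (t_\Prog+1), t^{\min}_{\D} - 1]$ that mentions only predicates and constants of $\Prog$ and locally satisfies $\Prog$ and $\D$, together with words $\wordL$, $\wordR$ satisfying Conditions~1--3; here each window has length $t_\Prog$, a constant once $\Prog$ is fixed. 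I would split the search into a \emph{left part} --- choosing $\window_0$ and $\wordL$ so that the $\B_{\window_0}^\gets$ half of Condition~1 and Conditions~2 and~3 hold --- and a \emph{right part} --- checking the $\F_{\window_0}^\to$ half of Condition~1 --- and show that each is feasible in polynomial space, the two parts sharing the choice of $\window_0$.

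The crucial observation for the left part is that it is entirely independent of $\D$. Since $\D$ is bounded, $\varrho_0$ lies strictly to the left of every time point mentioned in $\D$, so no metric fact $M@t$ with $t$ in $\varrho_0$ or to its left is entailed by $\D$; hence ``locally satisfying $\D$'' is vacuous for $\window_0$, for the windows $\window_0'$ of Condition~2, and for every state of the left-moving automata $\B^\gets$ and $\C^\gets$, whose states only need to locally satisfy $\Prog$ (see \Cref{defBC}). Together with Condition~3, which forces $\wordL$ to use only symbols of $\Prog$, this means that the existence of $\window_0$ (over $\Prog$'s symbols, locally satisfying $\Prog$) and of a word $\wordL$ (over $\Prog$'s symbols) accepted by $\B_{\window_0}^\gets$ but by no $\C_{\window_0'}^\gets$ with $\window_0'$ an initial window as in Condition~2 is a property of $\Prog$ alone. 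For a fixed $\Prog$ this is a fixed finite computation: after passing to the $\sim$-quotients of \Cref{def:equiv_auto} --- whose states are polynomially representable, and of which only constantly many are $\Prog$-windows --- one complements the union of the constantly many automata $\widetilde{\C}_{[\window_0']_\sim}^\gets$ (an exponential but $O(1)$ blow-up for fixed $\Prog$), intersects with $\widetilde{\B}_{[\window_0]_\sim}^\gets$ and with the sub-alphabet of $\Prog$'s symbols, and tests non-emptiness. The result is a fixed set $\mathcal{S}$ of $\sim$-classes $[\window_0]_\sim$ admitting a suitable $\wordL$, and computing $\mathcal{S}$ costs $O(1)$ in data complexity.

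For the right part I would iterate over the constantly many classes in $\mathcal{S}$ and, for each, instantiate $\window_0$ at the required position $\varrho_0$ (an integer position of bit-length $O(|\D|)$) and test non-emptiness of $\F_{\window_0}^\to$. This automaton is essentially deterministic by \Cref{def:automatonF}, and this is exactly what lets us avoid a further complementation and stay in \PS{}. Its states are windows of constant length whose content ranges over the polynomially large set $\mat(\Prog,\D)$; their position matters only while the window still overlaps the bounded region $[t^{\min}_{\D}, t^{\max}_{\D}]$, so beyond that region --- which is reached after at most $2^{O(|\D|)}$ steps --- both the transition relation and the accepting condition are position-independent and one may collapse to $\sim$-classes. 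Hence every reachable state has a polynomial-size description, and non-emptiness of this generalised B\"uchi automaton can be decided by the standard on-the-fly search for a reachable accepting lasso \cite{baier2008principles}, guessing states one at a time, exactly as in the positive case \cite{DBLP:conf/ijcai/WalegaGKK19,walega2020datalogmtl}; the extra transition restriction of $\F^\to$ --- non-existence of a ``bad'' window of the same length witnessing non-minimality --- is, at each step, a \coNP{} test over a polynomially sized object and thus fits inside \PS{}. Accepting iff some class in $\mathcal{S}$ passes this test is then, by \Cref{forward_automaton}, a correct polynomial-space decision procedure.

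Finally, \PS{}-hardness is inherited: a positive \MTLfp{} program is a special case of an \MTLfp{} program, and by \Cref{thm::positive} a positive program and a dataset have a stable model iff they have a (least) model, so for positive programs the two problems coincide; since consistency is already \PS{}-hard in data complexity for positive forward-propagating programs over the integer timeline with bounded datasets \cite{walega2020datalogmtl,DBLP:conf/ijcai/WalegaGKK19}, the lower bound follows. I expect the main obstacle to lie in the upper bound, and within it in the two points above: recognising that the left part is fully $\D$-independent, so that the complementation required by Condition~2 is free in data complexity, and organising the right part around the essentially deterministic $\F^\to$ so that no further complementation is needed while every state of a run remains polynomially representable even though $\D$'s relevant region can be exponentially far from $\window_0$.
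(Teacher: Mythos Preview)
Your proposal is correct and takes essentially the same approach as the paper: both obtain the upper bound from \Cref{forward_automaton} by observing that the left part (the $\B^\gets$ half of Condition~1 together with Conditions~2 and~3) depends only on $\Prog$ once $\window_0$ and $\wordL$ are restricted to $\Prog$'s symbols, so the complementation for Condition~2 costs $O(1)$ in data complexity, while the right part is handled by an on-the-fly non-emptiness test of the essentially deterministic $\F_{\window_0}^\to$, stepping window by window through the (exponentially long) $\D$-relevant region before collapsing to $\sim$-classes beyond it; the lower bound is inherited from the positive forward-propagating case via \Cref{thm::positive}.
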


\begin{proof} For the lower bound, we observe that \citeN{walega2020datalogmtl} showed
\PS{}-hardness in data complexity of checking existence of  models for a class of programs which is
strictly smaller than the class of  positive \MTLfp{} programs. Their reduction can be modified in a
straightforward way so that the involved dataset is bounded. Then, \Cref{thm::positive} directly
implies that the same lower bound holds for all (i.e., not necessarily positive) \MTLfp{} programs,
as required.

For the upper bound, by \Cref{forward_automaton}, it suffices to show that checking existence of
a window $\window_0$ and words $\wordL$ and $\wordR$ satisfying the properties described in the statement of the lemma is feasible in \PS{}. First, we observe that  $\window_0$ is over ${\varrho_0= [t^{\min}_{\D}-
(t_\Prog +1), t^{\min}_{\D}-1]}$,  so its length does not depend on $\D$. Hence, $\window_0$ is
polynomially large (in the size of the representation of $\D$), and so it can be guessed in \PS{}; moreover, one can check in \PS{} whether $\window_0$ locally satisfies $\Prog$ and $\D$,
and whether it mentions only constants and predicates from $\Prog$. 
Next, we show how to verify existence of words $\wordL$ and $\wordR$ over $2^{\mat{(\Prog,\D)}}$ such that $\window_0$, $\wordL$, and $\wordR$ satisfy Conditions 1--3.
To verify existence of a
word $\wordL$ accepted by $\B_{\window_0}^\gets$ (first part of Condition 1) which is not accepted
by any  $\C_{\window'_0}^\gets$ (Condition~2) we can use the approach from the proof of
\Cref{decidability}. We observe that  $\window_0$ mentions only  constants and predicates from
$\Prog$, and so, the same holds for windows $\window_0'$ from Condition 2. Moreover, by Condition 3,
 $\wordL$ also mentions  only  constants and predicates from $\Prog$, and so the above check can be
performed independently of $\D$.

It remains to be shown that checking existence of a word $\wordR$ accepted by $\F_{\window_0}^\to$
(that is, the second part of Condition~1) is feasible in \PS{}. To this end, we  check existence of an accepting
run  $\window_0, \window_1, \dots$ of $\F_{\window_0}^\to$ in two steps. First, we  guess windows
$\window_1, \dots , \window_j$ one by one, where ${j= t^{\max}_{\D} - t^{\min}_{\D} + 2 t_{\Prog} +2}$, and
second, we check if $\F_{\window_j}^\to$ has an accepting run. We observe that each of the windows
$\window_1, \dots , \window_j$ is of polynomial size, and so guessing them one by one, as well as
checking that $\F_{\window_0}^\to$ has transitions between consecutive windows, is feasible in
\PS{}. 
To check non-emptiness of the language of $\F_{\window_j}^\to$, we construct for it an
automaton $\widetilde\F_{\window_j}^\to$ in a similar way as we constructed
$\widetilde{X}_{\window_j}^\to$ for $X_{\window_j}^\to$ in the proof of \Cref{decidability}. The 
difference, however, is that the set of states of  $\widetilde\F_{\window_j}^\to$ is the quotient
set of $\sim$ between only those states of $\F_{\window_j}^\to$ whose first elements are intervals located
entirely to the right of $t^{\max}_{\D} + t_\Prog$.
For any such state ${\window_i = (\varrho_i,T_i,T_i,0)}$, checking whether it locally satisfies $\D$ simply amounts to verifying that $M@t \in T_i$ for each $M \in   \mat{(\Prog,\D)}$ such that $M@\varrho_j^{-} \in T_j$, and each $t \in \varrho_i$.
Indeed, this follows from two observations: first,
for any atom $M\in \mat{(\Prog,\D)}$ and two arbitrary time points $t,t' > t^{\max}_{\D} + t_\Prog$, 
$\D \models M @t$ if and only if $\D \models M @t'$; second, $\varrho_j^- > t^{\max}_{\D} + t_\Prog$, by definition of $j$.
Then, using an argument analogous to the proof of Lemma \ref{short}, we can show that $\F_{\window_j}^\to$ and $\widetilde\F_{\window_j}^\to$ are
equivalent. 
Hence, it remains to check if the language of the latter automaton is  non-empty.
By construction, each state of the automaton $\widetilde\F_{\window_j}^\to$
can be represented in polynomial space, so the non-emptiness check is feasible in \PS{} using a standard on-the-fly approach.
\end{proof} 

The assumption that $\D$ is bounded has  been used to ensure existence of a time point
such that no fact of $\D$ holds to the left of it. Thus, our results can be extended to
show that reasoning is still \PS{} in data complexity for datasets where intervals are only bounded
on the left. Furthermore,  none of our results in this section depend on the
direction of time. Indeed, we can define the \emph{backward-propagating} fragment of $\MTLneg$
(analogously to $\MTLfp{}$) as the set of programs where operators $\diamondminus$, $\boxminus$,
and $\So$ are disallowed in rule bodies, and operator $\boxplus$ is disallowed in the head.
Then, we can obtain an analogous set of results for the backward-propagating fragment and
show that reasoning in such fragment is also \PS{} in data complexity.

\section{Related Work}\label{relatedwork}

Positive \MTL{} \cite{datalogMTL} has been  studied over both the  rational \cite{datalogMTL} and the integer 
\cite{walega2020datalogmtl} timelines.
In both cases, the main reasoning tasks are
\EXPS{}-complete for combined complexity \cite{datalogMTL}  and
\PS{}-complete for data complexity \cite{DBLP:conf/ijcai/WalegaGKK19}. Low complexity
fragments  \cite{walegatractable,walkega2021finitely} and  alternative semantics \cite{ryzhikov2019data} have also
been studied.
Practical reasoning algorithms for positive \MTL{} 
have been recently
proposed and implemented in the MeTeoR system \cite{wangAAAI2022}.

$\MTLneg$ is an extension of $\MTL$; therefore, it also extends other prominent
temporal  rule languages captured by $\MTL{}$ such as 
$\mbox{Datalog}_{1S}$   
\cite{chomicki1988temporal,chomicki1989relational} and Templog~\cite{abadi1989temporal}. 
In turn,  $\MTLfp$ generalises the forward-propagating fragment
of $\MTL$ introduced by \citeN{DBLP:conf/ijcai/WalegaGKK19}, and it is thus
related to other forward-propagating temporal logics 
proposed in the literature \cite{baldor2012monitoring,roncaKR,basin2018algorithms}.
Our stable model semantics  \cite{DBLP:conf/kr/WalegaCKG21}  extends
the semantics for stratified \MTLneg{} programs
\cite{tena2021stratified}.

Our approach  is closely related to a recently proposed family of non-monotonic
temporal logics which simultaneously support ASP and modal temporal operators.
Temporal equilibrium logic  (TEL)
\cite{DBLP:conf/eurocast/CabalarV07,aguado2013temporal,DBLP:journals/tplp/CabalarKSS18}
combines propositional ASP with operators from linear temporal logic
\cite{DBLP:conf/focs/Pnueli77}. 
Metric equilibrium logic (MEL) \cite{cabalar2020towards}
extends TEL with metric temporal operators 
that are roughly equivalent to our past operators
$\So_{[0,k]}$ and $\boxminus_{k}$, and their future counterparts.
Both logics introduce non-monotonic semantics for negation based
on stable models, which are defined---as in our work---analogously to 
the models of equilibrium logic.
However, TEL and MEL differ from our approach.
First, they allow formulas supporting all Boolean connectives; therefore,
they can represent disjunction between propositions,
as well as `existential' formulas using  diamond operators; in contrast, in \MTLneg~the 
use of logical connectives and temporal operators is restricted so that all formulas are shaped as rules
similar in spirit to those of Datalog (see Definition \ref{rule}).
Second, they are propositional logics, so 
they do not allow universally quantified variables. 
Finally, the semantics of TEL and MEL are defined on integer timelines with
a least time point, 
whereas we consider both  the full integer timeline as well as the dense
rational timeline. 
In terms of complexity, checking whether a formula 
has a stable model is known to be \PS-hard for both TEL and MEL, and feasible in \EXPS~for TEL.
 
Our approach is also related to the LARS language \cite{DBLP:journals/ai/BeckDE18} for stream reasoning.
LARS is also a rule-based language allowing for negation interpreted under 
stable model semantics.
The differences between  $\MTLneg$ and LARS are as follows.
First, conjuncts in the body of LARS rules
are formulas constructed using temporal operators and unrestricted combinations of
Boolean connectives; in contrast,
body conjuncts in  $\MTLneg$ are metric atoms, which do not mention Boolean operators.
Second,  LARS does not allow for metric operators (only LTL-style boxes and diamonds),
 but it allows for window
operators that have no counterpart in $\MTLneg$. Third,
 LARS rules are meant to be interpreted
at individual time points, so the notion of a stable model in LARS is always relative to a time point (e.g., 
a LARS interpretation can be a stable model of a program at $t$, but not 
at $t+1$); in contrast, a stable model of a $\MTLneg$ program satisfies each rule at 
\emph{every} time point in the timeline (see Definition \ref{def::HT}).
Finally,  LARS interpretations are defined over bounded intervals of the integer timeline,
whereas $\MTLneg$ interpretations are defined over the full integer or rational timeline.
Checking whether a LARS program and dataset have a stable model at a time point $t$ is \PS{}-complete.

The use of ASP for temporal reasoning has been intensively explored, especially in the context of stream reasoning. 
Streamlog \cite{zaniolo2012streamlog} introduces a variant of Datalog with negation 
where atoms are `time-stamped,' in the sense that the first term of an atom is a natural number representing the time point where the atom holds.
Rules in Streamlog must be forward-propagating in time; furthermore, programs must be locally stratified,
which ensures that they have a unique stable model. 
\citeN{DBLP:conf/ai/DoLL11} present an approach which combines a general-purpose ASP solver with
a monotonic stream reasoner to support ASP in stream reasoning; in this approach, both systems
are treated as black boxes. 
Another approach that supports ASP on a temporal setting is \emph{time-decaying reasoning} 
\cite{DBLP:journals/corr/abs-1301-1392}, which relies on 
programs similar to Datalog, but where each
fact and rule need to hold only over a fixed time interval; this behaviour is used to capture changing information
from a sliding window over a sequence of temporal data. 
More recently, negation-free $\MTLneg$ has been applied in the context of monotonic stream reasoning
\cite{WalegaAAAI}, and we see our current work as providing the foundations for extending
this approach with support for ASP.

\section{Conclusion and Future Work}\label{conclusions}

We  extended \MTL{} with negation-as-failure 
under stable model semantics and shown that 
reasoning 
in this language is undecidable over the 
rational timeline but \EXPS~in data complexity over the 
integers.
We  also studied the forward-propagating fragment and shown that, although reasoning 
remains undecidable over the rational timeline, it is becomes \mbox{\PS-complete} in data complexity over the integers  
(thus no harder than in the negation-free case).

We see many avenues for future work. The more immediate challenge
is to provide tight data complexity bounds for reasoning
in the full language over the integer timeline,
where  we currently have an  \EXPS{} upper bound and a \PS{} lower bound for data complexity.
We also plan to consider combined complexity and identify fragments of the language 
where reasoning becomes
decidable over the rational timeline. 
Finally, we are planning to develop practical algorithms and implement
them as an extension the MeTeoR reasoner, that is currently allows for reasoning with positive \MTL{} programs only \cite{wangAAAI2022}.

\section*{Acknowledgments}

This work was funded in whole or in part by
the EPSRC project OASIS (EP/S032347/1),
the EPSRC project UK FIRES (EP/S019111/1),
and the SIRIUS Centre for Scalable Data Access, and
Samsung Research UK.
For the purpose of Open Access, the authors have applied a CC BY public copyright licence to any Author Accepted Manuscript (AAM) version arising from this submission.

\bibliographystyle{acmtrans}
\bibliography{references}

%
%
%


\label{lastpage}
\end{document}